\newcommand\etc{etc\@ifnextchar.{}{.\@}\xspace}
\newcommand\ie{i.e.\@\xspace}  % these two may be broken
\newcommand\eg{e.g.\@\xspace}
\newcommand{\inlinegraphic}[2]{
  %% todo -- make this thing calculate the height 
  %% itself based on a global scaling factor
  \dimendef\grafheight=255\dimendef\grafvshift=254
  \grafheight=#1
  \grafvshift=-0.5\grafheight
  \advance\grafvshift by 0.5ex
  \raisebox{\grafvshift}{\includegraphics[height=\grafheight]{images/#2}\xspace}
}
\newcommand{\ninlinegraphic}[2][1.0]{
  \dimendef\grafheight=255\dimendef\grafvshift=254
  \setbox0 = \hbox{\scalebox{#1}{\includegraphics{images/#2}}}
  \grafheight=\the\ht0
  \grafvshift=-0.5\grafheight
  \advance\grafvshift by 0.5ex
  \raisebox{\grafvshift}{\includegraphics[height=\grafheight]{images/#2}\xspace}
}
\newcommand{\inline}[1]{
  \raisebox{0.5ex}{\;#1\;}
}
\newtheorem{theorem}{Theorem}[section]
\newtheorem{proposition}[theorem]{Proposition}
\newtheorem{lemma}[theorem]{Lemma}
\newtheorem{corollary}[theorem]{Corollary}
\theoremstyle{definition}\newtheorem{example}[theorem]{Example}
\theoremstyle{definition}
\theoremstyle{definition}\newtheorem{definition}[theorem]{Definition}
\theoremstyle{definition}
\theoremstyle{definition}\newtheorem{remark}[theorem]{Remark}
\theoremstyle{definition}
\newcommand{\denote}[1]{% --``semantic'' brakets
\llbracket #1 \rrbracket}
\newcommand{\sizeof}[1]{% \sizeof{x} == |x|
  \left|#1\right|}
\newcommand{\bra}[1]{%  dirac 'bra'
    \ensuremath{\left\langle #1 \right|}\xspace}
\newcommand{\ket}[1]{%  dirac 'ket'
    \ensuremath{\left|  #1 \right\rangle}\xspace}
\newcommand{\innp}[2]{%  dirac style inner product
    \ensuremath{\langle #1 \!\mid\!#2 \rangle}}
\newcommand{\outp}[2]{%  dirac style outer product
    \ensuremath{\left|\left. #1 \rangle\right.\!\!\langle\left. #2 \right|\right. }}
\newcommand{\CZ}{\ensuremath{\wedge Z}\xspace}
\newcommand{\CX}{\ensuremath{\wedge X}\xspace}
\newcommand{\id}[1]{\ensuremath{\mathrm{id}_{#1}}}
\newcommand{\id}[1]{\ensuremath{1_{#1}}}
\newcommand{\fdhilb}{% the category of finite dimensional hilbert space
\ensuremath{\mathbf{fdHilb}}\xspace}
\newcommand{\fdHilb}{\fdhilb}
\newcommand{\superop}{% the category of superoperators
\ensuremath{\mathbf{SuperOp}}\xspace}
\newcommand{\dsmc}{$\dag$-\textsc{smc}\xspace}
\newcommand{\dsmcs}{$\dag$-\textsc{smc}s\xspace}
\newcommand{\smcs}{\textsc{smc}s\xspace}
\newcommand{\zxcalculus}{\textsc{zx}-calculus\xspace}
\newcommand{\DD}{\ensuremath{\mathbb{D}}\xspace}
\newcommand{\DDV}{\ensuremath{\mathbb{D}(\mathcal{V})}\xspace}
\newcommand{\denoteV}[1]{\denote{#1}_\mathcal{V}}
\newcommand{\patP}{\ensuremath{\mathfrak{P}}\xspace}
\newcommand{\G}[1]{\ensuremath{\Gamma_{#1}}\xspace}
\newcommand{\Gp}{\G\patP}
\newcommand{\D}[1]{\ensuremath{D(#1)}\xspace}
\newcommand{\DG}{\D{\Gamma}}
\newcommand{\Dp}{\D{\patP}}
\newcommand{\DGp}{\D{\Gp}}
\newcommand{\DGpp}{\ensuremath{\DGp^*}\xspace}
\newcommand{\mbqc}{\textsc{mbqc}\xspace}
\newcommand{\owqc}{\textsc{1wqc}\xspace}
\tikzstyle{every picture}=[baseline=(current bounding box).east,scale=0.5,node distance=5mm]
\tikzstyle{none}=[inner sep=0mm]
\tikzstyle{every loop}=[]
\tikzstyle{(null)}=[]
\tikzstyle{plain}=[]
\title{A graphical approach to measurement-based quantum computing}
\author{Ross Duncan}
\date{\today}
\begin{document}
% For article.cls abstract goes here 
% other classes may have it in front-matter
\maketitle

%%% Local Variables: 
%%% mode: latex
%%% TeX-master: t
%%% End: 

\begin{abstract}
  Quantum computations are easily represented in the graphical
  notation known as the \zxcalculus, a.k.a.\! the red-green calculus.
  We demonstrate its use in reasoning about measurement-based quantum
  computing, where the graphical syntax directly captures the
  structure of the entangled states usesd to represent computations,
  and show that the notion of information flow within the entangled
  states gives rise to rewriting strategies for proving the correctness of
  quantum programs.
\end{abstract}

Quantum computation, at least for the finite dimensional systems
usually considered, lives in the setting of finite dimensional Hilbert
spaces.  Even ignoring the possibly enormous dimension of the spaces
involved, a Hilbert space is a very rich mathematical
environment which often hides the structure of the states and conceals
the behaviour of their
maps,  making it difficult to analyse quantum programs.  Can this
difficulty be circumvented? 

In this chapter, we will present an abstract formulation of quantum
theory, based on algebraic features present in the Hilbert space
theory, but making no reference to Hilbert spaces themselves.  The
reader will perhaps be unsurprised to learn that the tool of choice for this
reformulation of quantum mechanics is category theory, and in
particular the theory of symmetric monoidal categories (\smcs).

In a seminal paper \cite{AbrCoe:CatSemQuant:2004}, Abramsky and Coecke
introduced the notions of $\dag$-symmetric monoidal category (\dsmc)
and $\dag$-compact category, and, exploiting the fact that the category
of finite dimensional Hilbert spaces and linear maps 
(henceforth called \fdhilb) forms a $\dag$-compact category, gave a
high-level proof of correctness of the quantum teleportation
protocol \cite{BBCJW:1993:teleport}.  In so doing, they showed that
quantum protocols do not necessarily rely upon the full apparatus of
Hilbert spaces: a more abstract presentation of quantum mechanics can
suffice.  We will use such a high level presentation to analyse
measurement-based quantum programs.

As discussed earlier in this volume, $\dag$-compact categories admit a
graphical notation where the morphisms of the category are represented
by diagrams.  Sequential composition of morphisms is represented
by plugging together diagrams, and parallel composition (\ie the
tensor product) is represented by juxtaposition of diagrams.  The
crucial point, fully elaborated by Selinger \cite{Selinger:2009aa}, is
that the equations of the category are fully captured by homotopic
transformation of diagrams.  In other words, two morphisms are equal,
according to the axioms of \dsmcs, if and only if the corresponding
diagrams can be continuously deformed into each other.  Hence, by
transcribing a morphism into the graphical notation a large
amount of equational structure is incorporated directly into the
syntax of the  formal system.

The axioms of $\dag$-compact categories will not be sufficient,
however, to study the quantum systems of interest in this chapter.  We
will introduce more structure by choosing specific generators for the
category of diagrams, and imposing certain equations between diagrams
involving those generators.  These equations induce an equivalence
relation between diagrams based on substitution.  More concretely we
view each equation, say $L = R$, as a rewrite rule, and whenever we
find $L$ occurring as a subdiagram in some larger diagram $D$, we may
perform the rewrite $D[L] \to D[R]$, as shown in
Figure~\ref{fig:rewrite-subs-example}.  Since all diagrams are typed
this substitution is always possible.  We will treat diagrams and
rewriting informally here, but the interested reader can find a
detailed account in the paper of Dixon and Kissinger
\cite{Dixon:2011fk}.

\begin{figure}
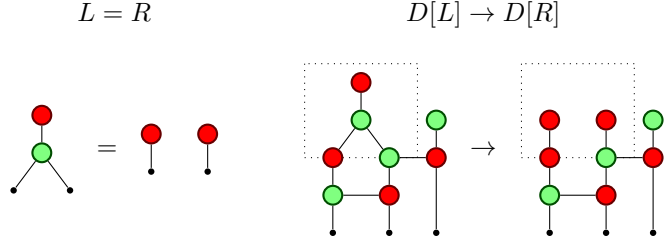

  \centering
  \[
  \begin{array}{ccc}
  L = R && D[L]\to D[R] \\ \\
 \inline{%
\beginpgfgraphicnamed{rw-example-i}
\InputIfFileExists{rw-example-i.tikz}{}{\input{./tikz/rw-example-i.tikz}}
\endpgfgraphicnamed} = \inline{%
\beginpgfgraphicnamed{rw-example-ii}
\InputIfFileExists{rw-example-ii.tikz}{}{\input{./tikz/rw-example-ii.tikz}}
\endpgfgraphicnamed}  
 && 
 \inline{%
\beginpgfgraphicnamed{rw-example-iiii}
\InputIfFileExists{rw-example-iiii.tikz}{}{\input{./tikz/rw-example-iiii.tikz}}
\endpgfgraphicnamed} \to \inline{%
\beginpgfgraphicnamed{rw-example-iv}
\InputIfFileExists{rw-example-iv.tikz}{}{\input{./tikz/rw-example-iv.tikz}}
\endpgfgraphicnamed}
  \end{array}
  \]
  \caption{Rewriting as substitution}
  \label{fig:rewrite-subs-example}
\end{figure}

The additional equations imposed on diagrams are those corresponding
to two different algebra structures found on the underlying Hilbert
space.  Thanks to the theorem of Coecke, Pavlovic and Vicary
\cite{PavlovicD:MSCS08}, there is a bijective correspondence between
orthonormal bases for the Hilbert space --- which from our point of
view represent quantum observables --- and special commutative
$\dag$-Frobenius algebras.  Therefore we encode each observable by an
algebra, and its associated equations give the first collection of
rewrite rules.  For the purpose of analysing measurement-based quantum
programs, we'll only need to consider two different algebras, namely
those corresponding to the $X$ and $Z$ spin observables.  These
observables have, in addition, a further property: they are
complementary, and in a particularly strong sense.  Intuitively,
complementarity means that perfect knowledge of one observable
implies complete ignorance of the other.  In previous work, Coecke and
the author \cite{Coecke:2009aa} showed that this kind of complementarity
can also be formalised in terms of algebras.  Strongly complementary
observables form a bialgebra, in fact a Hopf algebra.
We also impose the defining equations of these structures onto the
diagrammatic language to get another family of rewrite rules.

In summary, the graphical calculus consists of diagrams generated by
two Frobenius algebras for the $X$ and $Z$ observables; the equations
imposed by the monoidal structure may be effectively forgotten
because, thanks to the context of $\dag$-compact categories, 
the notion of equality of diagrams already contains all of them.
We then  impose rewrite rules corresponding to 
equations for the Frobenius structure, and then further equations
stating that these generators, when combined appropriately, form a
Hopf algebra.  This setup, combined with additional elements to be
introduced later, forms the \zxcalculus, also known as the red-green
calculus.

The \zxcalculus is known to be weaker than the full theory of Hilbert
spaces, but it is sound, meaning that any equation derived in it will
also hold when translated back to the Hilbert space formalism.
It replaces matrices with a structured and discrete notation
which exposes the relation between different parts of a quantum
system.  Furthermore, its graphical nature allows quantum circuits to
be represented very easily, and more importantly, the ad hoc notation
for graph states used in measurement-based quantum computation can be
derived from algebraic considerations alone.  Hence we can see the
beautiful interplay between the structure of an entangled state and
the algebraic objects which would represent this state.  The rewrite
rules then allow transformations between, \eg, the circuit model and the
measurement-based model, and expose how information flows within the
entangled state during the process of executing a measurement-based
program.  This last fact will at the heart of our analysis:  we will
demonstrate how the non-determinism induced by quantum measurements
may be tracked through the graph structure of an entangled state and
verify that a given computation is in fact deterministic.

\paragraph{Background and related work}
\label{sec:backgr-relat-work}

We assume that the reader is familiar with the  basics of monoidal
category theory; aside from other chapters in this volume, the
articles of Abramsky and Tzevelekos\cite{Abramsky:2009fx}, and Coecke
and Paquette \cite{BobEric2011cats} provide suitable introductions,
while \cite{MacLane:CatsWM:1971} is the standard text.  Compact closed
categories  were introduced by Kelly and Laplaza
\cite{KelLap:comcl:1980}, and the notion of dagger-compactness first
arose in \cite{AbrCoe:CatSemQuant:2004}.  Diagrammatic notation for
monoidal categories has a long history going back to work of Kelly
\cite{Kelly1972An-Abstract-App,Kelly1972Many-variable-f} and Penrose
\cite{Penrose1971Applications-of}; one can also view the proof-net
syntax of linear logic in this light
\cite{Girard:ProofNets:1996,Blute91naturaldeduction}.  The essential
reference on this subject is Selinger's survey \cite{Selinger:2009aa},
which pulls together a great deal of material scattered throughout the
literature.

The study of quantum mechanics through categorical eyes was initiated
by the paper of Abramsky and Coecke \cite{AbrCoe:CatSemQuant:2004},
and the explicit use of diagrams was emphasised by Coecke
\cite{Coecke:2009rt}.  The notion of classical structure, here
referred to as \emph{observable structure} was introduced by Coecke
and Pavlovic \cite{Coecke2006Quantum-Measure}, and further developed
by those authors in collaboration with Paquette
\cite{Coecke2006POVMs-and-Naima,Coecke:2009db}.  The key theorem, that
any classical structure in finite dimensional Hilbert space is
equivalent to a basis for that space, was shown by Coecke, Pavlovic and
Vicary \cite{PavlovicD:MSCS08}.  The that complementarity could also
be formalised in terms of interacting algebras was introduced by
Coecke and the author \cite{Coecke:2008nx,Coecke:2009aa}:  these
papers introduced several fundamental ideas which have since found
application in areas as diverse as quantum foundations
\cite{edwards-thesis-2009,Bob-Coecke:2008xr,billbobrob2011},
topological quantum computation \cite{Horsman:2011lr}, and
measurement-based quantum computing, which is our main concern here. 

One can view the quantum teleportation protocol
\cite{BBCJW:1993:teleport} as the first measurement-based quantum
computation; albeit the program computes the identity function.
Gottesman and Chuang showed later that this idea could be generalised
to a universal computation model \cite{Daniel-Gottesman:1999nx}.  The
model of interest here is not the teleportation model, but rather the
one-way model introduced by Raussendorf and Briegel
\cite{Raussendorf-2001,RauBri:OnewayQC:2003,raussendorf-2003-68}.  Our
work here is based on the work carried out by Danos, Kashefi and
Panangaden \cite{DanosV:meac} to provide this model with a formal
syntax, the measurement calculus.  Danos and Kashefi
\cite{DanKas:determinism:2005} introduced the concept of \emph{flow},
later renamed \emph{causal} flow to study the problem of determinism
in the one-way model.  Mhalla and Perdrix demonstrated an efficient
algorithm for finding optimal flows \cite{Mhalla:2008kx}, while
Browne, Kashefi, Mhalla, and Perdrix introduced the notion of
\emph{generalised flow} \cite{D.E.-Browne2007Generalized-Flo}.

This chapter mainly draws on the author's
joint work with Perdrix \cite{Duncan:2010aa}, although work relating
the graphical/categorical approach to \mbqc goes back rather further
\cite{Duncan:thesis:2006}.

\paragraph{Outline of the chapter}
\label{sec:outline-chapter}

The next section is a primer on the basics of quantum mechanics.
Section \ref{sec:observ-strong-compl} introduces the algebraic
framework of interacting observables (as presented in
\cite{Coecke:2009aa}) in full generality;  Section
\ref{sec:zxcalculus} presents the \zxcalculus, which is the specific
instance of that framework as a formal graphical calculus based on
qubits with the spin observables $X$ and $Z$.  Section
\ref{sec:graph-stat-meas} introduces the one-way model and the
measurement calculus, and shows how to represent them in the
\zxcalculus.  Finally, Section \ref{sec:determinism-flow} examines
determinism in the one-way model, and shows how to use the property of
flow to generate rewrite sequences to prove the correctness of the
measurement calculus programs.

\paragraph{Notation}
\label{sec:notation}

We will use the Dirac notation throughout: vectors are denoted by
\emph{kets} $\ket{\psi}$, and their duals by \emph{bras} $\bra{\phi}$.
The inner product is written $\innp{\phi}{\psi}$, and is taken to be
linear in the second component and anti-linear in the first.  We will
usually denote $\mathbb{C}^2$ by $Q$, since it is the state space of
\emph{qubits}.  The vectors comprising the standard basis for $Q$, sometime
called the or $Z$-basis, are written $\ket{0}$ and $\ket{1}$; we
denote the $X$-basis by
\[
\ket{+} = \frac{1}{\sqrt{2}}(\ket{0} + \ket{1})
\qquad\quad
\ket{-} = \frac{1}{\sqrt{2}}(\ket{0} - \ket{1})
\]
When writing tensor products of qubits we will usually suppress the
tensor symbol, and write \eg $\ket{00}$ in place of $\ket{0}\otimes
\ket{0}$.  The base field $\mathbb{C}$ will often be written simply as
$I$ since it is the unit of the monoidal category structure.

If $u$ and $v$ are vertices of an undirected graph, then $v \sim u$
means that they are adjacent.

When drawing diagrams, we use the pessimistic convention: diagrams
should be read from top to bottom.

\section{The rudiments of quantum computing}
\label{sec:rudim-quant-comp}

This section is necessarily rather brief; for a more complete
treatment we suggest the excellent books by Mermin
\cite{Mermin:2007fk}, and Kaye, Laflamme and Mosca \cite{Kaye:2007qy}.

Whereas a classical bit has only two values, its quantum analogue ---
the qubit --- is a unit vector in a two-dimensional Hilbert
space.  It is impossible to distinguish two states which differ only
by a global phase, so we quotient the state space by the relation
$\ket{\psi} \sim e^{i\alpha}\ket{\psi}$.
The state space of a compound quantum system, \ie one formed
by combining individual systems, is given by the tensor product of the
constituent state spaces, so a state consisting of $n$ qubits is a
vector in a Hilbert space of dimension $2^n$.  Such a state space
necessarily contains states that cannot be decomposed into a product
of $n$ individual qubits.  For example, the Bell state,
\[
\ket{\Phi_+} =  \frac{\ket{00}+\ket{11}}{\sqrt{2}}\;,
\] 
is a perfectly valid state of two qubits, but there are no single qubit
states such that $\ket{\phi}\otimes \ket{\psi} = \ket{\Phi_+}$.
This simple mathematical fact underlies the
the  phenomenon of \emph{entanglement}. These indecomposable states
reflect non-local correlations between the subsystems, and form the
main building block of the paradigm called \emph{measurement-based quantum
computation}.  

The evolution of an undisturbed quantum system is given by a unitary
operator:
\[
\ket{\psi(t)} = U_t \ket{\psi(t_0)}
\]
A quantum computation is typically described as a quantum circuit:
this is just a sequence of unitary operations acting on some number of
qubits.  Unitarity implies that quantum computations are reversible,
since the unitaries form a group.  The exception is quantum
measurement.

Quantum measurements have two properties which run contrary to classical
intuition.  Firstly their outcomes are \emph{probabilistic}:  in almost all
quantum states the outcome of a given measurement cannot be know with
certainty.  Secondly, they have \emph{side-effects}, so that state
after a measurement will usually be different to that before it.
Mathematically speaking we identify a quantum measurement with a
self-adjoint operator,
\[
M = \sum_i \lambda_i \outp{v_i}{v_i}\;.
\]
The possible observed values are the eigenvalues $\lambda_i$.  We are
only concerned with non-degenerate measurements here, so we assume
that all the $\lambda_i$ are distinct and non-zero.  Given a quantum
state $\ket{\psi}$, the probability of observing $\lambda_i$ is
given by the inner product 
\[
p(\lambda_i) = \sizeof{\innp{v_i}{\psi}}^2\;.
\]
Most importantly, the new state of the system after the measurement is
the corresponding eigenvector $\ket{v_i}$.  In other words, observing
$\lambda_i$ is effectively the same as acting on the state with the
projection operator $\outp{v_i}{v_i}$, or, if the measured
system is destroyed by the measurement---which will be the case for
the systems of interest here---simply $\bra{v_i}$, .  The actual values of the
measurements are not important here, so we will regard them simply as
labels for the outcomes.

If one part of an entangled quantum state is measured the effect of
that measurement can be observed in other parts of the state.  For
example, consider again the Bell state $\ket{\Phi_+}$. If its first
qubit is measured in the $\ket{0},\ket{1}$ basis then the two outcomes
are equally likely; suppose that $\ket{0}$ is observed.  The resulting
effect is to act on the joint state with the operator $\outp{0}{0}
\otimes \id{}$.  Ignoring normalisation, we have
\begin{multline*}
  (\outp{0}{0} \otimes \id{})\ket{\Phi_+} 
  = (\outp{0}{0} \otimes \id{})\ket{00} + (\outp{0}{0} \otimes
  \id{})\ket{11} 
  \\ 
  = \innp{0}{0}\ket{00} + \innp{0}{1}\ket{11} = \ket{00}\;.
\end{multline*}
Hence, now performing the same measurement on the second qubit will
produce outcome $\ket{0}$ with probability one.  Despite acting on
only one part of the system, we have produced a global change.
(Notice also that the new joint state is no longer entangled.)

This, in a nutshell, is the concept behind measurement-based quantum
computation: we begin with a large entangled state, and by performing
carefully chosen measurements upon it, the unmeasured parts are driven
toward the desired result.  As a first approximation, we could say
that the structure of entangled quantum states defines the desired
computation, while the measurements themselves function more to
``push'' information through this structure.  From this point of of
view, the measurements play a role similar to the evaluation maps in
functional programming, effectively ``applying'' their outcomes to the
function defined by the rest of the state.  (This is not entirely
accurate; as we shall see, the choice of measurements does play a role
in defining the computation.)  

It is frequently useful to generalise the notion of quantum state to
admit probabilistic mixtures of states.  In this setting, states
comprising a single state vector as
described above are called \emph{pure states}, while the others are
called \emph{mixed states}.  These more general states are represented
by \emph{density operators}: that is, trace one Hermitian matrices 
of the form
\[
\rho = \sum_i p_i \outp{\psi_i}{\psi_i}
\]
where $0 \leq p_i\leq 1$ and $\sum_i p_i = 1$.  Note that the
components $\ket{\psi_i}$ need not be orthogonal.  For pure states we
have $p_1 = 1$ and $p_i = 0$ for $i > 1$.  The decomposition
of a mixed state is not unique.  For example, the maximally mixed
qubit can arise by preparing either $\ket{0}$ or $\ket{1}$ with
equal probability, or equivalent by preparing either $\ket{+}$ and
$\ket{-}$:
\[
\frac{\outp{0}{0} + \outp{1}{1}}{2} = 
  \begin{pmatrix}
    1/2 & 0 \\ 0 & 1/2
  \end{pmatrix}
= \frac{\outp{+}{+} + \outp{-}{-}}{2}.
\]
The most general class of operations that are possible in the density
matrix formalism are \emph{completely positive maps} also called
\emph{superoperators}.  The action of such a map $\cal E$ on a state
$\rho$ is given by:
\[
\rho' = \mathcal{E}\rho\mathcal{E}^\dag.
\]
Since $\mathcal{E}$ is positive $\rho'$ is again a density matrix.
The most basic examples of superoperators are
unitary maps and quantum measurements.  For example, given a
measurement $M = \sum_i \lambda_i \outp{v_i}{v_i} = \sum_i \lambda_i P_i$,
the effect of performing the measurement on state $\rho$ is
\[
\rho' = \sum_i P_i  \rho P_i 
= \sum_{i,j} \mathrm{Tr}[\rho P_i] \outp{v_i}{v_i}
\]
where $\mathrm{Tr}[\rho P_i]$ gives the probability of observing
outcome $i$.  
While mixed states arise for a variety of reasons in quantum
computation, in this chapter the randomness introduced by measurement
will by the only source of uncertainty.

\section{Observables and strong complementarity}
\label{sec:observ-strong-compl}

\subsection{Observables and observable structures}
\label{sec:observ-observ-struct}

Given a quantum system whose state space is the Hilbert space $A$, we
will assume that any orthonormal basis $\{\ket{a_i}\}_i$ for $A$
defines an observable on that system.  Furthermore, these will be the only
observables of interest.

The no-cloning \cite{Wootters1982A-single-quantu} and no-deleting
\cite{Pati2000Impossibility-o} theorems state that it is impossible to
perfectly copy or erase an unknown quantum state.  However, it is
possible to perform both of these operations if the state is
guaranteed to be an outcome of a known observable; that is, if it is a
member of some given basis.  We can therefore view each quantum
observable as determining a classical data type, whose elements are possible
outcomes, and whose operations are copying and deleting.  For example,
the copying and deleting operations for the standard basis are given
by the linear maps

\begin{align*}
  \delta_Z : Q & \to Q \otimes Q &
  \epsilon_Z : Q & \to I
\\
  \delta_Z : \ket{i} & \mapsto \ket{ii} &
  \epsilon_Z : \ket{i}  & \mapsto 1
\end{align*}
Note that $\epsilon_Z$ is an unnormalised bra, namely $\sqrt{2}
\bra{+}$.  Graphically we will denote these operations by
\[
\delta_Z = \greendelta
\qquad\qquad
\epsilon_Z = \greencounit
\]
What axioms should such operations obey?  Informally we may say that
if we copy something, and then copy one of the copies, it should not 
matter which copy we copied; that if we copy something and immediately
erase one of the copies, the combined operation should have no effect;
and, if we copy something, the two copies may be exchanged without
making any difference.  The same thing stated formally is that
$(\delta_Z,\epsilon_Z)$ should form a cocommutative comonoid on $Q$.
Presented graphically we have the following:
\[
{\inline{%
\beginpgfgraphicnamed{coassoc-lhs}
\InputIfFileExists{coassoc-lhs.tikz}{}{\input{./tikz/coassoc-lhs.tikz}}
\endpgfgraphicnamed}} = 
{\inline{%
\beginpgfgraphicnamed{coassoc-rhs}
\InputIfFileExists{coassoc-rhs.tikz}{}{\input{./tikz/coassoc-rhs.tikz}}
\endpgfgraphicnamed}}
\qquad
\inline{%
\beginpgfgraphicnamed{counit-lhs}
\InputIfFileExists{counit-lhs.tikz}{}{\input{./tikz/counit-lhs.tikz}}
\endpgfgraphicnamed} =
\inline{%
\beginpgfgraphicnamed{short-id}
\begin{tikzpicture}
	\begin{pgfonlayer}{nodelayer}
		\node [style=boundary vertex] (0) at (0, 0.75) {};
		\node [style=boundary vertex] (1) at (0, -0.75) {};
	\end{pgfonlayer}
	\begin{pgfonlayer}{edgelayer}
		\draw [style=plain] (0) to (1);
	\end{pgfonlayer}
\end{tikzpicture}}
\endpgfgraphicnamed} =
\inline{%
\beginpgfgraphicnamed{counit-lhs-bis}
\InputIfFileExists{counit-lhs-bis.tikz}{}{\input{./tikz/counit-lhs-bis.tikz}}
\endpgfgraphicnamed}
\qquad
{\inline{%
\beginpgfgraphicnamed{cocomm-lhs}
\InputIfFileExists{cocomm-lhs.tikz}{}{\input{./tikz/cocomm-lhs.tikz}}
\endpgfgraphicnamed}} = 
{\inline{%
\beginpgfgraphicnamed{cocomm-rhs}
\InputIfFileExists{cocomm-rhs.tikz}{}{\input{./tikz/cocomm-rhs.tikz}}
\endpgfgraphicnamed}}
\]
Since we operate in a $\dag$-category we may also consider the adjoint
operations
\[
\delta_Z^\dag = \greenmu
\qquad\qquad
\epsilon_Z^\dag = \greencounit
\]
which automatically form a commutative monoid, and obey the same
pictorial equations as $(\delta_Z ,\epsilon_Z)$ but flipped upside
down:
\[
{\inline{%
\beginpgfgraphicnamed{assoc-lhs}
\InputIfFileExists{assoc-lhs.tikz}{}{\input{./tikz/assoc-lhs.tikz}}
\endpgfgraphicnamed}} = 
{\inline{%
\beginpgfgraphicnamed{assoc-rhs}
\InputIfFileExists{assoc-rhs.tikz}{}{\input{./tikz/assoc-rhs.tikz}}
\endpgfgraphicnamed}}
\qquad
\inline{%
\beginpgfgraphicnamed{unit-lhs}
\InputIfFileExists{unit-lhs.tikz}{}{\input{./tikz/unit-lhs.tikz}}
\endpgfgraphicnamed} =
\inline{%
\beginpgfgraphicnamed{short-id}
}
\endpgfgraphicnamed} =
\inline{%
\beginpgfgraphicnamed{unit-lhs-bis}
\InputIfFileExists{unit-lhs-bis.tikz}{}{\input{./tikz/unit-lhs-bis.tikz}}
\endpgfgraphicnamed}
\qquad
{\inline{%
\beginpgfgraphicnamed{comm-lhs}
\InputIfFileExists{comm-lhs.tikz}{}{\input{./tikz/comm-lhs.tikz}}
\endpgfgraphicnamed}} = 
{\inline{%
\beginpgfgraphicnamed{comm-rhs}
\InputIfFileExists{comm-rhs.tikz}{}{\input{./tikz/comm-rhs.tikz}}
\endpgfgraphicnamed}}
\]
Notice that $\epsilon_Z^\dag$ is an unnormalised ket, $\sqrt{2}\ket{+}$.

Taken together, the 4-tuple
$(\delta_Z,\epsilon_Z,\delta_Z^\dag,\epsilon_Z^\dag)$ forms a special
commutative $\dag$-Frobenius algebra; this amounts to saying that in
addition to  the above, the following equations also hold:
\[
\inline{%
\beginpgfgraphicnamed{frob-law-lhs}
\InputIfFileExists{frob-law-lhs.tikz}{}{\input{./tikz/frob-law-lhs.tikz}}
\endpgfgraphicnamed}
=
\inline{%
\beginpgfgraphicnamed{frob-law-rhs}
\InputIfFileExists{frob-law-rhs.tikz}{}{\input{./tikz/frob-law-rhs.tikz}}
\endpgfgraphicnamed}
=
\inline{%
\beginpgfgraphicnamed{frob-law-lhs-bis}
\InputIfFileExists{frob-law-lhs-bis.tikz}{}{\input{./tikz/frob-law-lhs-bis.tikz}}
\endpgfgraphicnamed}
\qquad\qquad
\inline{%
\beginpgfgraphicnamed{special-lhs}
\InputIfFileExists{special-lhs.tikz}{}{\input{./tikz/special-lhs.tikz}}
\endpgfgraphicnamed}
=
\inline{%
\beginpgfgraphicnamed{short-id}
}
\endpgfgraphicnamed}
\]
These equations, the Frobenius law and the special condition
respectively, will not be motivated here; see
\cite{Coecke2006POVMs-and-Naima,Coecke2006Quantum-Measure}.

The preceding discussion of algebras is motivated by the following theorem:

\begin{theorem}[\cite{PavlovicD:MSCS08}]\label{thm:frobs-bases}
  There is a bijective correspondence between orthonormal bases for a
  finite-dimensional Hilbert space $A$, and special commutative
  Frobenius algebras on $A$.
\end{theorem}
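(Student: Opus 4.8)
The plan is to prove both directions of the correspondence explicitly, and to identify the basis recovered from an algebra with its set of \emph{copyable points}.

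For the forward direction I would argue that it is routine. Given an orthonormal basis $\{\ket{a_i}\}_i$ of $A$, extend $\delta_Z : \ket{a_i}\mapsto\ket{a_ia_i}$ and $\epsilon_Z : \ket{a_i}\mapsto 1$ linearly and take $\delta_Z^\dagger,\epsilon_Z^\dagger$ to be the adjoints (so $\delta_Z^\dagger(\ket{a_ia_j}) = [i=j]\ket{a_i}$ and $\epsilon_Z^\dagger(1) = \sum_i\ket{a_i}$). Coassociativity, cocommutativity, the counit law, the Frobenius law and the special condition then all reduce to the Kronecker-delta bookkeeping of $\ket{a_i}\mapsto\ket{a_ia_i}$ and are checked directly on basis vectors; note also that the construction does not depend on an ordering of the basis.

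The substance is the backward direction. Let $(\delta,\epsilon,\delta^\dagger,\epsilon^\dagger)$ be a special commutative $\dagger$-Frobenius algebra on $A$; write $m := \delta^\dagger$, $u := \epsilon^\dagger(1)\in A$, and $a\cdot b := m(a\otimes b)$. Forgetting the $\dagger$ and the topology, $(A,\cdot,u)$ is an ordinary finite-dimensional commutative unital $\mathbb{C}$-algebra, since associativity, commutativity and unitality are exactly the monoid axioms for $m$ dual to those assumed for $\delta$. Let $L : A\to\mathrm{End}(A)$, $L_a(b) = a\cdot b$, be the left-regular representation; it is an algebra homomorphism, and injective because $L_a(u) = a$. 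The key step is to show its image is closed under the adjoint of $\mathrm{End}(A)$: writing $L_a = m\circ(a\otimes\mathrm{id}_A)$ one has $L_a^\dagger = (a^\dagger\otimes\mathrm{id}_A)\circ\delta$, and feeding the Frobenius law $(\mathrm{id}_A\otimes m)\circ(\delta\otimes\mathrm{id}_A) = \delta\circ m$ into this computation gives $L_a^\dagger(x\cdot y) = L_a^\dagger(x)\cdot y$; setting $x = u$ yields $L_a^\dagger = L_{\bar a}$ where $\bar a := L_a^\dagger(u)$, so $\bar{(\cdot)}$ is a conjugate-linear involutive algebra homomorphism on $A$ implementing the $\dagger$. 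Hence $L(A)\subseteq\mathrm{End}(A)$ is a commutative $\dagger$-closed subalgebra, i.e.\ consists of commuting normal operators; by the finite-dimensional spectral theorem they are simultaneously diagonalisable, so $L(A)\cong\mathbb{C}^n$ as a $\mathbb{C}$-algebra with $n = \dim A$. In particular $A$ is reduced and has primitive idempotents $p_1,\dots,p_n$ with $p_i\cdot p_j = [i=j]\,p_i$ and $\sum_i p_i = u$. To pin down the inner product, use that $\bar{(\cdot)}$ permutes $\{p_i\}$, say $\bar p_i = p_{\sigma(i)}$ with $\sigma^2 = \mathrm{id}$, together with the identity $\langle a\mid b\rangle = \epsilon(\bar a\cdot b)$ (the standard snake-yoga relation between the Hilbert inner product and the Frobenius form): this gives $\langle p_i\mid p_j\rangle = [\sigma(i)=j]\,\epsilon(p_j)$, and positive-definiteness forces $\sigma = \mathrm{id}$, so conjugation fixes each $p_i$ and $\langle p_i\mid p_j\rangle = [i=j]\,\lambda_i$ with $\lambda_i := \epsilon(p_i) > 0$. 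Computing $\delta = m^\dagger$ on this orthogonal family, $\langle p_i\otimes p_j\mid\delta p_k\rangle = \langle p_i\cdot p_j\mid p_k\rangle = [i=j=k]\,\lambda_k$, so $\delta p_k = \lambda_k^{-1}\,p_k\otimes p_k$; finally the special condition $m\circ\delta = \mathrm{id}_A$ applied to $p_k$ reads $\lambda_k^{-1}p_k = p_k$, hence $\lambda_k = 1$. Thus $\{p_1,\dots,p_n\}$ is an orthonormal basis, $\delta p_k = p_k\otimes p_k$, $\epsilon p_k = 1$, and $(\delta,\epsilon)$ is precisely the copying/deleting structure of that basis.

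It remains to check that the two constructions are mutually inverse, which I would do via copyable points: a vector $x$ with $\delta x = x\otimes x$ and $\epsilon x = 1$ is, by the formula just obtained for $\delta$ on the basis $\{p_k\}$, exactly one of $p_1,\dots,p_n$, so passing from the algebra to its copyable points and back returns $(\delta,\epsilon)$; conversely, applying the basis-to-algebra construction to $\{\ket{a_i}\}$ produces orthogonal idempotents summing to the unit, which by a dimension count are primitive and are precisely the copyable points, returning $\{\ket{a_i}\}$. I expect the main obstacle to be the reducedness step: the identity $L_a^\dagger = L_{\bar a}$ — equivalently, that the regular representation lands in a $\dagger$-closed subalgebra — is where the interaction of the $\dagger$-structure with the Frobenius law is genuinely used, and it is exactly what excludes nilpotents such as those in $\mathbb{C}[x]/(x^2)$. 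The second delicate point is the final normalisation $\lambda_k = 1$, where the \emph{special} condition (rather than bare Frobenius) is essential; without it one would recover only an orthogonal basis together with a non-special Frobenius algebra.
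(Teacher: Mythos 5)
The paper does not prove this statement: it is imported verbatim from Coecke--Pavlovi\'c--Vicary \cite{PavlovicD:MSCS08} (note that, as the surrounding text makes clear, ``special commutative Frobenius algebra'' here means the $\dag$-version, which your argument correctly assumes). So there is no in-paper proof to compare against; judged on its own, your proof is correct and is essentially a reconstruction of the argument in the cited source. The forward direction is indeed routine bookkeeping. In the backward direction every step checks out: the Frobenius law does give $L_a^\dag(x\cdot y)=L_a^\dag(x)\cdot y$ and hence $L_a^\dag=L_{\bar a}$, so the left regular representation lands in a commutative $\dag$-closed algebra of commuting normal operators, forcing $A\cong\mathbb{C}^n$ with primitive idempotents $p_i$; the identity $\innp{a}{b}=\epsilon(\bar a\cdot b)$ follows from $\epsilon(x)=\innp{\epsilon^\dag(1)}{x}$ and adjointness of $L_a$, and positive-definiteness then correctly rules out the involution permuting the $p_i$ nontrivially; the computation $\delta p_k=\epsilon(p_k)^{-1}p_k\otimes p_k$ and the use of speciality to force $\epsilon(p_k)=1$ are exactly where $\dag$-compatibility and the special condition earn their keep, as you note. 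Your identification of the recovered basis with the copyable (classical) points, and the check that the two constructions are mutually inverse, completes the bijection.
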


Since we have assumed that all observables are non-degenerate, Theorem
\ref{thm:frobs-bases} permits us to treat Frobenius algebras of the
above type as an abstract version of quantum observables.  This
definition  moreover makes no reference to the fact that the
underlying object is a Hilbert space, and hence it can be used in any
\dsmc.  For this reason, we will henceforward refer to special commutative
$\dag$-Frobenius algebras by the term \emph{observable
  structure}\footnote{The same object has also been called a \emph{classical
  structure} \cite{Coecke:2009db} and a \emph{basis structure} \cite{edwards-thesis-2009}}.

\begin{remark}\label{rem:observable-vs-measurement}
  Note that this representation of observables by algebras is not
  a representation of the \emph{measurement} of an observable.
  The additional formal apparatus required to  account for the
  non-deterministic aspect of the measurement will be introduced in
  Section \ref{sec:semantics}.
\end{remark}

\subsection{Unbiasedness and the phase group}
\label{sec:unbi-phase-group}

Given an orthonormal basis $\{\ket{a_i}\}_i$ for a $d$-dimensional
Hilbert space $A$, a vector $\ket{\psi}$ is unbiased for
$\{\ket{a_i}\}_i$ if for all $i$, we have $\sizeof{\innp{a_i}{\psi}} =
\frac{1}{\sqrt{d}}$.  For example, $\ket{+_\alpha} =
\frac{1}{\sqrt{2}}(\ket{0} + e^{i\alpha}\ket{1})$ is unbiased for the
standard basis on $Q$, for all values of $\alpha$;  indeed these are
the only unbiased states for the standard basis.  Incorporating this
concept into our diagrammatic language yields a surprising amount of
power.

Recall that in $\dag$-category  a morphism $f : A \to B$ is unitary if
$f^\dag\circ f = \id{A}$  and $f\circ f^\dag = \id{B}$;
diagrammatically this is written,
\[
\inline{%
\beginpgfgraphicnamed{f-fdag}
\InputIfFileExists{f-fdag.tikz}{}{\input{./tikz/f-fdag.tikz}}
\endpgfgraphicnamed} 
=
\inline{%
\beginpgfgraphicnamed{long-id}
\begin{tikzpicture}
	\begin{pgfonlayer}{nodelayer}
		\node [style=boundary vertex] (0) at (0, 1.75) {};
		\node [style=boundary vertex] (1) at (0, -1.75) {};
	\end{pgfonlayer}
	\begin{pgfonlayer}{edgelayer}
		\draw [style=plain] (0) to (1);
	\end{pgfonlayer}
\end{tikzpicture}}
\endpgfgraphicnamed}
=
\inline{%
\beginpgfgraphicnamed{fdag-f}
\InputIfFileExists{fdag-f.tikz}{}{\input{./tikz/fdag-f.tikz}}
\endpgfgraphicnamed}
\]
where the picture for $f^\dag$ is obtained by flipping the picture for
$f$ upside down.
This notion of unitarity agrees with usual one in \fdhilb. Now we can
make:

\begin{definition}\label{def:Lambda}
  Let $(\delta,\epsilon)$ be an observable structure on $A$, and let
  $\alpha:I\to A$ be a point of $A$.  Define a map $\Lambda(\alpha) :
  A\to A$ by
  \[
  \Lambda(\alpha) \vcentcolon = \delta^\dag \circ (\alpha \otimes
  \id{A}) = \inline{%
\beginpgfgraphicnamed{lambda-alpha}
\InputIfFileExists{lambda-alpha.tikz}{}{\input{./tikz/lambda-alpha.tikz}}
\endpgfgraphicnamed}
  \]
\end{definition}

\begin{definition}\label{def:unbiased}
  Let 
  $\alpha:I\to A$ be a point of $A$, and $\Lambda(\cdot)$ as in
  Definition~\ref{def:Lambda}.  We say $\alpha$ is unbiased for
  $(\delta,\epsilon)$ if $\Lambda(\alpha)$ is  unitary.
  \[
  \inline{%
\beginpgfgraphicnamed{lambda-alpha}
\InputIfFileExists{lambda-alpha.tikz}{}{\input{./tikz/lambda-alpha.tikz}}
\endpgfgraphicnamed} \text{ unbiased } 
  \quad \Longleftrightarrow \quad 
  \inline{%
\beginpgfgraphicnamed{lambda-alpha-lambda-alpha-dag}
\InputIfFileExists{lambda-alpha-lambda-alpha-dag.tikz}{}{\input{./tikz/lambda-alpha-lambda-alpha-dag.tikz}}
\endpgfgraphicnamed}  = \inline{%
\beginpgfgraphicnamed{long-id}
}
\endpgfgraphicnamed}
  \]
\end{definition}

When $\alpha$ is unbiased, the map $\Lambda(\alpha)$
is called a \emph{phase map} for $(\delta,\epsilon)$. 
According to the unit law for the monoid structure,
$\Lambda(\epsilon^\dag)$ yields the identity map, which is unitary.
Therefore every observable structure has at least one unbiased point, namely
$\epsilon^\dag$. Further since they are unitary, the phase maps form a
group, indeed an abelian group as the following calculation shows:
\[
\inline{%
\beginpgfgraphicnamed{phase-group-abelian-i}
\InputIfFileExists{phase-group-abelian-i.tikz}{}{\input{./tikz/phase-group-abelian-i.tikz}}
\endpgfgraphicnamed} =
\inline{%
\beginpgfgraphicnamed{phase-group-abelian-ii}
\InputIfFileExists{phase-group-abelian-ii.tikz}{}{\input{./tikz/phase-group-abelian-ii.tikz}}
\endpgfgraphicnamed} =
\inline{%
\beginpgfgraphicnamed{phase-group-abelian-iii}
\InputIfFileExists{phase-group-abelian-iii.tikz}{}{\input{./tikz/phase-group-abelian-iii.tikz}}
\endpgfgraphicnamed} =
\inline{%
\beginpgfgraphicnamed{phase-group-abelian-iv}
\InputIfFileExists{phase-group-abelian-iv.tikz}{}{\input{./tikz/phase-group-abelian-iv.tikz}}
\endpgfgraphicnamed}
\]
Since all the phase maps commute, we make the following notational
convention:
\[
\greenphase{\alpha} \coloneqq \inline{%
\beginpgfgraphicnamed{lambda-alpha}
\InputIfFileExists{lambda-alpha.tikz}{}{\input{./tikz/lambda-alpha.tikz}}
\endpgfgraphicnamed}\!,
\]
for which we have the equations
\[
\left(\greenphase{\alpha}\right)^\dag = \greenphase{-\alpha}
\qquad\text{and}\qquad
\twogreenphases{\alpha}{\beta} = \greenphase{\alpha+\beta}\!.
\]
The alert reader will have noted that the unbiased points themselves
form an abelian group, isomorphic to the phase group, under the
multiplication $\delta^\dag$;  one can equivalently define the phase group
via this route.

\begin{remark}\label{rem:phasegroup-or-monoid}
  While we will be exclusively interested in the case where
  $\alpha:I\to A$ is an unbiased point for some observable, most of
  the above still holds when $\alpha$ is an arbitrary point of $A$.
  In that case we get a commutative monoid rather than a group.  Note
  especially that Theorem~\ref{thm:gen-spider}, below, still applies.
\end{remark}

Returning to the example of $(\delta_Z,\epsilon_Z)$ on the qubit, the
vectors $\ket{+_\alpha}$, when multiplied by $\sqrt{2}$, yield the
phase maps 
\[
\Lambda_Z(\sqrt{2}\ket{+_\alpha}) = Z_\alpha  =
\begin{pmatrix*}
  1 & 0 \\ 0 & e^{i\alpha}
\end{pmatrix*}\;,
\]
comprising rotations around the $Z$ axis of the Bloch sphere.

We can now state the key theorem for the diagrammatic treatment of
observable structures and their phase groups:

\begin{theorem}[\cite{Coecke:2009aa}]\label{thm:gen-spider}
  Let $D$ be a connected diagram generated by an observable structure
  $(\delta,\epsilon)$ and its phase group;  then $D$ is determined
  completely by its number of inputs, its number of outputs, and the
  sum $\sum_i \alpha_i$ of phase group elements occurring in it.
\end{theorem}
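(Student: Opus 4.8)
The plan is to reduce the statement to a \emph{normal form} result: every connected diagram $D$ built from $(\delta,\epsilon,\delta^\dag,\epsilon^\dag)$ and phase maps, with $m$ inputs and $n$ outputs, is equal (in the ambient \dsmc) to the single ``spider'' $S^{\theta}_{m,n}$ consisting of one node with $m$ input legs, $n$ output legs, and phase $\theta = \sum_i \alpha_i$. Since $S^{\theta}_{m,n}$ visibly depends only on $m$, $n$ and $\theta$, this is exactly the claim. The first observation that makes this tractable is that \emph{every generator is already a spider}: $\delta$ is the $1$-to-$2$ spider, $\delta^\dag$ the $2$-to-$1$ spider, $\epsilon$ and $\epsilon^\dag$ the $1$-to-$0$ and $0$-to-$1$ spiders, all with phase $0$, while $\Lambda(\alpha)=\greenphase{\alpha}$ is the $1$-to-$1$ spider with phase $\alpha$. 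Hence, up to the equations of the symmetric monoidal structure --- which by Selinger's theorem are precisely the homotopies of diagrams and so cost us nothing --- a diagram $D$ \emph{is} a finite graph: its nodes are phase-carrying spiders, its edges are wires, and in addition there are $m$ dangling inputs and $n$ dangling outputs. Connectedness of $D$ is connectedness of this graph.

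The engine of the proof is a \emph{spider-fusion lemma}: if two nodes of such a graph are joined by one or more wires, they may be replaced by a single node whose legs are all the remaining (non-connecting) legs of the two, and whose phase is the sum of their two phases. I would establish this in three stages. (i) \emph{Single wire, no phases.} This is the classical fusion of Frobenius dots and follows from associativity and coassociativity, the (co)unit laws, and (co)commutativity; concretely one slides the comultiplications of one node past those of the other using the Frobenius law until the two cores meet, then uses coassociativity and commutativity to collect all the legs on one node. (ii) \emph{Several parallel wires.} Fusing along one of them turns each of the other connecting wires into a self-loop on the merged node; the special condition $\delta^\dag\circ\delta=\id{}$, together with (co)commutativity to bring the loop into canonical position, removes each such loop. (iii) \emph{Phases.} From the definition $\Lambda(\alpha)=\delta^\dag\circ(\alpha\otimes\id{})$ and the Frobenius law one gets that a phase sitting on any leg of a spider can be pushed onto the core of that spider, and once there phases amalgamate via $\twogreenphases{\alpha}{\beta}=\greenphase{\alpha+\beta}$. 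Applying this before and after the fusion of stages (i)--(ii) shows the merged spider carries exactly $\alpha+\beta$.

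With fusion in hand the theorem follows by induction on the number $N$ of nodes of $D$. If $N=1$ then $D$ is a single generator, already of the form $S^{\theta}_{m,n}$ (with $\theta=0$ for the Frobenius generators and $\theta=\alpha$ for $\Lambda(\alpha)$), so there is nothing to do. If $N\geq 2$, connectedness provides an edge between two distinct nodes; fusing along it produces a diagram equal to $D$ with $N-1$ nodes, the same input and output counts, and the same total phase (fusion only regroups the phases, it neither creates nor destroys any). By the induction hypothesis this diagram equals $S^{\theta}_{m,n}$ with $\theta=\sum_i\alpha_i$, and since every step was an equality of morphisms, so does $D$. For the degenerate wholly-closed case $m=n=0$ the statement should be read with the normalisation conventions already noted for $\epsilon_Z$; once $m+n\geq 1$ the argument above yields precisely the single-spider normal form.

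The step I expect to be the real work is the fusion lemma, and within it the handling of multiple parallel wires: one must apply the special condition the correct number of times and use (co)commutativity to route the resulting loops into position, and then check that the phase genuinely depends only on the sum and not on which leg it began on or which wire one chooses to fuse along. Everything else --- the reduction of a diagram to a labelled graph, and the ``pick any edge and fuse'' induction, which never disconnects the graph and always drops the node count by one --- is routine bookkeeping once the fusion lemma is available.
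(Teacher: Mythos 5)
Your argument is essentially the standard proof of the generalised spider theorem: the paper itself states this result without proof, citing \cite{Coecke:2009aa}, and the proof there proceeds exactly as you propose --- view the diagram as a graph of phase-labelled Frobenius nodes, fuse adjacent nodes via the Frobenius and (co)unit laws, eliminate the extra parallel wires as self-loops using the special (anti-loop) condition, collect phases on the core so they add, and induct on the number of nodes. The only soft spot is cosmetic: your base case should also allow a single node carrying self-loops (and the nodeless circle in the closed case), both of which are disposed of by the same specialness argument you already invoke.
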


The above result is effectively a normal form theorem for observable
structures, however we will use it instead to justify a new notational
convention, and simply collapse any connected diagram down to a single
vertex, which we refer to as a \emph{spider}:
\[
\inline{%
\beginpgfgraphicnamed{complicated-spider}
\InputIfFileExists{complicated-spider.tikz}{}{\input{./tikz/complicated-spider.tikz}}
\endpgfgraphicnamed} = \inline{%
\beginpgfgraphicnamed{spider-sum}
\InputIfFileExists{spider-sum.tikz}{}{\input{./tikz/spider-sum.tikz}}
\endpgfgraphicnamed}
\]
The label will be omitted when $\alpha = 0$.  We can therefore adopt
spiders as the generators of the diagrammatic language, governed by a
single equational scheme, the spider rule:
\[
\inline{%
\beginpgfgraphicnamed{spider-rule-lhs}
\InputIfFileExists{spider-rule-lhs.tikz}{}{\input{./tikz/spider-rule-lhs.tikz}}
\endpgfgraphicnamed} = \inline{%
\beginpgfgraphicnamed{spider-rule-rhs}
\InputIfFileExists{spider-rule-rhs.tikz}{}{\input{./tikz/spider-rule-rhs.tikz}}
\endpgfgraphicnamed}
\]

\begin{example}\label{ex:spider-identity}
  What is the spider with one input, one output, and $\alpha = 0$?
  The answer is provided by the unit law of the observable
  structure: it must be the identity, as shown below.
\[
\greenphase{0} = \inline{%
\beginpgfgraphicnamed{unit-lhs}
\InputIfFileExists{unit-lhs.tikz}{}{\input{./tikz/unit-lhs.tikz}}
\endpgfgraphicnamed} = \inline{%
\beginpgfgraphicnamed{short-id}
}
\endpgfgraphicnamed}
\]
\end{example}

Once we have made the above convention with respect to
the identity,  all the earlier equations are included in the spider rule,
hence this formulation is equivalent to the definition  in terms of
$\delta$, $\epsilon$ and $\Lambda(\alpha)$.

\begin{example}\label{ex:spider-compact}
  For any given observable structure $(\delta,\epsilon)$ we can
  produce a bipartite state $d:I \to A \otimes A$ by $d = \delta \circ
  \epsilon$:
  \[
  \etaproofi = \etaproofii
  \]
  If we partially compose this state with its adjoint we obtain, via
  the spider rule, the identity:
  \[
  \inline{%
\beginpgfgraphicnamed{compact-proof-i}
\InputIfFileExists{compact-proof-i.tikz}{}{\input{./tikz/compact-proof-i.tikz}}
\endpgfgraphicnamed}
  =
  \inline{%
\beginpgfgraphicnamed{compact-proof-ii}
\begin{tikzpicture}
	\begin{pgfonlayer}{nodelayer}
		\node [style=green vertex] (0) at (0, 0) {};
		\node [style=boundary vertex] (1) at (-0.5, 1.5) {};
		\node [style=boundary vertex] (2) at (0.5, -1.5) {};
	\end{pgfonlayer}
	\begin{pgfonlayer}{edgelayer}
		\draw [style=plain] (1) to (0);
		\draw [style=plain] (0) to (2);
	\end{pgfonlayer}
\end{tikzpicture}}
\endpgfgraphicnamed}
  =
  \inline{%
\beginpgfgraphicnamed{long-id}
}
\endpgfgraphicnamed}
  \]
  Hence, the object $A$ bears a self-dual compact structure.  It is
  straight forward to construct observable structures for $A \otimes A$
  given one on $A$, so the monoidal category generated by $A$ is compact
  closed.
\end{example}

\subsection{Strong complementarity}
\label{sec:strong-compl}

In quantum theory, two observables are said to be \emph{complementary}
if measuring one of them reveals no information about the other, for
example the $X$ and $Z$ spins.
Notice that both elements of $X$ basis, $\ket{+}$ and $\ket{-}$, are
unbiased with respect to the $Z$ basis, and vice versa; these bases
are said to be \emph{mutually unbiased}.  Mutually unbiased bases
correspond to complementary observables: given an
eigenstate of $Z$, the inner product with either eigenstate of $X$ has
the same absolute value, and hence both outcomes are equiprobable when
an $X$ measurement is performed.  In this section we will present,
though not justify, a characterisation of complementarity in terms of
observable structures rather than bases.  In fact, we will present the
axioms for observable structures which are \emph{strongly
  complementary}, a property enjoyed by well-behaved pairs of
observables.  While the observables we are most interested in---the $X$
and $Z$ spins---are strongly complementary, the material of this
section is completely general;  the special features
of the $X$ and $Z$ observables are treated in the next section.

Since we are now dealing with two observables, we will have two
observable structures, $(\delta_g,\epsilon_g)$ and
$(\delta_r,\epsilon_r)$, which are represented by green and red
coloured spiders.  (For those reading without the benefit of colour,
green will appear as light grey, red as dark grey.)
\[
\begin{array}{ccccccc}
  \delta_g & \qquad & \epsilon_g & \qquad & \delta_r & \qquad &
  \epsilon_r
  \\
  \\
  \greendelta && \greencounit && \reddelta && \redcounit
\end{array}
\]
The map $\delta$
was originally introduced as copying operation, however it was
axiomatised without any reference to the objects it copies.  We now
correct this.

\begin{definition}\label{def:classical-point}
  A point $k:I\to A$ is called \emph{classical} for an observable
  structure $(\delta,\epsilon)$ if it satisfies $\delta\circ k = k
  \otimes k$.
  \[
  \inline{%
\beginpgfgraphicnamed{classical-point-lhs}
\InputIfFileExists{classical-point-lhs.tikz}{}{\input{./tikz/classical-point-lhs.tikz}}
\endpgfgraphicnamed} = 
  \inline{%
\beginpgfgraphicnamed{classical-point-rhs}
\InputIfFileExists{classical-point-rhs.tikz}{}{\input{./tikz/classical-point-rhs.tikz}}
\endpgfgraphicnamed} 
  \]
\end{definition}

\noindent
In the language of coalgebras, classical points are called
\emph{set-like elements}.

\begin{lemma}\label{lem:classical-eigenpoints}
  Let $k$ denote a classical point for $(\delta_g,\epsilon_g)$, and
  let $\alpha$ be any  unbiased point for $(\delta_g,\epsilon_g)$; $k$
  is an eigenpoint of the corresponding phase map $\Lambda_g(\alpha)$.
  \[
  \inline{%
\beginpgfgraphicnamed{eigenpt-i}
\InputIfFileExists{eigenpt-i.tikz}{}{\input{./tikz/eigenpt-i.tikz}}
\endpgfgraphicnamed} =
  \inline{%
\beginpgfgraphicnamed{eigenpt-ii}
\InputIfFileExists{eigenpt-ii.tikz}{}{\input{./tikz/eigenpt-ii.tikz}}
\endpgfgraphicnamed} =
  \inline{%
\beginpgfgraphicnamed{eigenpt-iii}
\InputIfFileExists{eigenpt-iii.tikz}{}{\input{./tikz/eigenpt-iii.tikz}}
\endpgfgraphicnamed}
  \]
\end{lemma}

Note the appearance of a scalar element here -- the eigenvalue of $k$.

\begin{remark}\label{rem:scalars}
  Any diagram with no inputs or outputs represents an arrow of type
  $I\to I$.  When interpreted in \fdHilb these are simply complex
  numbers.  Since quantum mechanics does not distinguish states that
  differ by a scalar factor we will ignore these whenever they
  appear.  Further, \emph{many of the equations presented below hold only up
  to scalar normalising factor}.  We omit these in order to simplify
  the presentation;  if needed they can easily be reconstructed.
\end{remark}

\begin{definition}\label{def:strong-comp-obs}
  Two observable structures $(\delta_g,\epsilon_g)$ and
  $(\delta_r,\epsilon_r)$ on $A$ are called \emph{strongly complementary} if
  \begin{enumerate}
  \item For every point $k:I \to A$, if $k$ is classical for
    $(\delta_g,\epsilon_g)$ then it is unbiased for
    $(\delta_r,\epsilon_r)$, and vice versa.\label{item:strong-comp-i}
  \item $\epsilon_r^\dag$ is classical for $(\delta_g,\epsilon_g)$ and
    $\epsilon_g^\dag$ is classical for $(\delta_r,\epsilon_r)$, \ie:
    \[
    \inline{%
\beginpgfgraphicnamed{green-copies-red-lhs}
\InputIfFileExists{green-copies-red-lhs.tikz}{}{\input{./tikz/green-copies-red-lhs.tikz}}
\endpgfgraphicnamed} = \redunit\,\redunit
      \qquad\qquad
    \inline{%
\beginpgfgraphicnamed{red-copies-green-lhs}
\InputIfFileExists{red-copies-green-lhs.tikz}{}{\input{./tikz/red-copies-green-lhs.tikz}}
\endpgfgraphicnamed} = \greenunit\,\greenunit
    \]\label{item:strong-comp-ii}
  \item The equation $(\delta_r^\dag \otimes \delta_r^\dag) \circ
    (\id{A} \otimes \sigma \otimes \id{A}) \circ (\delta_g \otimes
    \delta_g) = \delta_g \circ \delta_r^\dag$ holds, \ie:
    \begin{equation}\label{eq:bialg-law}
      \inline{%
\beginpgfgraphicnamed{bialgebra-law-lhs}
\InputIfFileExists{bialgebra-law-lhs.tikz}{}{\input{./tikz/bialgebra-law-lhs.tikz}}
\endpgfgraphicnamed}
      =
      \inline{%
\beginpgfgraphicnamed{bialgebra-law-rhs}
\InputIfFileExists{bialgebra-law-rhs.tikz}{}{\input{./tikz/bialgebra-law-rhs.tikz}}
\endpgfgraphicnamed}
    \end{equation}
    where $\sigma$ denotes the symmetry of the monoidal
    structure. \label{item:strong-comp-iii}
  \end{enumerate}
\end{definition}
Since we are operating in \dsmc, conditions~\ref{item:strong-comp-ii}
and~\ref{item:strong-comp-iii} also imply flipped versions of the same
equations.  Given this, these two conditions could be replaced by a
unified condition:
\begin{itemize}
\item The 4-tuple $(\delta_g,\epsilon_g, \delta_r^\dag,
  \epsilon_r^\dag)$ forms a \emph{bialgebra} on $A$.
\end{itemize}
In fact, as well being a bialgebra, a pair of strongly complementary
observable structures is in addition a Hopf algebra.  Recall $d =
\delta \circ \epsilon^\dag$, and define $s :A\to 
A$ by $s = (d_r^\dag \otimes \id{A}) \circ (\id{A} \otimes d_g)$.  We
introduce a new element of the graphical notation for $s$:
\[
 s = \inline{%
\beginpgfgraphicnamed{antipode-def}
\begin{tikzpicture}
	\begin{pgfonlayer}{nodelayer}
		\node [style=boundary vertex] (0) at (0, 1) {};
		\node [style=boundary vertex] (1) at (0, -1) {};
		\node [style=antipode] (2) at (0, 0) {};
	\end{pgfonlayer}
	\begin{pgfonlayer}{edgelayer}
		\draw [style=plain] (0) to (2);
		\draw [style=plain] (2) to (1);
	\end{pgfonlayer}
\end{tikzpicture}}
\endpgfgraphicnamed}
\coloneqq \inline{%
\beginpgfgraphicnamed{antipode-i}
\InputIfFileExists{antipode-i.tikz}{}{\input{./tikz/antipode-i.tikz}}
\endpgfgraphicnamed}
= \inline{%
\beginpgfgraphicnamed{antipode-ii}
\InputIfFileExists{antipode-ii.tikz}{}{\input{./tikz/antipode-ii.tikz}}
\endpgfgraphicnamed}
\]
Now we have:

\begin{lemma}\label{lem:hopf-law}
The 5-tuple $(\delta_g,\epsilon_g, \delta_r^\dag,
\epsilon_r^\dag,s)$ forms a Hopf algebra on $A$, \ie 
\begin{equation}\label{eq:hopf-law}
\begin{split}
  \delta_r^\dag \circ (s \otimes \id{A}) \circ \delta_g & =
  \epsilon_r^\dag \circ \epsilon_g.\\
  \inline{%
\beginpgfgraphicnamed{hopf-law-lhs}
\InputIfFileExists{hopf-law-lhs.tikz}{}{\input{./tikz/hopf-law-lhs.tikz}}
\endpgfgraphicnamed} & =   \inline{%
\beginpgfgraphicnamed{hopf-law-rhs}
\InputIfFileExists{hopf-law-rhs.tikz}{}{\input{./tikz/hopf-law-rhs.tikz}}
\endpgfgraphicnamed}
\end{split}
\end{equation}
\end{lemma}

\begin{remark}\label{rem:hopf-is-comp}
  We have stated Lemma \ref{lem:hopf-law} as a consequence of the
  bialgebra structure;  in fact, under a mild side condition, equation
  (\ref{eq:hopf-law}) can be  shown to be equivalent to condition
  \ref{item:strong-comp-i} of Definition \ref{def:strong-comp-obs}.
  See \cite{Coecke:2009aa} for full details.
\end{remark}

The classical points have some useful additional properties, which we
will now state;  the reader can find proofs in \cite{Coecke:2009aa}.

Thanks to Definition~\ref{def:strong-comp-obs}, if $k$ is classical
for $(\delta_g,\epsilon_g)$ then $\Lambda_r(k)$ is an element of the
phase group for the strongly complementary observable
$(\delta_r,\epsilon_r)$.  We draw the classical points in the colour
of the observable with respect to which they are unbiased, and rely on
the label to indicate that it is in fact a classical point: Latin
letters will indicate classical points, while Greek letters will
denote arbitrary unbiased points.

\begin{proposition}\label{prop:classical-points-properties}
  Let $k,k'$ be classical points for $(\delta_g,\epsilon_g)$, and let
  $h$ be classical for $(\delta_r,\epsilon_r)$; then:
  \begin{enumerate}
  \item The phase map $\Lambda_r(k)$ is a comonoid homomorphism of
    $(\delta_g,\epsilon_g)$:
    \[
    \inline{%
\beginpgfgraphicnamed{classical-pt-comonoid-homo-i}
\InputIfFileExists{classical-pt-comonoid-homo-i.tikz}{}{\input{./tikz/classical-pt-comonoid-homo-i.tikz}}
\endpgfgraphicnamed}
    =
    \inline{%
\beginpgfgraphicnamed{classical-pt-comonoid-homo-ii}
\InputIfFileExists{classical-pt-comonoid-homo-ii.tikz}{}{\input{./tikz/classical-pt-comonoid-homo-ii.tikz}}
\endpgfgraphicnamed}
    \qquad\qquad
    \inline{%
\beginpgfgraphicnamed{classical-pt-comonoid-homo-iii}
\begin{tikzpicture}
	\begin{pgfonlayer}{nodelayer}
		\node [style=green vertex] (0) at (0, 0) {};
		\node [style=red vertex] (1) at (0, 1) {};
		\node [style=boundary vertex] (2) at (0, 2) {};
		\node [style=red angle] (3) at (0, 1) {$k$};
	\end{pgfonlayer}
	\begin{pgfonlayer}{edgelayer}
		\draw [style=plain] (0) to (1);
		\draw [style=plain] (2) to (1);
	\end{pgfonlayer}
\end{tikzpicture}}
\endpgfgraphicnamed}
    =
    \greencounit
    \]
  \item The phase maps $\Lambda_g(h)$ and $\Lambda_r(k)$ commute, up to
    a scalar factor:
    \[
    \inline{%
\beginpgfgraphicnamed{classical-phases-commute-lhs}
\InputIfFileExists{classical-phases-commute-lhs.tikz}{}{\input{./tikz/classical-phases-commute-lhs.tikz}}
\endpgfgraphicnamed} = 
    \inline{%
\beginpgfgraphicnamed{classical-phases-commute-rhs}
\InputIfFileExists{classical-phases-commute-rhs.tikz}{}{\input{./tikz/classical-phases-commute-rhs.tikz}}
\endpgfgraphicnamed}
    \]
  \item The point $\delta_r^\dag (k \otimes k')$ is also classical for
    $(\delta_g,\epsilon_g)$:
    \[
    \inline{%
\beginpgfgraphicnamed{classical-pts-closed-i}
\InputIfFileExists{classical-pts-closed-i.tikz}{}{\input{./tikz/classical-pts-closed-i.tikz}}
\endpgfgraphicnamed}
    =
    \inline{%
\beginpgfgraphicnamed{classical-pts-closed-ii}
\InputIfFileExists{classical-pts-closed-ii.tikz}{}{\input{./tikz/classical-pts-closed-ii.tikz}}
\endpgfgraphicnamed}
    =
    \inline{%
\beginpgfgraphicnamed{classical-pts-closed-iii}
\InputIfFileExists{classical-pts-closed-iii.tikz}{}{\input{./tikz/classical-pts-closed-iii.tikz}}
\endpgfgraphicnamed}
    \]
  \end{enumerate}
\end{proposition}

\begin{corollary}\label{cor:classical-subgroup}
  If $(\delta_g,\epsilon_g)$ has finitely many classical points, then
  they form a subgroup among the group of unbiased points of
  $(\delta_r,\epsilon_r)$.
\end{corollary}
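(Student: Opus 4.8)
The plan is to reduce the statement to the elementary fact that a nonempty finite subset of a group which is closed under the group multiplication is automatically a subgroup, and then to assemble the two required closure properties from results already established.

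First I would fix notation. Let $G$ denote the group of unbiased points for $(\delta_r,\epsilon_r)$, with multiplication $\delta_r^\dag$ and neutral element $\epsilon_r^\dag$; this is the abelian group structure on unbiased points noted in Section~\ref{sec:unbi-phase-group} (isomorphic to the phase group of $(\delta_r,\epsilon_r)$, with $\epsilon_r^\dag$ corresponding to the identity phase map $\Lambda_r(\epsilon_r^\dag)$). Let $K$ be the set of classical points for $(\delta_g,\epsilon_g)$. By condition~\ref{item:strong-comp-i} of Definition~\ref{def:strong-comp-obs} every element of $K$ is unbiased for $(\delta_r,\epsilon_r)$, so $K\subseteq G$, and by hypothesis $K$ is finite.

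Next I would check that $K$ is a submonoid of $G$. It contains the identity of $G$: indeed $\epsilon_r^\dag$ is classical for $(\delta_g,\epsilon_g)$ by condition~\ref{item:strong-comp-ii} of Definition~\ref{def:strong-comp-obs}, and $\epsilon_r^\dag$ is exactly the unit of the monoid $(\delta_r^\dag,\epsilon_r^\dag)$. Closure under the group multiplication is precisely part~3 of Proposition~\ref{prop:classical-points-properties}: if $k,k'\in K$ then $\delta_r^\dag(k\otimes k')\in K$. Finally I would close $K$ under inverses using finiteness: given $k\in K$, all powers $k,k^2,k^3,\dots$ (products taken with $\delta_r^\dag$) lie in $K$ by the submonoid property, so finiteness forces $k^m=k^n$ for some $m<n$; cancelling in the group $G$ gives $k^{\,n-m}=\epsilon_r^\dag$, whence $k^{-1}=k^{\,n-m-1}$, which is either $\epsilon_r^\dag$ or a further product of elements of $K$, hence lies in $K$. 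Therefore $K$ is a subgroup of $G$.

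There is no real obstacle here; the only point requiring a little care is the bookkeeping that identifies the multiplication $\delta_r^\dag$ and unit $\epsilon_r^\dag$ on unbiased points with the abstract group operations of $G$, and confirms that $\epsilon_r^\dag$ genuinely serves as the neutral element — both already recorded in the discussion of the phase group and in Remark~\ref{rem:phasegroup-or-monoid}. Everything else is a direct appeal to Definition~\ref{def:strong-comp-obs} and Proposition~\ref{prop:classical-points-properties} together with the standard "finite submonoid of a group is a subgroup" argument.
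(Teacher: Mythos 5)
Your proof is correct and follows exactly the route the paper intends: the corollary is stated as an immediate consequence of Definition~\ref{def:strong-comp-obs} (classical points are unbiased for the complementary structure, and $\epsilon_r^\dag$ is classical) together with part~3 of Proposition~\ref{prop:classical-points-properties} (closure under $\delta_r^\dag$), with finiteness supplying inverses via the standard ``finite submonoid of a group is a subgroup'' argument. Nothing further is needed.
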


When we consider observable structures over Hilbert spaces, having
finitely many classical points is just the statement the underlying
space is finite dimensional.  Since this is the case for all the
situations of interest for this chapter we will henceforth assume that
the \emph{classical phases} always form a subgroup.

There is also an important interaction between the classical points
and the phase group.  

\begin{proposition}\label{prop:classical-action-on-points}
  Let $k$ be a classical point for $(\delta_g,\epsilon_g)$,
  let $\alpha$ be an unbiased point for $(\delta_g,\epsilon_g)$, and
  define $k\bullet \alpha \vcentcolon= \Lambda_r(k) \circ \alpha$; then:
  \begin{enumerate}
  \item $k\bullet \alpha$ is again unbiased for
    $(\delta_g,\epsilon_g)$;
  \item $\Lambda_r(k) \circ \Lambda_g(\alpha) = \Lambda_g(k\bullet
    \alpha) \circ \Lambda_r(k)$. 
    \[
    \inline{%
\beginpgfgraphicnamed{classical-phases-action-lhs}
\InputIfFileExists{classical-phases-action-lhs.tikz}{}{\input{./tikz/classical-phases-action-lhs.tikz}}
\endpgfgraphicnamed}
    =     \inline{%
\beginpgfgraphicnamed{classical-phases-action-rhs}
\InputIfFileExists{classical-phases-action-rhs.tikz}{}{\input{./tikz/classical-phases-action-rhs.tikz}}
\endpgfgraphicnamed}
    \]
  \item $\Lambda_r(k)$ is a group automorphism of the unbiased points
    of $(\delta_g,\epsilon_g)$, and conjugation by $\Lambda_r(k)$ is
    an automorphism of the corresponding phase group.
  \end{enumerate}
\end{proposition}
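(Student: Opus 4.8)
The plan is to reduce all three parts to a single intertwining identity between the red phase map $\Lambda_r(k)$ and the green comultiplication, and then to read the claims off essentially mechanically. Since $k$ is classical for $(\delta_g,\epsilon_g)$, condition~\ref{item:strong-comp-i} of Definition~\ref{def:strong-comp-obs} makes it unbiased for $(\delta_r,\epsilon_r)$, so $\Lambda_r(k)$ is a phase map for the red observable and hence unitary by Definition~\ref{def:unbiased}. Moreover, by Proposition~\ref{prop:classical-points-properties}(1) the map $\Lambda_r(k)$ is a comonoid homomorphism of $(\delta_g,\epsilon_g)$, i.e.\ $\delta_g\circ\Lambda_r(k) = (\Lambda_r(k)\otimes\Lambda_r(k))\circ\delta_g$. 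Taking daggers of this equation and using $\Lambda_r(k)^\dag=\Lambda_r(k)^{-1}$ (equivalently, applying Proposition~\ref{prop:classical-points-properties}(1) to the classical point $k^{-1}$, which is again classical by Corollary~\ref{cor:classical-subgroup}) yields the workhorse identity
\[
\Lambda_r(k)\circ\delta_g^\dag \;=\; \delta_g^\dag\circ(\Lambda_r(k)\otimes\Lambda_r(k)),
\]
where, as always, scalar prefactors are ignored per Remark~\ref{rem:scalars}.

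First I would prove (2). Expanding $\Lambda_g(\alpha)=\delta_g^\dag\circ(\alpha\otimes\id{A})$ and pushing $\Lambda_r(k)$ across $\delta_g^\dag$ using the workhorse identity gives
\[
\Lambda_r(k)\circ\Lambda_g(\alpha)
= \delta_g^\dag\circ\big((\Lambda_r(k)\circ\alpha)\otimes\Lambda_r(k)\big)
= \delta_g^\dag\circ\big((k\bullet\alpha)\otimes\id{A}\big)\circ\Lambda_r(k)
= \Lambda_g(k\bullet\alpha)\circ\Lambda_r(k),
\]
using $k\bullet\alpha=\Lambda_r(k)\circ\alpha$ together with functoriality of $\otimes$ and the (suppressed) unit isomorphisms. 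Claim (1) is then immediate: since $\Lambda_r(k)$ is invertible, (2) rearranges to $\Lambda_g(k\bullet\alpha)=\Lambda_r(k)\circ\Lambda_g(\alpha)\circ\Lambda_r(k)^{-1}$, which is a conjugate of the unitary $\Lambda_g(\alpha)$ (unitary because $\alpha$ is unbiased for $(\delta_g,\epsilon_g)$) by the unitary $\Lambda_r(k)$, hence itself unitary; by Definition~\ref{def:unbiased} this says exactly that $k\bullet\alpha$ is unbiased for $(\delta_g,\epsilon_g)$.

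For (3), recall that the unbiased points of $(\delta_g,\epsilon_g)$ form an abelian group under the multiplication $\delta_g^\dag$, isomorphic to the green phase group via $\alpha\mapsto\Lambda_g(\alpha)$. Consider the map $\alpha\mapsto k\bullet\alpha=\Lambda_r(k)\circ\alpha$ on this group: it takes values in the unbiased points by (1); it is a group homomorphism, since applying the workhorse identity once more,
\[
k\bullet\big(\delta_g^\dag\circ(\alpha\otimes\beta)\big)
= \delta_g^\dag\circ(\Lambda_r(k)\otimes\Lambda_r(k))\circ(\alpha\otimes\beta)
= \delta_g^\dag\circ\big((k\bullet\alpha)\otimes(k\bullet\beta)\big);
\]
and it is a bijection, with inverse $\alpha\mapsto k^{-1}\bullet\alpha$, because $k^{-1}$ is again a classical point of $(\delta_g,\epsilon_g)$ by Corollary~\ref{cor:classical-subgroup}, so the inverse map is likewise valued in the unbiased points. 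Hence $\alpha\mapsto k\bullet\alpha$ is a group automorphism of the unbiased points. Transporting along the isomorphism with the phase group and invoking (2), the induced automorphism of the phase group sends $\Lambda_g(\alpha)$ to $\Lambda_g(k\bullet\alpha)=\Lambda_r(k)\circ\Lambda_g(\alpha)\circ\Lambda_r(k)^{-1}$, i.e.\ it is precisely conjugation by $\Lambda_r(k)$; conjugation in a group is always a homomorphism, and we have just shown it is a bijection here, so it is an automorphism.

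I do not expect a serious obstacle: once the workhorse identity is available, all three parts are bookkeeping. The only step needing care is the derivation of that identity, because Proposition~\ref{prop:classical-points-properties}(1) is phrased for $\delta_g$ rather than $\delta_g^\dag$, so one must dualise correctly --- this is where unitarity of $\Lambda_r(k)$ (and hence the fact that $k^{-1}$ is again classical) enters --- while keeping the usual watchfulness about the suppressed scalar factors.
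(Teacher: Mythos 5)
Your proof is correct. Note that the paper itself gives no proof of this proposition --- like Proposition~\ref{prop:classical-points-properties} and Corollary~\ref{cor:classical-subgroup} it is stated with the proofs deferred to \cite{Coecke:2009aa} --- so there is no in-paper argument to compare against; your route (dagger the comonoid-homomorphism property of $\Lambda_r(k)$, substituting $k\mapsto k^{-1}$ via the standing subgroup assumption of Corollary~\ref{cor:classical-subgroup}, then derive (2), deduce (1) by unitarity of the conjugate, and get (3) from the resulting monoid-homomorphism identity) is sound and is essentially the standard derivation, with the genuinely delicate points --- unitarity of $\Lambda_r(k)$, classicality of $k^{-1}$, and the suppressed scalar factors of Remark~\ref{rem:scalars} --- all explicitly accounted for.
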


To restate some of the preceding: the classical points of one
observable structure always form a subgroup among the unbiased points
of the strongly complementary observable; this subgroup in turn
acts as an automorphism group upon the unbiased points of the first
observable structure.

While all the preceding results will be (sometimes implicitly) used in
the subsequent sections, we will be able to make this all rather more
concrete by focusing on the specific case of the $Z$ and $X$ spin
observables.

\section{The \zxcalculus}
\label{sec:zxcalculus}

\subsection{The $Z$ and $X$ observables}
\label{sec:z-and-x-spins}

In this section, and in the rest of the chapter, we'll represent the
$Z$ and $X$ spin observables by following two strongly complementary
observable structures on $\mathbb{C}^2$:
\[
\begin{array}{ccc}
\delta_Z :
\begin{array}{ccc}
  \ket{0} &\mapsto &\ket{00} \\
  \ket{1} &\mapsto &\ket{11} \\
\end{array}
& \qquad &
\epsilon_Z : \sqrt{2}\ket{+} \mapsto 1
\\
\greendelta && \greencounit 
\\
\delta_X :
\begin{array}{ccc}
  \ket{+} &\mapsto &\ket{++} \\
  \ket{-} &\mapsto &\ket{--} \\
\end{array}
& \qquad &
\epsilon_Z : \sqrt{2}\ket{0} \mapsto 1
\\
\reddelta
&&
\redcounit
\end{array}
\]
One of the most significant simplifications that occurs when working
with the $Z$ and $X$ observables is that they both generate the same
compact structure; that is, we have the equation
\[
\begin{array}{ccccc}
  d_Z & \quad = \quad & d_X & \quad = \quad & \ket{00} + \ket{11}\\
\greeneta &=& \redeta &=& \etapic
\end{array}
\]
Since there is no need to distinguish between a green or red cup (or
cap), we will drop the dot from diagrammatic notation whenever possible.
Since the category bears a single compact structure, 
we can treat the internal structure of any diagram as an
undirected graph and appeal to the principle of  diagrammatic
equivalence described earlier:  if two diagrams are
isomorphic as labelled graphs, they are equal.

In direct consequence, the antipode map of the Hopf algebra
structure is trivial: 
\begin{gather*}
  s  = (d_Z^\dag \otimes \id{A}) \circ (\id{A} \otimes d_X)  =
  \id{Q} 
  \\
  \inline{%
\beginpgfgraphicnamed{antipode-def}
}
\endpgfgraphicnamed} 
  = \inline{%
\beginpgfgraphicnamed{antipode-ii}
\InputIfFileExists{antipode-ii.tikz}{}{\input{./tikz/antipode-ii.tikz}}
\endpgfgraphicnamed} 
  = \inline{%
\beginpgfgraphicnamed{cup-cap}
\InputIfFileExists{cup-cap.tikz}{}{\input{./tikz/cup-cap.tikz}}
\endpgfgraphicnamed}
  = \inline{%
\beginpgfgraphicnamed{short-id}
}
\endpgfgraphicnamed}
\end{gather*}
The defining equation of the Hopf algebra structure can therefore be
simplified:
\[
\inline{%
\beginpgfgraphicnamed{hopf-law-simple-lhs}
\InputIfFileExists{hopf-law-simple-lhs.tikz}{}{\input{./tikz/hopf-law-simple-lhs.tikz}}
\endpgfgraphicnamed}
=
\inline{%
\beginpgfgraphicnamed{hopf-law-rhs}
\InputIfFileExists{hopf-law-rhs.tikz}{}{\input{./tikz/hopf-law-rhs.tikz}}
\endpgfgraphicnamed}
\]

Recall that a point $\ket{\alpha_Z}$ is unbiased for the standard
basis $\ket{0},\ket{1}$ if and only if it has the form
$\frac{1}{\sqrt{2}}(\ket{0} + e^{i\alpha}\ket{1})$.  Hence the phase
group for the $Z$ observable is just the circle group, \ie, the
interval $[0,2\pi)$ under addition modulo $2\pi$.  The $X$
observable's phase group is isomorphic, so we represent the unbiased
points and phase maps as shown below.
\[
\greenphase{\alpha} \quad 
\greenpoint{\alpha'} \quad
\redphase{\beta} \quad 
\redpoint{\beta'} \quad
\alpha,\alpha',\beta,\beta' \in [0,2\pi)
\]
The phase maps are unitary, so the dagger sends each element to its
inverse, \ie. it negates the angle:
\[
\left( \greenphase{\alpha} \right)^\dag =
\greenphase{-\alpha}
\qquad
\left( \redphase{\alpha} \right)^\dag =
\redphase{-\alpha}
\]
Since we are operating in dimension 2, there are two classical points,
corresponding to the angles $0$ and $\pi$.  The action of the non-trivial
classical map is to negate the phase:
\[
\inline{%
\beginpgfgraphicnamed{red-pi-phases-action-lhs}
\InputIfFileExists{red-pi-phases-action-lhs.tikz}{}{\input{./tikz/red-pi-phases-action-lhs.tikz}}
\endpgfgraphicnamed} =
\inline{%
\beginpgfgraphicnamed{red-pi-phases-action-rhs}
\InputIfFileExists{red-pi-phases-action-rhs.tikz}{}{\input{./tikz/red-pi-phases-action-rhs.tikz}}
\endpgfgraphicnamed} \qquad
\inline{%
\beginpgfgraphicnamed{green-pi-phases-action-lhs}
\InputIfFileExists{green-pi-phases-action-lhs.tikz}{}{\input{./tikz/green-pi-phases-action-lhs.tikz}}
\endpgfgraphicnamed} =
\inline{%
\beginpgfgraphicnamed{green-pi-phases-action-rhs}
\InputIfFileExists{green-pi-phases-action-rhs.tikz}{}{\input{./tikz/green-pi-phases-action-rhs.tikz}}
\endpgfgraphicnamed}
\]

Given any two bases for a Hilbert space there is a unitary isomorphism
that maps one basis to the other; in the case of the $Z$ and $X$ bases
this map is the familiar Hadamard matrix.  We'll introduce an extra element
into the diagrammatic language to represent this map:
\[
H = \frac{1}{\sqrt{2}}
\begin{pmatrix*}[r]
  1 & 1 \\ 1 & -1
\end{pmatrix*}
=
\hgate
\]
The Hadamard is a self-adjoint unitary, hence we have the equations:
\[
\left( \hgate \right)^\dag  = \hgate 
\qquad \qquad
\inline{%
\beginpgfgraphicnamed{h-squared}
\InputIfFileExists{h-squared.tikz}{}{\input{./tikz/h-squared.tikz}}
\endpgfgraphicnamed} = \inline{%
\beginpgfgraphicnamed{long-id}
}
\endpgfgraphicnamed}
\]
Since the Hadamard maps one basis to the other, we could use it to
\emph{define} one observable structure in terms of the other:
\[
\inline{%
\beginpgfgraphicnamed{h-delta}
\InputIfFileExists{h-delta.tikz}{}{\input{./tikz/h-delta.tikz}}
\endpgfgraphicnamed} = \reddelta \qquad \inline{%
\beginpgfgraphicnamed{h-epsilon}
\begin{tikzpicture}
	\begin{pgfonlayer}{nodelayer}
		\node [style=hadamard vertex] (0) at (0, 0) {};
		\node [style=boundary vertex] (1) at (0, 1) {};
		\node [style=green vertex] (2) at (0, -1) {};
	\end{pgfonlayer}
	\begin{pgfonlayer}{edgelayer}
		\draw [style=plain] (0) to (2);
		\draw [style=plain] (1) to (0);
	\end{pgfonlayer}
\end{tikzpicture}}
\endpgfgraphicnamed} = \redcounit
\]
Strictly speaking, it is redundant to continue with both 
the $Z$ and $X$ observable structures: we could eliminate, for example,
the $X$ vertices.  However, for some purposes it is more convenient to
use both $Z$ and $X$, while for other purposes it is easier to work
with just $Z$ and $H$, so we maintain all three elements in the
syntax, and endorse the following \emph{colour duality principle}.

\begin{proposition}\label{prop:colour-duality}
  Every statement made in the diagrammatic language also holds with
  the colours reversed
\end{proposition}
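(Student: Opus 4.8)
The plan is to prove this by exhibiting an involutive symmetry of the whole calculus that interchanges the green and red generators, and then observing that any such symmetry automatically carries theorems to theorems. The natural witness is conjugation by the Hadamard map: for a diagram $D\colon Q^{\otimes n}\to Q^{\otimes m}$ I would set $\Phi(D) := \hgate^{\otimes m}\circ D\circ \hgate^{\otimes n}$. First I would check that, on a \emph{generator}, $\Phi$ produces the corresponding opposite-colour diagram. Inserting a pair $\hgate\,\hgate$ on every internal wire of $D$ changes nothing, since $\tikzfig{h-squared}=\tikzfig{long-id}$; after this insertion (together with the outer Hadamards supplied by $\Phi$) every leg of every spider is capped by a Hadamard, and a green spider capped by Hadamards on all of its legs is exactly the red spider of the same arity and phase --- this is what the relations $\tikzfig{h-delta}=\reddelta$, $\tikzfig{h-epsilon}=\redcounit$ and $\hgate\,\greenphase{\alpha}\,\hgate = \redphase{\alpha}$ say, via the spider normal form of Theorem~\ref{thm:gen-spider}. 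Each original Hadamard node acquires an extra $\hgate$ on each side and is unchanged because $\hgate$ squares to the identity. Hence $\Phi(D)$ is provably equal, inside the calculus, to the syntactic colour-swap $\bar D$ of $D$ (same underlying graph, every green spider recoloured red and vice versa).

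Granting this, the conclusion is essentially free. The assignment $D\mapsto \hgate^{\otimes m}\circ D\circ \hgate^{\otimes n}$ is visibly a strict, symmetric-monoidal, dagger-preserving functor --- it only pre- and post-composes with a fixed isomorphism --- so it is automatically well defined on the quotient by the imposed equations: if $L=R$ is derivable then $\Phi(L)=\Phi(R)$ is derivable, and by the previous paragraph $\Phi(L)=\bar L$, $\Phi(R)=\bar R$, so $\bar L=\bar R$. Moreover $\Phi\circ\Phi=\id{}$ again by $\hgate^2=\id{}$, so $\bar{(\cdot)}$ is an involution of the theory. This already proves the proposition for equations; I would add a remark that the claim survives the usual conventions, since all equations are only asserted up to scalars (which $\Phi$ preserves, as $\hgate$ is unitary) and the drawing convention for classical points --- a red-coloured $0$- or $\pi$-point for the green structure --- is itself colour-symmetric, as is the colour-blind compact structure $\tikzfig{cup-cap}$.

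The only genuinely substantive step is the generator check, and within it the single lemma that Hadamard-conjugation of a green spider of type $(n,m,\alpha)$ is the red spider of the same type; I expect this to be the main (if modest) obstacle, and I would discharge it by the basis computation $(\hgate\otimes\hgate)\circ\delta_Z\circ\hgate = \delta_X$ together with $\hgate Z_\alpha \hgate = X_\alpha$, extended to arbitrary arity by Theorem~\ref{thm:gen-spider}. As a fallback --- in case one prefers to argue abstractly rather than through $\hgate$ --- one can instead verify directly that the colour-reversed form of each imposed equation is derivable: the spider rule of one colour is the spider rule of the other, $\hgate^2=\id{}$ is self-dual, and for the bialgebra law~(\ref{eq:bialg-law}) and the Hopf law~(\ref{eq:hopf-law}) one leans on the fact that Definition~\ref{def:strong-comp-obs} is itself symmetric in the two observable structures (item~\ref{item:strong-comp-i} contains its own converse, item~\ref{item:strong-comp-ii} asserts both copying relations, and item~\ref{item:strong-comp-iii} read together with the dagger available in a \dsmc\ yields the colour-swapped bialgebra equation, whence Lemma~\ref{lem:hopf-law} gives the colour-swapped Hopf law). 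Either route then lets $\Phi$ descend to the claimed automorphism of the calculus.
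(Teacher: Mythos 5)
Your argument is correct, but it is a genuinely different (and heavier) route than the one the paper relies on. The paper states the proposition without proof because the operative fact is purely syntactic: by the caption of Figure~\ref{fig:rewrite-rules}, the \emph{complete} rule set is defined to contain the colour-exchanged copy of every rule, and the generators themselves come in colour-swapped pairs (with $H$ self-dual). Hence the syntactic recolouring $D\mapsto\bar D$ maps generators to generators and axioms to axioms, so it carries any derivation, rewrite step by rewrite step, to a derivation of the colour-swapped equation --- a two-line argument needing no Hadamard at all. Your proof instead realises $\bar{(\cdot)}$ as conjugation by $H$, $\Phi(D)=H^{\otimes m}\circ D\circ H^{\otimes n}$, and shows $\Phi(D)$ is \emph{derivably} equal to $\bar D$; this buys a stronger observation (colour duality is an inner automorphism of the calculus, witnessed inside the calculus and compatible with $\dag$ and the monoidal structure), at the cost of the arbitrary-arity colour-change lemma. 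Two caveats there: first, your proposed fallback of discharging that lemma ``by basis computation'' only establishes the equation in \fdhilb, which by Remark~\ref{rem:not-complete} does not yield derivability --- you must stick with the syntactic derivation from the $H$-commute, spider and $H$-cancel rules (and Theorem~\ref{thm:gen-spider} only licenses the spider notation, it is not itself a substitute for that derivation); second, your other fallback --- verifying that the colour-reversed forms of the imposed equations are derivable --- is not really a separate task, since by the figure's convention they are already axioms, which is exactly the paper's implicit argument.
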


We will switch freely between the two-coloured presentation, and the one-colour
and Hadamard view depending on which is most convenient at any given time.

\subsection{Syntax and semantics I}
\label{sec:syntax}

The \zxcalculus is a formal graphical notation, based on the notion of
an \emph{open graph}.

\begin{definition}\label{def:ograph}
  An \emph{open graph} is a triple $(G,I,O)$ consisting of an
  undirected graph $G =(V,E)$ and distinguished subsets $I,O \subseteq
  V$ of \emph{input} and \emph{output} vertices $I$ and $O$.  The set
  of vertices $I\cup O$ is called the \emph{boundary} of $G$, and
  $V\setminus (I\cup O)$ is the \emph{interior} of $G$.
\end{definition}

A term of the \zxcalculus is called a \emph{diagram}; this is an open
graph with some additional properties and structure.

\begin{definition}\label{def:diagram}
  A \emph{diagram} is an open graph $(G,I,O)$, where (i) all the
  boundary vertices are of degree one; (ii) the set of inputs $I$ and
  the set of outputs $O$ are both totally ordered; and (iii) whose
  interior vertices are restricted to the following types:
  \begin{itemize}
  \item $Z$ vertices with $m$ inputs and $n$ outputs, labelled by an
    angle $\alpha \in [0,2\pi)$; these are denoted $Z^m_n(\alpha)$,
    and shown graphically as (light) green circles,
  \item $X$ vertices with $m$ inputs and $n$ outputs, labelled by an
    angle $\alpha \in [0,2\pi)$; these are these are denoted
    $X^{m}_{n}(\alpha)$, and shown graphically as 
    (dark) red circles,
  \item $H$ (or Hadamard) vertices, restricted to degree 2; shown as
    squares.
  \end{itemize}
  If a $X$ or $Z$ vertex has $\alpha = 0$ then the label is entirely
  omitted.  The allowed vertices are shown in
  Figure~\ref{fig:components}.
\end{definition}
Since the inputs and outputs of of a diagram are totally ordered, we
can identify them with natural numbers and speak of the $k$th input, etc.

\begin{figure}[tb]
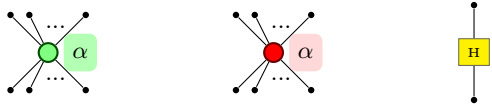

  \centering
  \greenspider{$\alpha$}\hspace{1cm}
  \redspider{$\alpha$}\hspace{1cm}
  \hgate
  \caption{Interior vertices of diagrams}
  \label{fig:components}%\vspace{-6mm}
\end{figure}

\begin{remark}\label{rem:outputs-and-inputs}
  When a vertex occurs inside the graph, the distinction between
  inputs and outputs is purely conventional: one can view them simply
  as vertices of degree $n+m$; however, this distinction allows the
  semantics to be stated more directly, see below.
\end{remark}

The collection of diagrams forms a compact category in the
obvious way: the objects are natural numbers and the arrows $m\to n$
are those diagrams with $m$ inputs and $n$ outputs; composition
$g\circ f$  is formed by identifying the inputs of $g$ with the
outputs of $f$ and erasing the corresponding vertices; $f\otimes g$ is
the diagram formed by the disjoint union of $f$ and $g$ with $I_f$
ordered before $I_g$, and similarly for the outputs.  This is
basically the free (self-dual) compact category generated by the
arrows shown in Figure~\ref{fig:components}.

We can make this category $\dag$-compact by specifying that $f^\dag$
is the same diagram as $f$, but with the inputs and outputs
exchanged, and all the angles negated.

This construction yields a category that does not incorporate the
algebraic structure of strongly complementary observables.  To obtain
the desired category  we must quotient by the equations shown in
Figure~\ref{fig:rewrite-rules}.  We denote the category  so-obtained
by $\DD$.

\begin{figure}[th!b]
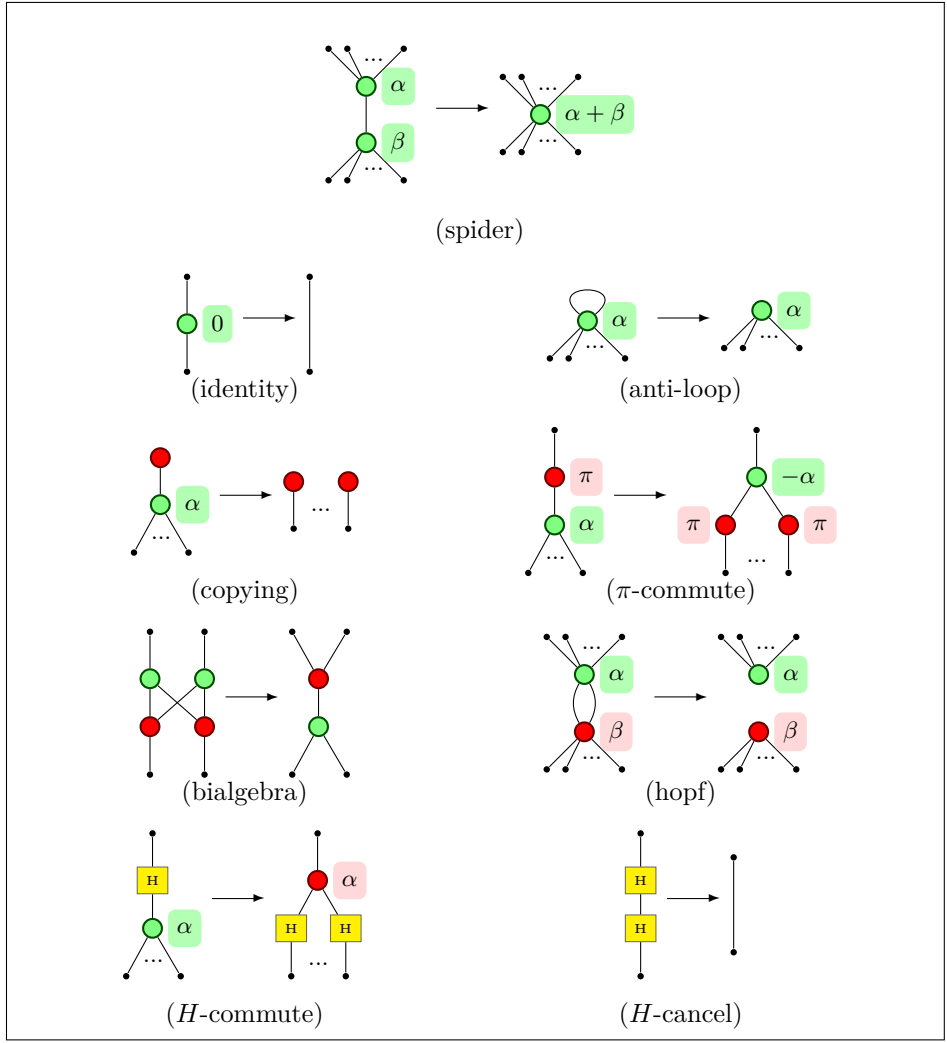

  \centering
  \fbox{
  \begin{minipage}[c]{1.0\linewidth}
  \centering
\[
\greenSpiderLHSa
\rTo
\greenSpiderRHSa
\]
(spider)
\[
\begin{array}{ccc}
\greenIdentityLHSa
\rTo
\greenIdentityRHSa
& 
\qquad \quad 
&
\greenAntiLoopLHSa
\rTo
\greenAntiLoopRHSa
\\
\text{(identity)}&&\text{(anti-loop)}
\\ \\[-1ex]
\copyingLHSa
\rTo
\copyingRHSa
&
\qquad \qquad
&
\piCommutesLHSa
\rTo
\piCommutesRHSa
\\ 
\text{(copying)}  &&  \text{($\pi$-commute)}
\\ \\[-1ex]
\bialgebraLHSa
\rTo
\bialgebraRHSa
&
\qquad \qquad
&
\hopfLawLHSa
\rTo
\hopfLawRHSa
\\
\text{(bialgebra)} &&\text{(hopf)} 
\\ \\[-1ex]
\removeGreenLHSa
\rTo
\removeGreenRHSa
& 
\qquad \qquad
&
\HsquaredLHSa
\rTo
\HsquaredRHSa
\\ \\[-1ex]
\text{($H$-commute)}  &&\text{($H$-cancel)} 
\end{array}
\]
\end{minipage}}
\caption{Rewrite rules for the \zxcalculus.  We present the rules for the
  $Z$ subsystem;  to obtain the complete set of rules exchange the
  colours in the rules shown above.
}
\label{fig:rewrite-rules}
\end{figure}

\begin{remark}\label{rem:comment-on-rules}
  The equations shown in Figure~\ref{fig:rewrite-rules} are not
  exactly those described in Sections~\ref{sec:observ-strong-compl}
  and \ref{sec:z-and-x-spins}, however they are equivalent to them.
  We shall therefore, on occasion, use properties discussed earlier as
  derived rules in computations.
\end{remark}

Since \DD is a monoidal category we can assign an interpretation to
any diagram by providing a monoidal functor from \DD to any other
monoidal category.  Since we interested in quantum mechanics, the
obvious target category  is \fdhilb.

\begin{definition}\label{def:semantic-pt1}
    Let $\denote{\cdot} : \DD \to \fdHilb$ be a symmetric monoidal
  functor defined on objects by
  \[
  \denote{1} = \mathbb{C}^2
  \]
  and on the generators by:
  \[
  \begin{array}{rcccl}
    \denote{Z^m_n(\alpha)} & = & \denote {{\greenspider{$\alpha$}}}
    & = &
    \left\{
    \begin{array}{ccl}
      \ket{0}^{\otimes m} & \mapsto &   \ket{0}^{\otimes n}\\
      \ket{1}^{\otimes m} & \mapsto &   e^{i \alpha}\ket{1}^{\otimes n}
    \end{array}\right.\,,
\\
    \denote{X^m_n(\alpha)} & = & \denote{{\redspider{$\alpha$}}}
    &=& 
    \left\{
    \begin{array}{ccl}
      \ket{+}^{\otimes m} & \mapsto &   \ket{+}^{\otimes n}\\
      \ket{-}^{\otimes m} & \mapsto &   e^{i \alpha}\ket{-}^{\otimes n}
    \end{array}\right.\,,
    \\
    \denote{H} & = & \denote{{\hgate}}
    &=& 
    \frac{1}{\sqrt{2}}
    \left(
      \begin{array}{cc}
        1&1\\1&-1
      \end{array}
    \right).
  \end{array}  
\]
The value of $\denote{\cdot}$ on all other objects and arrows is then
fixed by the requirement that it be a symmetric monoidal
functor\footnote{The full details of this construction
  regarding cyclic graphs and traces can be found in
  \cite{Duncan:thesis:2006}.}.
\end{definition}

\begin{theorem}[Soundness]\label{thm:soundness-i}
  For any diagrams $D$ and $D'$ in \DD, if $D = D'$ then $\denote{D} =
  \denote{D'}$ in \fdhilb.
  \begin{proof}
    Notice that the compact closed structure is preserved
    automatically because $\denote{\cdot}$ is a monoidal functor.  It
    just remains to check that all the equations of
    Figure~\ref{fig:rewrite-rules} hold in the image of
    $\denote{\cdot}$.
  \end{proof}
\end{theorem}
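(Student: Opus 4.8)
The plan is to exploit the way $\DD$ was built: it is the free $\dag$-compact category on the generators of Figure~\ref{fig:components}, quotiented by the rewrite rules of Figure~\ref{fig:rewrite-rules}, whereas $\denote{\cdot}$ is \emph{declared} to be a symmetric monoidal functor. By Selinger's coherence theorem \cite{Selinger:2009aa}, which was invoked earlier, any equality of diagrams that already holds in the free (dagger-)compact category before quotienting --- i.e.\ purely by homotopic deformation --- is automatically carried to \fdhilb\ by \emph{any} monoidal functor, since such a functor preserves composition, tensor, the symmetry and the compact structure. Hence the only content left is the quotient. Concretely, to know that $\denote{\cdot}$ descends to a well-defined functor on $\DD$ --- equivalently, that $D = D'$ in $\DD$ implies $\denote{D} = \denote{D'}$ --- it suffices to check, for each rule $L \rTo R$ of Figure~\ref{fig:rewrite-rules}, that $\denote{L} = \denote{R}$ holds in \fdhilb. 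The congruence closure of these local identities under composition, tensoring and the compact structure is precisely the relation generating $\DD$, and all of that closure is respected by the monoidal functor $\denote{\cdot}$, so well-definedness follows from the finitely many base cases.

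So the proof reduces to a rule-by-rule verification. First I would compute $\denote{Z^m_n(\alpha)}$ and $\denote{X^m_n(\alpha)}$ on the diagonal basis vectors and check the (spider) rule --- composing two green spiders along a wire adds their inputs, outputs and angles --- which is immediate from the action $\ket{0}^{\otimes m}\mapsto\ket{0}^{\otimes n}$, $\ket{1}^{\otimes m}\mapsto e^{i\alpha}\ket{1}^{\otimes n}$. The (identity) and (anti-loop) rules are then corollaries: a $1\to 1$ green node with $\alpha = 0$ is the identity matrix, and a self-loop only contributes a constant trace, which may be discarded under the scalar conventions of Remark~\ref{rem:scalars}. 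For (bialgebra) and (hopf) I would instantiate Definition~\ref{def:strong-comp-obs} in \fdhilb, using that $\ket{0},\ket{1}$ are classical for $\delta_Z$ and unbiased for $\delta_X$ (and vice versa), so that $\epsilon_X^\dag = \sqrt{2}\ket{0}$ is copied by $\delta_Z$; these are exactly the defining identities of strong complementarity for the $X$ and $Z$ structures on $\mathbb{C}^2$. The ($\pi$-commute) and (copying) rules follow from the eigenpoint calculation of Lemma~\ref{lem:classical-eigenpoints} together with the fact that the nontrivial classical map negates the phase. Finally ($H$-commute) and ($H$-cancel) are the one-line matrix facts $H\,\delta_Z\,H^{\otimes 2} = \delta_X$ and $H^2 = \id{Q}$. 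The colour-swapped rules require no extra work: conjugating an entire diagram by Hadamards interchanges the $Z$ and $X$ generators and commutes with $\denote{\cdot}$, which is the semantic substance of the colour-duality Proposition~\ref{prop:colour-duality}.

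The part that demands care, rather than being a genuine obstacle, is the bookkeeping of scalars: several equations in Figure~\ref{fig:rewrite-rules} --- notably (bialgebra), (copying) and (anti-loop) --- hold in \fdhilb\ only up to a nonzero scalar, so one must either state explicitly that the semantics is taken modulo global scalars, as flagged in Remark~\ref{rem:scalars}, or else track the normalising $\sqrt{2}$ factors through each rule. A second subtlety is that $\denote{\cdot}$ must be well-defined on diagrams containing cyclic subgraphs (traces), not merely on acyclic composites; I would deal with this by interpreting a trace via the self-dual compact structure induced by $d_Z = \delta_Z \circ \epsilon_Z^\dag$ and appeal to the detailed check --- that the result is independent of the chosen cut --- in the reference cited in the footnote to Definition~\ref{def:semantic-pt1}.
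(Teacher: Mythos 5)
Your proposal follows exactly the paper's argument: since $\denote{\cdot}$ is a monoidal functor the structural (compact closed) equalities are preserved automatically, so soundness reduces to checking each rewrite rule of Figure~\ref{fig:rewrite-rules} in \fdhilb. You simply carry out the rule-by-rule verification and the scalar/trace bookkeeping that the paper leaves implicit (cf.\ Remark~\ref{rem:scalars} and the footnote to Definition~\ref{def:semantic-pt1}), so this is the same proof in expanded form.
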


\begin{remark}\label{rem:not-complete}
  While Theorem~\ref{thm:soundness-i} shows that every equation
  provable in the \zxcalculus is true in Hilbert spaces, the converse
  does not hold:  there are diagrams $D$ and $D'$ such that
  $\denote{D} = \denote{D'}$ but the equation $D = D'$ cannot be
  derived from the rules of the calculus.  See \cite{Duncan:2009ph}
  for details.
\end{remark}

\begin{proposition}\label{prop:dag-preserved}
  For any diagram $D$ in \DD, we have $\denote{D^\dag} = \denote{D}^\dag$.
\end{proposition}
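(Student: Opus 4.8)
The plan is to reduce the statement to a check on the generators of $\DD$ and then verify it there by direct computation. Recall that both $\DD$ and $\fdHilb$ are $\dag$-compact categories; on $\DD$ the functor $(-)^\dag$ reflects a diagram --- exchanging its inputs and outputs --- and negates every angle. The first point I would establish is that this operation is well defined on the quotient category $\DD$, and not merely on the free category of diagrams: the dagger of each rewrite rule in Figure~\ref{fig:rewrite-rules} is again one of the rules, or is derivable from them (the same observation that, after Definition~\ref{def:strong-comp-obs}, licenses the ``flipped'' forms of the bialgebra and classicality conditions). Hence $(-)^\dag$ is a contravariant involutive strict monoidal functor on $\DD$: $(g\circ f)^\dag = f^\dag\circ g^\dag$, $\id{A}^\dag = \id{A}$, $(f\otimes g)^\dag = f^\dag\otimes g^\dag$, and the same identities hold in $\fdHilb$.

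Since $\denote{\cdot}$ is a symmetric monoidal functor it commutes with $\circ$, $\otimes$, and the symmetry, and dagger distributes contravariantly over all of these in both categories; moreover every arrow of $\DD$ is built from the generators $Z^m_n(\alpha)$, $X^m_n(\alpha)$, $H$ together with identities, symmetries, and the (co)units of the compact structure by composition and tensor (cyclic diagrams being expressed via the trace, which is in turn definable from the compact structure). It therefore suffices to prove $\denote{D^\dag} = \denote{D}^\dag$ when $D$ is one of these building blocks.

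For the generators, write $\denote{Z^m_n(\alpha)} = \ket{0}^{\otimes n}\bra{0}^{\otimes m} + e^{i\alpha}\ket{1}^{\otimes n}\bra{1}^{\otimes m}$; since $\{\ket{0}^{\otimes k},\ket{1}^{\otimes k}\}$ is an orthonormal family, taking the adjoint yields $\denote{Z^m_n(\alpha)}^\dag = \ket{0}^{\otimes m}\bra{0}^{\otimes n} + e^{-i\alpha}\ket{1}^{\otimes m}\bra{1}^{\otimes n} = \denote{Z^n_m(-\alpha)} = \denote{Z^m_n(\alpha)^\dag}$, and the identical computation with the orthonormal family $\{\ket{+}^{\otimes k},\ket{-}^{\otimes k}\}$ handles the $X$ vertices. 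The Hadamard case is immediate: $H^\dag = H$ as a diagram, and $\denote{H}$ is a real symmetric matrix, hence self-adjoint. Of the structural morphisms, the identity is trivial, $\denote{\sigma}$ is the real, self-inverse swap matrix and so is self-adjoint, matching $\sigma^\dag = \sigma$, and $\denote{d} = \ket{00} + \ket{11}$ has adjoint $\bra{00} + \bra{11} = \denote{d^\dag}$ --- this last being exactly the statement that $\denote{\cdot}$ carries the $\dag$-compact structure of $\DD$ onto that of $\fdHilb$, the preservation of the underlying compact structure having already been invoked in the proof of Theorem~\ref{thm:soundness-i}.

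I do not expect a genuine obstacle here; the only care needed is bookkeeping --- checking that $(-)^\dag$ descends to the quotient $\DD$ (so that $D^\dag$ is well defined whenever $D = D'$), and that the compact-structure maps are counted among the building blocks, since they are not ``generators'' in the naive sense yet are indispensable for expressing arbitrary arrows of $\DD$. With those in hand, the proposition follows by a routine structural induction on $D$.
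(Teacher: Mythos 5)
Your proposal is correct; note, though, that the paper states this proposition without any proof at all, treating it as immediate from the definition of $\dag$ on diagrams (flip inputs/outputs, negate angles) and of $\denote{\cdot}$ on generators. Your argument—checking the adjoints of $Z^m_n(\alpha)$, $X^m_n(\alpha)$, $H$ and the structural morphisms, observing that the rule set is closed under flipping so that $(-)^\dag$ descends to the quotient \DD, and then doing a structural induction using that both $(-)^\dag$'s are contravariant monoidal involutions—is exactly the routine verification the paper leaves implicit, carried out correctly.
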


\subsection{Quantum circuits}
\label{sec:quantum-circuits}

The quantum circuit model is simple and intuitive quantum
computational model.  Analogous to traditional Boolean circuits, a
quantum circuit consists of a register of qubits, to which quantum
logic gates---that is one- or two-qubit unitary operations---are
applied, in sequence and in parallel\footnote{The circuit model
  usually incorporates measurements too, but this will not be necessary
  here.  See, \eg, chapter 4 of
  {\cite{NieChu:QuantComp:2000}}.}.  A fairly typical set of logic
gates is shown in Figure~\ref{fig:quantum-logic-gates}, however these
are not all necessary, as the following theorem states.

\begin{figure}[th]
  \centering
  \[
  \begin{array}{ccc}
    Z =
    \begin{pmatrix*}[r]
      1 & 0 \\ 0 & -1
    \end{pmatrix*}
& \qquad &
    X =
    \begin{pmatrix*}[r]
      0 & 1 \\ 1 & 0
    \end{pmatrix*}
\\ \\
    Z_\alpha =
    \begin{pmatrix*}[r]
      1 & 0 \\ 0 & e^{i\alpha}
    \end{pmatrix*}
& \qquad &
    H =\frac{1}{\sqrt{2}}
    \begin{pmatrix*}[r]
      1 & 1 \\ 1 & -1
    \end{pmatrix*}
\\ \\
\CX = \begin{pmatrix*}[r]
      1 & 0 & 0 & 0  \\
      0 & 1 & 0 & 0  \\
      0 & 0 & 0 & 1  \\
      0 & 0 & 1 & 0 
    \end{pmatrix*}
&&
\CZ = \begin{pmatrix*}[r]
      1 & 0 & 0 & 0  \\
      0 & 1 & 0 & 0  \\
      0 & 0 & 1 & 0  \\
      0 & 0 & 0 & -1 
    \end{pmatrix*}
  \end{array}
  \]
  \caption{Quantum logic gates}
  \label{fig:quantum-logic-gates}
\end{figure}

\begin{theorem}[\cite{Adriano-Barenco:1995qy}]\label{thm:divincenzo-universality}
The set $\{Z_\alpha,H,\CX\}$ suffices to generate all unitary matrices
on $Q^n$.  
\end{theorem}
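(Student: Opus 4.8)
The plan is to reproduce the classical reduction of Barenco et al.~\cite{Adriano-Barenco:1995qy}, working throughout up to a global phase, which is harmless here since quantum states are already identified up to a scalar. I would organise the argument as a chain of reductions, each shrinking the class of unitaries that still need to be built from $Z_\alpha$, $H$ and $\CX$. The first step shows that every unitary $U$ on $Q^n$ is a finite product of \emph{two-level} unitaries, i.e.\ unitaries acting as the identity on all but two of the $2^n$ computational basis vectors. This is a Gaussian-elimination argument: given $U$, multiply it on the left by a succession of two-level unitaries (each a $2\times 2$ rotation/reflection in the plane of two basis vectors), chosen to annihilate, one at a time, the off-diagonal entries in the first column; then recurse on the remaining $(2^n-1)$-dimensional minor. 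After finitely many steps $U$ becomes the identity, and inverting the steps exhibits $U$ as the required product.

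Next I would realise an arbitrary two-level unitary, acting on the subspace spanned by basis vectors $\ket{s}$ and $\ket{t}$ with $s,t\in\{0,1\}^n$, using $\CX$ gates together with a single multiply-controlled single-qubit gate. Here one picks a Gray-code path $s=g_0,g_1,\dots,g_m=t$ in which consecutive strings differ in exactly one bit; a sequence of $\CX$ gates conjugates the situation into one in which the two active strings differ in a single coordinate, whereupon the two-level unitary is precisely a single-qubit unitary $V$ on that coordinate, controlled on the other $n-1$ qubits (an $(n{-}1)$-fold controlled gate $\wedge^{n-1}V$, possibly conjugated by bit-flips on some control lines; a bit-flip is $H Z_\pi H$ up to phase).

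This brings us to the substantive stage: decomposing a $k$-fold controlled single-qubit gate $\wedge^{k}V$ into $\CX$ gates and single-qubit gates. One proceeds recursively, expressing $\wedge^{k}V$ in terms of $\wedge^{k-1}W$ and its inverse, where $W^2=V$, together with Toffoli gates $\wedge^2 X$; the recursion bottoms out at the Toffoli gate and at singly-controlled single-qubit gates. The Toffoli gate has the standard realisation by six $\CX$ gates together with $H$ and the gates $Z_{\pi/4},Z_{-\pi/4}$, all of which lie in our set. For a singly-controlled gate $\wedge V$ I would invoke the $ABC$-decomposition: any single-qubit unitary factors, up to a global phase $e^{i\delta}$, as $V = A\,X\,B\,X\,C$ with $ABC=\id{Q}$; then $\wedge V$ is, up to a global phase, the composite of a $Z_\delta$ on the control wire with $(\id{Q}\otimes A)$, a $\CX$, $(\id{Q}\otimes B)$, a $\CX$, and $(\id{Q}\otimes C)$.

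Finally, by the Euler-angle decomposition (in the $Z$--$X$--$Z$ convention) every single-qubit unitary equals, up to global phase, a product $Z_\gamma\,H\,Z_\beta\,H\,Z_\alpha$ for suitable $\alpha,\beta,\gamma$ --- using that $HZ_\beta H$ is an $X$-axis rotation up to phase. Composing the reductions then writes an arbitrary unitary on $Q^n$ as a finite product of $Z_\alpha$, $H$ and $\CX$ gates (equivalently, one could use $\CZ$, since $\CZ=(\id{Q}\otimes H)\,\CX\,(\id{Q}\otimes H)$), which is the claim. The main obstacle is the third reduction: controlling the recursion that builds $\wedge^{k}V$ and verifying the Toffoli and $ABC$ identities by direct computation --- everything else is either elementary linear algebra or a routine trigonometric identity, and all of it is carried out in detail in~\cite{Adriano-Barenco:1995qy}.
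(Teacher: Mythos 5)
The paper gives no proof of this theorem at all---it is imported directly from Barenco et al.\ \cite{Adriano-Barenco:1995qy}---and your outline is a correct reconstruction of precisely that cited argument: reduction to two-level unitaries, Gray-code conjugation by \CX, recursive decomposition of multiply-controlled gates bottoming out at Toffoli and singly-controlled gates, the $ABC$ trick for $\wedge V$, and the Euler $Z$--$X$--$Z$ decomposition into $Z_\alpha$ and $H$. You are therefore following essentially the same route the paper relies on; the only slight looseness is in your recursion step for $\wedge^{k}V$ (Barenco's construction uses singly-controlled $W$, $W^{\dag}$ and $(k{-}1)$-fold controlled NOTs together with a recursive $\wedge^{k-1}W$, rather than Toffolis directly), a detail the cited reference supplies.
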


\begin{corollary}\label{cor:zx-is-universal}
  The \zxcalculus can represent all unitary matrices on $Q^n$.
  \begin{proof}
    It suffices to show that there are \zxcalculus terms for the
    matrices $Z_\alpha$, $H$ and $\CX$.  We have
    \[
    \denote{\;\hgate\;} = H, 
    \qquad 
    \denote{\;\greenphase{\alpha}\;} = Z_\alpha
    \quad\text{and}\quad
    \denote{\;\cex\;} = \CX 
    \]
    which can be verified by direct calculation.  Note that 
    \[
    \denote{\;\ctrlXskewed\;} \quad  =  \quad \denote{\;\ctrlXskewedbis\;}
    \]
    so the presentation of \CX is unambiguous.
  \end{proof}
\end{corollary}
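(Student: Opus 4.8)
The plan is to reduce the statement to a finite verification, using that $\denote{\cdot}$ is a symmetric monoidal functor. First I would invoke Theorem~\ref{thm:divincenzo-universality}, which says that every unitary on $Q^n$ can be written as a finite composite --- built from $\circ$ and $\otimes$, the latter padding with identity wires --- of gates drawn from $\{Z_\alpha, H, \CX\}$. Since $\denote{\cdot}$ preserves composition, the monoidal product, and identities, it suffices to exhibit, for each of these three generators, a diagram of $\DD$ whose denotation is the matrix in question; an arbitrary circuit is then transcribed by plugging and juxtaposing copies of these diagrams, and functoriality of $\denote{\cdot}$ ensures the resulting diagram denotes the intended product of matrices.

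Next I would give the three diagrams and read off their semantics from Definition~\ref{def:semantic-pt1}. The Hadamard gate is the single vertex $\hgate$, whose denotation is $H$ by definition. The gate $Z_\alpha$ is the one-input one-output green vertex $\greenphase{\alpha}$, that is $Z^1_1(\alpha)$, which by definition sends $\ket0 \mapsto \ket0$ and $\ket1 \mapsto e^{i\alpha}\ket1$, exactly $Z_\alpha$. The gate $\CX$ is the diagram $\cex$: a green $Z^1_2$ vertex one of whose legs feeds a red $X^2_1$ vertex. The only genuine computation is this last one. In the computational basis the green spider copies the control, $\ket i \mapsto \ket{ii}$, while the red spider, being a copier for the $X$-basis, acts in the $Z$-basis as $\ket j \otimes \ket k \mapsto \ket{j\oplus k}$ up to a nonzero scalar; composing, $\ket i \otimes \ket j \mapsto \ket i \otimes \ket{i\oplus j}$, which is $\CX$. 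Any residual scalar factors are discarded in accordance with Remark~\ref{rem:scalars}.

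Finally I would settle the claim that the picture for $\CX$ is unambiguous, namely $\denote{\ctrlXskewed} = \denote{\ctrlXskewedbis}$. This follows from Soundness (Theorem~\ref{thm:soundness-i}) once one observes that $\ctrlXskewed$ and $\ctrlXskewedbis$ are isomorphic as labelled open graphs: they have the same vertex set, the same edge set, and the same ordered inputs and outputs, differing only in where the green and red vertices are drawn. Hence by the diagrammatic-equivalence principle for the self-dual compact structure they are already equal in $\DD$, and so have the same denotation. I expect the only mildly delicate point to be the $\CX$ denotation calculation --- pinning down the parity behaviour of the red $X^2_1$ vertex in the $Z$-basis and keeping straight which wire is control and which is target --- with everything else being routine bookkeeping.
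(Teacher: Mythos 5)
Your proposal is correct and follows essentially the same route as the paper: invoke the universality of $\{Z_\alpha, H, \CX\}$ (Theorem~\ref{thm:divincenzo-universality}), exhibit the three generating diagrams and check their denotations directly (up to scalar, per Remark~\ref{rem:scalars}), and note that the two drawings of \CX coincide as diagrams so its presentation is unambiguous. The extra detail you supply — functoriality of $\denote{\cdot}$ for assembling circuits and the graph-isomorphism argument for the \CX ambiguity — is just an elaboration of what the paper leaves implicit.
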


\begin{example}[The \CZ-gate]\label{ex:CZ}
  The \CZ-gate can be obtained by using a Hadamard ($H$) gate to transform
  the second qubit of a \CX gate.  We obtain a simpler representation
  using the colour-change rule
  \[
  \CZviaCX \quad = \quad \czed
  \]
  From the presentation of \CZ in the \zxcalculus, we can immediately
  read off that it is symmetric in its inputs.  Furthermore, we can
  prove one of the basic properties of the \CZ gate, namely that is
  self-inverse.
  \[
  \CZselfinvi =
  \CZselfinvii =
  \CZselfinviii =
  \CZselfinviv =
  \CZselfinvv =
  \CZselfinvvi
  \]
\end{example}

\begin{example}[Bell state]\label{ex:preparing-bell-state}
  The following is a \zxcalculus term representing a quantum circuit
  which produces a Bell state, $\ket{00}+\ket{11}$.  We can verify
  this fact by the equations of the calculus.
  \[
  \makeBellPfi =
  \makeBellPfii =
  \makeBellPfiii =
  \etapic
  \]
  The corresponding \zxcalculus derivation is a proof of
  the correctness of this circuit.
\end{example}

The \zxcalculus can represent many things which do not correspond to
quantum circuits.  We now present a criterion to recognise which
diagrams do correspond to quantum circuits.

\begin{definition}\label{def:circuit-like-diagram}
    A diagram is called \emph{circuit-like} if: 
  \renewcommand{\theenumi}{(C\arabic{enumi})}
  \begin{enumerate}
  \item \label{circit-like:1}all of its vertices can be covered by a
    set $\mathcal{P}$ of disjoint directed paths, each of which ends in
    an output;
  \item \label{circuit-like:2}for every oriented cycle $\gamma$ in the
    diagram, if $\gamma$ contains at least 2 edges from different
    paths in $\mathcal{P}$, then it traverses at least one of them in
    the direction opposite to that induced by the path; and,
  \item \label{circuit-like:3} it is a simple graph, and is 
    3-coloured.
  \end{enumerate}
  \renewcommand{\labelenumi}{\theenumi.}
\end{definition}

Intuitively, the paths of $\mathcal{P}$ represent the trajectories of
the individual qubits through the circuit, whereas those edges not
included in any path represent two-qubit gates.  Condition
\ref{circuit-like:2} guarantees that the diagram has a causally
consistent temporal order, while condition \ref{circuit-like:3} forces
the diagram to be minimal with respect to  the spider, anti-loop, and Hopf
rules.

\begin{remark}\label{rem:circuit-like-with-inputs}
  Definition~\ref{def:circuit-like-diagram} requires each path to end
  in an output vertex, but does not demand that the initial vertex is
  an input.  This allows the representation of  quantum circuits with
  some or all inputs fixed; see Example~\ref{ex:preparing-bell-state}
  above.
\end{remark}

\begin{example}\label{ex:not-circuit-like}
The following diagram is not circuit-like, since  condition
\ref{circuit-like:2} fails.
\[
\exnotcircuitlike \qquad\qquad \exnotcircuitlikebis
\]
There is only one possible path covering of this diagram;  the
indicated cycle runs contrary to the path.
\end{example}

\begin{example}\label{ex:really-circuit-like}
  The following diagram is circuit-like, and, as shown, is equivalent
  to something which clearly \emph{looks} like a circuit.
\[
\excircuitlikei = \excircuitlikeii
\]
Of course, the right-hand diagram is \emph{not} circuit-like because
it does not satisfy condition \ref{circuit-like:3}; indeed, it reduces
to the left-hand diagram by two applications of the spider rule.
\end{example}

As the preceding example shows, the definition  of 
circuit-like---in particular the technical 
third condition---is rather stronger than strictly necessary to ensure
that a diagram corresponds to a valid circuit.  For example, it forces
the two-qubit gates to be \CX rather than \CZ.  However it is not hard
to prove the following result:

\begin{proposition}
  \label{prop:circuit-like-implies-unitary-embedding}
  If $D:n\to m$ is a circuit-like diagram, then $\denote{D}$ is a
  unitary embedding $Q^n \to Q^m$; conversely, if $\denote{D}$ is a unitary
  embedding, then the exists some circuit-like $D'$ such that  $D =
  D'$ by the rules of the \zxcalculus.
\end{proposition}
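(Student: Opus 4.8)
The plan is to treat the two implications separately: the forward direction is essentially bookkeeping, while the converse is where the real work — and the real difficulty — lies.

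\textbf{Forward direction.} Suppose $D:n\to m$ is circuit-like with path cover $\mathcal{P}$. Since every vertex lies on exactly one path and boundary vertices have degree $1$, each interior vertex carries exactly two path-edges (one incoming, one outgoing), all its other edges being non-path edges; by (C3) it is then a phase spider, a red spider, a Hadamard, or one half of a $\CX$ gate (two degree-$3$ spiders of opposite colour joined by a single non-path edge). I would first use condition (C2) to show that the ``happens-before'' preorder on interior vertices generated by the path orientations, together with forcing the two endpoints of each $\CX$-edge to be simultaneous, is acyclic — this is precisely the bad configuration (C2) excludes. Peeling off minimal layers inductively (each layer being a single one-qubit gate, one $\CX$, or a state preparation $\ket{0}$ or $\ket{+_\alpha}$ at the start of an ancilla path) and invoking the graph-isomorphism principle of Section~\ref{sec:zxcalculus}, one sees that $D$ is equal to a diagram in standard circuit layout. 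Each layer acts as an elementary gate on one or two wires and as the identity on the rest, and by direct calculation (cf.\ Corollary~\ref{cor:zx-is-universal}) $\denote{\cdot}$ sends each such gate and each state preparation to a unitary embedding (an isometry); since isometries are closed under composition and tensor, $\denote{D}$ is a unitary embedding $Q^n\to Q^m$. The only non-routine point is extracting this layered order from (C1)--(C2); I would do it by induction, repeatedly removing a minimal layer and noting that the remainder is again circuit-like.

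\textbf{Converse, the construction.} Suppose $\denote{D}$ is a unitary embedding $V:Q^n\to Q^m$, so $m\ge n$. I would extend $V$ to a unitary $U$ on $Q^m$ with $V = U\circ(\id{Q^n}\otimes\ket{0}^{\otimes(m-n)})$, which is always possible. By Theorem~\ref{thm:divincenzo-universality} and Corollary~\ref{cor:zx-is-universal} there is a diagram $C$ built by composing and tensoring $Z_\alpha$, $H$ and $\CX$ gates with $\denote{C}=U$; prepending $m-n$ copies of the $\ket{0}$-preparation and then reducing with the spider, anti-loop and Hopf rules until the underlying graph is simple and $3$-coloured yields a diagram $D'$. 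One checks that $D'$ is circuit-like: its paths are the circuit wires, every wire runs to an output, and the only multi-path edges are the ``vertical'' $\CX$-edges, so (C2) holds for exactly the reason it holds for any left-to-right circuit. By construction $\denote{D'}=\denote{D}$.

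\textbf{The obstacle.} The remaining step — concluding $D=D'$ \emph{by the rules of the \zxcalculus}, and not merely $\denote{D}=\denote{D'}$ — is the delicate one, since the calculus is not complete (Remark~\ref{rem:not-complete}), so syntactic equality does not follow automatically from equality of semantics. I see two routes. The heavy one is to establish completeness of the \zxcalculus on the fragment of unitary-embedding-valued (circuit-like) diagrams: this is weaker than full completeness and plausible, but it is really the content of the proposition, and I would not expect it to be short. The lighter route — which I suspect is what ``it is not hard'' refers to — is to observe that in the situations where the proposition is actually used, $D$ is itself assembled from circuit gates, graph-state components and local operations, so that a direct rewriting of $D$ into standard circuit layout is available via the spider, $H$-commute, bialgebra, copying and Hopf rules, the unitary-embedding hypothesis guaranteeing that no stray effects survive uncancelled. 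I would carry out the argument at that level of generality and explicitly flag the unrestricted syntactic claim as the point that genuinely needs care.
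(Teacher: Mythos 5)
The paper states this proposition without proof---it is introduced only with the remark that ``it is not hard to prove''---so there is no official argument to compare yours against; I can only judge the proposal on its own merits. Your forward direction is essentially the evident intended argument and is sound in outline: condition (C3) forces every non-path edge to join a green and a red spider (so the two-qubit interactions are \CX-type), condition (C2) yields a consistent temporal layering, and each layer denotes an isometry (up to the global scalars that Remark~\ref{rem:scalars} discards), so $\denote{D}$ is a unitary embedding. The only slips are cosmetic: a path-initial interior vertex carries one path edge rather than two (you do treat these later as preparations), and spiders may carry several non-path edges, which the spider rule splits into successive \CX gates.

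The genuine gap is the converse, and you have named it rather than closed it. Universality (Theorem~\ref{thm:divincenzo-universality}, Corollary~\ref{cor:zx-is-universal}) only produces a circuit-like $D'$ with $\denote{D'}=\denote{D}$; the proposition asserts $D=D'$ \emph{by the rules of the calculus}, and by Remark~\ref{rem:not-complete} this does not follow from equality of denotations. Your ``heavy route'' is precisely the missing content---a completeness/circuit-extraction statement for isometry-valued diagrams---and nothing in the paper's rule set is shown to deliver it: the extraction results the paper does prove (e.g.\ Theorem~\ref{thm:flow-implies-circuit-like}, and the generalised-flow results of the cited work) all assume flow-like structure on the diagram, not merely that its denotation is an isometry. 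Your ``lighter route'' proves a different, weaker statement (correctness of the particular diagrams arising in the applications), not the proposition as stated. So as a proof of the converse the proposal is incomplete---though, in fairness, your diagnosis correctly exposes that the paper's unproved claim is substantially stronger than anything it establishes elsewhere, and any honest proof would have to supply exactly the extraction argument you flag.
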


\subsection{Syntax and semantics II: measurements}
\label{sec:semantics}

While the version of the \zxcalculus we have presented so far can
represent the projection onto some measurement outcome, for example we
have $\denote{\greencounit} = \bra{+}$, the physical process of
measurement, including its non-deterministic aspect, cannot be
represented.  To address this we now introduce the notion of a
\emph{$\mathcal{V}$-labelled diagram}.  This will require a
modification to the syntax, and a new interpretation based on
superoperators, rather than \fdhilb.

\begin{definition}\label{def:conditional-diagram}
  Let $\mathcal{V}$ be some set of variables.  A \emph{conditional
    diagram} is a diagram (cf Definition~\ref{def:diagram}) where each
  $Z$ or $X$ vertex with $\alpha \neq 0$ is additionally labelled by a
  set $\mathcal{U} \subseteq \mathcal{V}$.
\end{definition}

If a vertex is labelled by a $\mathcal{U} \neq \emptyset$ then it is
called \emph{conditional}, otherwise it is \emph{unconditional}.  A
diagram with no conditional vertices is called unconditional.  The 
allowed vertices of $\mathcal{V}$-labelled diagrams are shown in
Figure~\ref{fig:vertex-types-with-variables}.  

The equational rules must also be modified to take account the labels:
certain rewrites are only allowed when the variable sets agree, in a
sense that will be made clear below.  The updated rules are shown in
Figure~\ref{fig:rewrite-rules-bis}.

\begin{figure}
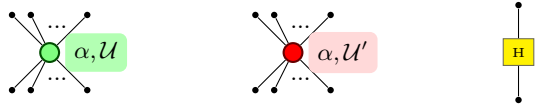

  \centering
    \greenspider{$\alpha,\mathcal{U}$}\hspace{1cm}
    \redspider{$\alpha,\mathcal{U'}$}\hspace{1cm}
    \hgate
  \caption{Interior vertices of $\mathcal{V}$-labelled diagrams}
  \label{fig:vertex-types-with-variables}
\end{figure}

\begin{figure}
  \centering
  \fbox{
  \begin{minipage}[c]{1.0\linewidth}
  \centering
\[
\begin{array}{ccc}
\greenSpiderLHS
\rTo
\greenSpiderRHS
& 
\qquad \quad 
&
\greenAntiLoopLHS
\rTo
\greenAntiLoopRHS
\\
\text{(spider)}&&\text{(anti-loop)}
\\ \\[-1ex]
\greenCommutesLHS
\rTo
\greenCommutesRHS
&
\qquad \quad 
&
\greenIdentityLHS
\rTo
\greenIdentityRHS
\\
\text{($\alpha$-commute)} && \text{(identity)} 
\\ \\[-1ex]
\copyingLHS
\rTo
\copyingRHS
&
\qquad \qquad
&
\piCommutesLHS
\rTo
\piCommutesRHS
\\ 
\text{(copying)}  &&  \text{($\pi$-commute)}
\\ \\[-1ex]
\bialgebraLHS
\rTo
\bialgebraRHS
&
\qquad \qquad
&
\hopfLawLHS
\rTo
\hopfLawRHS
\\
\text{(bialgebra)} &&\text{(hopf)} 
\\ \\[-1ex]
\removeGreenLHS
\rTo
\removeGreenRHS
& 
\qquad \qquad
&
\HsquaredLHS
\rTo
\HsquaredRHS
\\ \\[-1ex]
\text{($H$-commute)}  &&\text{($H$-cancel)} 
\end{array}
\]
\end{minipage}}
\caption{Rewrite rules for the \zxcalculus with conditional vertices.
  We present the rules for the $Z$ subsystem; to obtain the complete
  set of rules exchange the colours in the rules shown above.  }
\label{fig:rewrite-rules-bis}
\end{figure}

For any given $\mathcal{V}$, the $\mathcal{V}$-labelled diagrams,
quotiented by the equations of Figure~\ref{fig:rewrite-rules-bis} form
a $\dag$-compact category denoted \DDV; $\DD(\emptyset)$ is exactly
the category \DD defined earlier.

\begin{definition}\label{def:valuation}
  A function $v:\mathcal{V}\to \{0,1\}$ is called a \emph{valuation}
  of $\mathcal{V}$; for each valuation $v$, we define a functor
  $\hat{v}:\DDV \to \DD$ which produces a new diagram by relabelling
  the $Z$ and $X$ vertices.  If a vertex $z$ is labelled by $\alpha$
  and $\mathcal{U}$, then $\hat{v}(z)$ is labelled by $0$ if
  $\sum_{s\in \mathcal{U}} v(s) = 0$ and $\alpha$ otherwise.
\end{definition}

The modified rewrite rules for $\mathcal{V}$-labelled diagrams are
simply the original equations, with the constraint that they should be
true in all valuations.

\begin{definition}\label{def:semantic-pt2}
  Let $D$ be a diagram in \DDV such that every variable of
  $\mathcal{V}$ occurs in $D$.  Define a symmetric monoidal functor
  $\denoteV{\cdot}: \DDV \to \superop$ by setting $\denoteV{1} =
  \mathbb{C}^2\times\mathbb{C}^2$ and, for every diagram $D$, defining
  $\denoteV{D}$ as the superoperator constructed by summing over all
  the valuations of $\mathcal{V}$:
  \[
  \rho \mapsto \sum_{v\in 2^\mathcal{V}}
  \denote{\hat{v}(D)}\rho\denote{\hat{v}(D)}^\dag\;.
  \]
\end{definition}

\begin{proposition}\label{prop:dag-preserved}
  For any diagram $D$ in \DDV, we have $\denoteV{D^\dag} = \denoteV{D}^\dag$.
\end{proposition}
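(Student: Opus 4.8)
The plan is to reduce the statement to the corresponding fact for the unconditional functor, namely that $\denote{E^\dag}=\denote{E}^\dag$ for every diagram $E$ in \DD (the analogous proposition established just after Theorem~\ref{thm:soundness-i}), together with a single combinatorial lemma: for every valuation $v:\mathcal{V}\to\{0,1\}$ one has $\hat v(D^\dag)=\hat v(D)^\dag$ as diagrams in \DD. In other words, the relabelling functors $\hat v:\DDV\to\DD$ commute with the dagger. Given this lemma, the proposition follows by unwinding Definition~\ref{def:semantic-pt2} and using that the dagger is an involution.

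First I would establish the lemma. Recall that on \DDV the dagger acts on a diagram by exchanging inputs with outputs (reversing the open graph accordingly), negating every angle $\alpha$, and leaving each variable set $\mathcal{U}$ untouched; and that $\hat v$ acts vertex-by-vertex, sending a $Z$ or $X$ vertex carrying data $(\alpha,\mathcal{U})$ to the same vertex carrying angle $0$ when $\sum_{s\in\mathcal{U}}v(s)=0$ and angle $\alpha$ otherwise, while fixing $H$ vertices and the underlying open graph. Since both operations preserve the graph structure and act locally on the $Z$/$X$ vertex labels, it suffices to compare the two composite relabellings on a single vertex with data $(\alpha,\mathcal{U})$. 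Applying $\hat v$ then $(-)^\dag$ produces a vertex with angle $-0=0$ or $-\alpha$, according to whether $\sum_{s\in\mathcal{U}}v(s)$ vanishes; applying $(-)^\dag$ then $\hat v$ produces a vertex with data $(-\alpha,\mathcal{U})$ first, hence angle $0$ or $-\alpha$ under the same condition. The two agree, and the case $\alpha=0$ (no label) is trivial since $-0=0$. Hence $\hat v\circ(-)^\dag=(-)^\dag\circ\hat v$, proving the lemma. Note also that $D^\dag$ has exactly the same $Z$/$X$ vertices as $D$ with the same variable sets, so the standing hypothesis that every variable of $\mathcal{V}$ occurs carries over from $D$ to $D^\dag$, and $\denoteV{D^\dag}$ is defined.

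Then I would assemble the proof. Unfolding Definition~\ref{def:semantic-pt2}, $\denoteV{D^\dag}$ is the superoperator $\rho\mapsto\sum_{v\in 2^\mathcal{V}}\denote{\hat v(D^\dag)}\,\rho\,\denote{\hat v(D^\dag)}^\dag$. Replacing $\hat v(D^\dag)$ by $\hat v(D)^\dag$ via the lemma, then using $\denote{\hat v(D)^\dag}=\denote{\hat v(D)}^\dag$ together with $(\,\cdot\,)^{\dag\dag}=\mathrm{id}$, this becomes $\rho\mapsto\sum_{v\in 2^\mathcal{V}}\denote{\hat v(D)}^\dag\,\rho\,\denote{\hat v(D)}$. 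On the other hand $\denoteV{D}$ is the completely positive map with Kraus family $\{\denote{\hat v(D)}\}_{v\in 2^\mathcal{V}}$, and the dagger in \superop of a completely positive map sends a Kraus family to the family of adjoints, so $\denoteV{D}^\dag$ is exactly $\rho\mapsto\sum_{v\in 2^\mathcal{V}}\denote{\hat v(D)}^\dag\,\rho\,\denote{\hat v(D)}$. The two superoperators coincide, giving $\denoteV{D^\dag}=\denoteV{D}^\dag$.

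The only real content is the commutation lemma $\hat v\circ(-)^\dag=(-)^\dag\circ\hat v$, and even that is a short bookkeeping check once the explicit descriptions of $\hat v$ and of the dagger on \DDV are in hand; the remaining work is merely to invoke the correct description of the dagger on \superop (the adjoint-of-Kraus-operators formula from the superoperator construction), which is standard. No scalar factors intervene, since the identities used ($\denote{E^\dag}=\denote{E}^\dag$ in \fdHilb and $\dag$ being an involution) are exact.
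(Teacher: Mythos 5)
Your proof is correct. The paper in fact states this proposition without any proof, treating it as routine; your argument --- the commutation lemma $\hat v\circ(-)^\dag=(-)^\dag\circ\hat v$ (which holds because the dagger only negates angles while leaving variable sets and the underlying open graph intact), followed by the unconditional fact $\denote{E^\dag}=\denote{E}^\dag$ and the adjoint-of-Kraus description of the dagger in \superop --- is precisely the verification the paper leaves implicit, and your remark that $D^\dag$ still contains every variable of $\mathcal{V}$, so that $\denoteV{D^\dag}$ is defined, is a worthwhile detail the paper does not even mention.
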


\begin{example}\label{ex:measurement}
  The following diagram represents the measurement of a single qubit
  in the basis $\ket{\pm_\alpha} = \ket{0} \pm e^{i\alpha}\ket{1}$.  
  \[
  \meas
  \]
  The variable $i$
  encodes which of the two possible outcomes occurred, as can be seen
  by computing the  denotation:
  \[
  \begin{split}
   \rho \mapsto  \sum_{v = 0,1} \denote{\hat{v}(PIC)}\rho
    \denote{\hat{v}(PIC)}^\dag & = \bra{+_\alpha}\rho\ket{+_\alpha}
    +\bra{+_\alpha}Z\rho Z\ket{+_\alpha} \\ &=
    \bra{+_\alpha}\rho\ket{+_\alpha} +\bra{-_\alpha}\rho
    \ket{-_\alpha}
  \end{split}
  \]
\end{example}

\begin{example}\label{ex:classically-controlled-Z}
  A classically controlled Pauli-$Z$ operation is represented by the
  following diagram:
  \[
  \greenphasesig{\alpha}{\{i\}}
  \]
\end{example}

\begin{example}  \label{ex:teleportation}
  Combining the two previous examples, we present the teleportation
  protocol \cite{BBCJW:1993:teleport}.
  \[
  \teleportLabelled
  \]
  The block labelled ``B'' is the circuit to prepare a Bell state,
  while ``A'' represents the Bell basis measurement of
  two qubits.  Notice that the same variables label the measurements as the
  corresponding correction operators, indicating that these operations
  must be correlated.  We can rewrite this diagram to prove the
  correctness of the protocol.
  \begin{multline*}
    \teleportProofi = \teleportProofii \\ = \teleportProofiii =
    \teleportProofiv = \teleportProofv
  \end{multline*}
  Since all the conditional vertices are removed by the final step,
  we can conclude that the teleportation protocol is
  \emph{deterministic}.
\end{example}

\begin{example}\label{ex:correlated-measurements}
It is also possible to write down diagrams which correspond to quite
unphysical operations.  For example the diagram
\[
\meas \meas
\]
represents two one-qubit measurements whose outcomes are always
perfectly correlated, regardless of the input.  This is of course
plainly impossible in quantum mechanics.

To avoid such situations, each single qubit measurement must be
labelled by a fresh variable; any other vertex labelled by the same
variable must be interpreted as an operation which is classically
controlled by the outcome of that measurement.  
\end{example}

\section{The measurement calculus}
\label{sec:graph-stat-meas}

Now we turn our attention to the details of \emph{measurement-based quantum
computation} (\mbqc).  While there are several approaches to \mbqc, we
will be concerned only with the \emph{one-way model} (\owqc)
introduced by Raussendorf and
Briegel
\cite{Raussendorf-2001,RauBri:OnewayQC:2003,H.-J.-Briegel:2009hb}.

Whereas the quantum circuit model consists of reversible unitary
gates, the \owqc carries out computation via the irreversible state
changes induced by quantum measurements.  The computation begins with
some input qubits coupled to a large entangled resource state, called
a cluster state or \emph{graph state}. Single qubit measurements are
performed upon the state. Since the measured qubits are no longer
entangled with the rest of the resource we can view the measurement process as
consuming the resource, hence the name \emph{one-way}.  The
computation proceeds by a number of rounds of measurement, where the
choice of measurement performed in later rounds may depend on the
observed outcomes of earlier measurements, until finally only the
output qubits remain unmeasured.  It may then be necessary to apply
some single-qubit unitary corrections to obtain the desired result.
Aside from the initial creation of the resource state, all the
operations of the \owqc act locally on a single qubit, so it is
perhaps surprising that the \owqc is universal for quantum computing:
any unitary operation on $n$ qubits may be computed by the \owqc.

We shall formally describe the \owqc using the syntax of the
\emph{measurement calculus} of Danos, Kashefi and Panangaden
\cite{DanosV:meac}.  Measurement calculus programs, called
\emph{patterns}, consist of a (finite) set of qubits, a (finite) set
of Boolean variables called \emph{signals}, and a
sequence of \emph{commands} $C_n \ldots C_1 C_0$, read
from right to left.  The possible commands are:
\begin{itemize}
\item $N_i$ : initialise qubit $i$ in the state $\ket{+}$.
\item $E_{ij}$ : entangle qubits $i$ and $j$ by apply applying a \CZ
  operation.
\item $M^\alpha_i$ : measure qubit $i$ in the basis
  $\ket{\pm_\alpha} \coloneqq \frac{1}{\sqrt{2}}(\ket{0} \pm
  e^{i\alpha}\ket{1})$.  We assume the measurements are destructive,
  so qubit $i$ will play no further role in the computation.
\item $X_i$ and $Z_i$ : apply a 1-qubit Pauli $X$ (resp. $Z$) operator
  to qubit $i$.  These are called \emph{corrections}.
\end{itemize}
The measurement and correction commands may be classically controlled
by the value of one or more signals.
\begin{itemize}
\item ${}^s[M^\alpha_i]^t$ : perform the measurement $M^{(-1)^s\alpha +t\pi}_i$.
\item $X_i^s$ and $Z_i^s$ : if $s = 1$ perform the command $X_i$
  (resp. $Z_i$), otherwise do nothing.
\end{itemize}
In principle the signals could obtain their values from any source,
however they will always be associated to the outcome of a measurement
already performed in the same pattern\footnote{ It is easy to
  generalise the conditional commands to allow classical control by
  some arithmetic expression of signals, however this is not required
  here, and indeed does not increase the expressiveness of the
  measurement calculus.  }.  If the $\ket{+_\alpha}$ outcome is
obtained, the corresponding signal is set to zero; otherwise it set to
one.  This introduces the third of three determinacy conditions:
\begin{enumerate}
\item The initialisation of a qubit is the first command acting on it.
\item The measurement of a qubit is the last command acting on it.
\item No command depends upon a measurement not already performed.
\end{enumerate}
Any qubits not initialised are \emph{inputs}; any qubit not measured
is an \emph{output}.

\begin{example}\label{ex:mc-simple-example}
  Consider the 2-qubit pattern
  \[
  \patP_H \coloneqq X_2^{s_1} M^0_1 E_{12} N_2\;.
  \]
  Since qubit 1 is not initialised, it must be an input; similarly
  qubit 2 is an output.  The only signal is associated to the
  measurement of qubit 1.  Suppose that the input qubit is in state
  $\ket\psi = a\ket0 + b\ket 1$.  The execution proceeds as follows:
  \begin{multline*}
    \ket\psi \rTo^{N_2} \ket\psi\otimes\ket+ 
     = a\ket{00} + a\ket{01} + b\ket{10} + b\ket{11} \\
     \rTo^{E_{12}} a\ket{00} + a\ket{01} +  b\ket{10} - b\ket{11} \\
    = \ket+(a\ket+ + b\ket-) + \ket-(a\ket+ - b\ket-)
  \end{multline*}
  So far all we have done is construct the initial entanglement.  The
  next step is the measurement $M^0_1$.  Suppose that the result of
  the measurement is 0, \ie the projection onto $\ket{+}$; then qubit
  1 is eliminated and we have the new state $a\ket+ + b\ket-$.  Since
  the signal $s_1$ is zero, there is no need to perform the final
  correction.  On the other hand, should the outcome of the
  measurement be 1, the resulting state will be $a\ket+ - b\ket-$, and
  since $s_1 = 1$ the $X_2$ correction must be applied, again producing a
  final state of $a\ket+ + b\ket-$.  Hence the overall effect of
  $\patP_H$ is independent of the outcome of the measurement: in
  either case the computer applies a   Hadamard gate upon its input.
\end{example}

This example illustrates a key feature of the \owqc.  The measurement
introduces a branch in the execution where one outcome corresponds to
the `correct' behaviour, in the sense
that the projection achieves the desired computational effect, and one
branch contains  an `error' which must be corrected later
in the pattern.  More generally, a pattern with $n$ measurements
potentially performs $2^n$ different linear maps on its input, which
are called the \emph{branch maps};  the branch where all the
measurements output 0 is called the \emph{positive branch}, and we
make the convention that this branch is the computation that we intend
to carry out.  The pattern is deterministic if all the branch maps
have the same effect on the input as the positive branch.

\begin{remark}\label{rem:native-mc-sementics}
  The concept of branch map is used in \cite{DanosV:meac} to define
  the semantics of the measurement calculus.  We omit this, because we
  shortly provide a semantics via a translation into the \zxcalculus.
  The interpretation presented here is equivalent to
  that of Danos, Kashefi, and Panangaden \cite{DanosV:meac}.
\end{remark}

\begin{theorem}\label{thm:emc-form}
  For any pattern \patP, there exists an equivalent pattern---in the
  sense of having the same semantics---whose command
  sequence has the form 
  \[
  \patP^* = C M E N
  \]
  where $C$, $M$, $E$, and $N$ are sequences of commands consisting
  exclusively of corrections, measurements, entangling operations, and
  initialisations respectively.  
\end{theorem}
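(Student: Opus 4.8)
The plan is to prove this by a rewriting argument on command sequences, following the standardisation theorem of Danos, Kashefi and Panangaden, and to use the \zxcalculus (via Theorem~\ref{thm:soundness-i}) as a convenient source of soundness for each rewrite step. First I would collect the commutation lemmas between the four kinds of command. The entangling commands $E_{ij}$ are diagonal, so they commute with one another and with any command acting on qubits other than $i,j$; moreover, since the initialisation of a qubit is its first command, every $E_{ij}$ already lies to the left of $N_i$ and $N_j$, so no entangler ever needs to cross an $N$ on one of its own qubits. Dually, a correction $X^s_i$ or $Z^s_i$ commutes freely with measurements and entanglers not touching qubit $i$, with other corrections, and with every $N$; the only non-trivial relations are those in which a correction must cross a command on its own qubit. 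Pushing a correction past an entangler uses that $\CZ$ conjugates $X\otimes\id{Q}$ to $X\otimes Z$ and fixes $Z\otimes\id{Q}$, which in command form reads $E_{ij}X^s_i = X^s_iZ^s_jE_{ij}$ and $E_{ij}Z^s_i = Z^s_iE_{ij}$; pushing a correction into a measurement on the same qubit uses $X\ket{+_\alpha}\propto\ket{+_{-\alpha}}$ and $Z\ket{\pm_\alpha}=\ket{\mp_\alpha}$, which absorb the correction into the two signal exponents of a dependent measurement ${}^s[M^\alpha_i]^t$. Each of these identities is semantics-preserving: it can be verified directly on the branch maps, and each command has a \zxcalculus representative (following Examples~\ref{ex:measurement} and~\ref{ex:classically-controlled-Z}) for which the identity is an instance of the rules of Figure~\ref{fig:rewrite-rules-bis}, hence sound by Theorem~\ref{thm:soundness-i}.

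Second, I would orient these lemmas into a rewrite system whose normal forms have the shape $CMEN$: $N$-commands are pushed as far right as possible, $E$-commands next, measurements next, corrections to the far left. The determinism conditions are exactly what makes every move legal and every obstruction resolvable: condition~(1) guarantees that an $N_i$ is never blocked on the right by a command on qubit $i$, so the $N$-block forms by free commutation alone; condition~(2) guarantees that a correction $C_i$ never occurs after $M_i$, so a correction standing to the right of the measurement of its own qubit \emph{must} be of the absorbable kind; and condition~(3) keeps the whole process self-consistent. Consequently, a pattern to which no rewrite applies already has all $N$'s, then $E$'s, then $M$'s, then $C$'s, i.e.\ it is in $CMEN$ form. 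Since only the shape of the normal form is asserted, confluence of the system is not needed.

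The step I expect to be the main obstacle is termination, because the rule $E_{ij}X^s_i \to X^s_iZ^s_jE_{ij}$ \emph{creates} a new correction command, so a naive count of mis-ordered adjacent pairs need not decrease. The remedy is a well-founded lexicographic measure: a correction is only ever created immediately to the left of the entangler that spawned it, there are finitely many entanglers, and thereafter each correction only moves leftward or is absorbed into a measurement, so one can weight corrections by their position and combine this with the number of $E$-before-$M$ inversions and $N$-before-other inversions; each rewrite then strictly decreases the measure. With termination in hand I would finish with two bookkeeping remarks: the resulting measurement block carries a partial order induced by the dependencies, which can be linearised arbitrarily without changing the semantics, so a genuine command sequence $M$ exists; and the construction keeps the pattern runnable, so $\patP^*$ is a bona fide pattern with the same semantics as $\patP$, as required.
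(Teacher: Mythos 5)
Your proposal is correct and follows essentially the same route as the paper, which does not reprove this result but defers to the standardisation procedure of Danos, Kashefi and Panangaden, whose two ingredients---the commutation relations between the Pauli corrections and \CZ (\eg $E_{ij}X^s_i = X^s_i Z^s_j E_{ij}$) and the absorption identity ${}^s[M^\alpha_i]^t = M^\alpha_i Z^t_i X^s_i$---are exactly the rewrite lemmas you assemble, with the determinacy conditions ensuring no other crossings are needed. Your extra attention to termination of the oriented system (the spawned $Z^s_j$ corrections) is a legitimate point and your sketched measure is in line with the bookkeeping in the cited proof.
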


The pattern $\patP^*$ is said to be in \emph{standard form}; 
any pattern may be transformed into an equivalent by a rewrite
procedure presented in \cite{DanosV:meac}.  There are two main ingredients
to this procedure.  The first are the commuting relations between the
Pauli matrices and the \CZ.  The second is the following identity,
allowing corrections to be absorbed into conditional measurements:
\begin{equation}\label{eq:no-cond-meas-in-mc}
  {}^s[M^\alpha_i]^t  = M^\alpha_i Z_i^t X_i^s \;.
\end{equation}
However, this will also function as the \emph{definition} of the
conditional measurement, in which corrections are  only conditional
commands.  Needless to say, with
this convention, Theorem~\ref{thm:emc-form} no longer applies.  We
will adopt the following convention for the rest of the chapter:
patterns are always assumed to be in standard form, including
conditional measurements, which are then replaced with unconditional
measurements, as per (\ref{eq:no-cond-meas-in-mc}).

\begin{definition}\label{def:pattern-diag-translation}
  Let \patP be a pattern with qubits ranged over by $V$, and let $I,O
  \subseteq V$ denote its input and output qubits respectively.  We
  define a diagram $\Dp : \sizeof{I} \to \sizeof{O}$ in
  $\DD(V\setminus I)$, by composing the subdiagrams corresponding to
  the command sequence of \patP, as shown in table
  \ref{tab:translation}.
\end{definition}
\begin{table}[h]
  \centering
  \begin{tabular}{c|c|c|c|c}
    $N_i$ & $E_{ij}$ & $M_i^\alpha$ & $X_i^s$ & $Z_i^s$ \\
    \hline
    \rule{17mm}{0mm}&
    \rule{17mm}{0mm}&
    \rule{17mm}{0mm}&
    \rule{17mm}{0mm}&
    \rule{17mm}{0mm}\\
    \inline{\greenunit}&
    \inline{\czed}&
    \meas &
    \inline{\redphasesig{\pi}{\{s\}}}&
    \inline{\greenphasesig{\pi}{\{s\}}}
  \end{tabular}
  \caption{Translation from pattern to diagram.}
  \label{tab:translation}
\end{table}

Since the semantics of the diagram $\Dp$ capture the  meaning of the
pattern \patP, reasoning in the \zxcalculus allows properties of the
the pattern to be derived by graphical manipulations of \Dp.  Let's
reconsider the Example~\ref{ex:mc-simple-example}.

\begin{example}\label{ex:mc-simple-diag}
  The pattern $\patP_H$ produces the following diagram:
  \[
  D(\patP_H) \coloneqq \inline{%
\beginpgfgraphicnamed{pat-hadamard-i}
\InputIfFileExists{pat-hadamard-i.tikz}{}{\input{./tikz/pat-hadamard-i.tikz}}
\endpgfgraphicnamed} \;.
  \]
  Note that the vertical wires correspond to the physical qubits; we
  shall stick to this convention whenever convenient.
  Now we can deduce that $\patP_H$ computes the Hadamard gate
  using purely diagrammatic reasoning:
\begin{multline*}
  \inline{%
\beginpgfgraphicnamed{pat-hadamard-i}
\InputIfFileExists{pat-hadamard-i.tikz}{}{\input{./tikz/pat-hadamard-i.tikz}}
\endpgfgraphicnamed} = \inline{%
\beginpgfgraphicnamed{pat-hadamard-ii}
\InputIfFileExists{pat-hadamard-ii.tikz}{}{\input{./tikz/pat-hadamard-ii.tikz}}
\endpgfgraphicnamed} =
  \inline{%
\beginpgfgraphicnamed{pat-hadamard-iii}
\InputIfFileExists{pat-hadamard-iii.tikz}{}{\input{./tikz/pat-hadamard-iii.tikz}}
\endpgfgraphicnamed} \\ = \inline{%
\beginpgfgraphicnamed{pat-hadamard-iv}
\InputIfFileExists{pat-hadamard-iv.tikz}{}{\input{./tikz/pat-hadamard-iv.tikz}}
\endpgfgraphicnamed} = \hgate
\end{multline*}
\end{example}

\begin{example}\label{ex:define-mbqc-cnot}
  The ubiquitous CNOT operation can be computed by the pattern 
  $\patP_{\CX} = X^{3}_4 Z^{2}_4Z^{2}_1 M^0_3 M^0_2 E_{13}E_{23} E_{34} N_3
  N_4$ \cite{DanosV:meac}.  This yields the diagram
  \[
  D(\patP_{\CX}) = \cnoti,
  \]
  with qubit 1 the leftmost, and qubit 4 is the rightmost.  Now, we
  can prove the correctness of the pattern by rewriting:
  \[
  \begin{split}
    \cnoti[0.75] &\rTo^*
    \cnotii[0.6] \rTo^*
    \cnotiii[0.6] \\
    {}& \rTo^* 
    \cnotiv[0.6] \rTo^*
    \cnotv[0.6] \rTo^*
    \cnotvi[0.75] 
  \end{split}
  \]
  One can clearly see in this example how the non-determinism introduced
  by measurements is corrected by conditional operations later in the
  pattern.
  The possibility of performing such corrections depends on the
  \emph{geometry} of the pattern, the entanglement graph implicitly defined
  by the pattern.  This will be the main concern of the next section.
\end{example}

\section{Determinism and flow}
\label{sec:determinism-flow}

Consider the pattern $\mathfrak{N} = Z^{s_2}_1 M^0_2 M^\alpha_3 E_{12} E_{23}
N_2 N_3$.  Working in the \zxcalculus, it can be rewritten as follows:
\begin{multline*}
  \inline{%
\beginpgfgraphicnamed{no-flow-example-i}
\InputIfFileExists{no-flow-example-i.tikz}{}{\input{./tikz/no-flow-example-i.tikz}}
\endpgfgraphicnamed} = \inline{%
\beginpgfgraphicnamed{no-flow-example-ii}
\InputIfFileExists{no-flow-example-ii.tikz}{}{\input{./tikz/no-flow-example-ii.tikz}}
\endpgfgraphicnamed} \\ = 
  \inline{%
\beginpgfgraphicnamed{no-flow-example-iii}
\InputIfFileExists{no-flow-example-iii.tikz}{}{\input{./tikz/no-flow-example-iii.tikz}}
\endpgfgraphicnamed} = \inline{%
\beginpgfgraphicnamed{no-flow-example-iv}
\InputIfFileExists{no-flow-example-iv.tikz}{}{\input{./tikz/no-flow-example-iv.tikz}}
\endpgfgraphicnamed} =
  \inline{%
\beginpgfgraphicnamed{no-flow-example-v}
\InputIfFileExists{no-flow-example-v.tikz}{}{\input{./tikz/no-flow-example-v.tikz}}
\endpgfgraphicnamed}
\end{multline*}

Hence we have a pattern which is non-deterministic, acting as a
$Z$-rotation by either $\alpha$ or $-\alpha$ depending on the outcome of
measurement 2.  Furthermore, unlike the previous examples, there is no
way to remove the dependence on $s_2$ by correction at qubit 1, or
by conditional measurement at qubit 3.

As this example shows, not every pattern performs a deterministic
computation, and this possibility depends not just upon the
conditional operations introduced by the programmer, but also the
structure of the entangled resource used in the computation.  This
structure is called the \emph{geometry} of the pattern.  A pattern
can perform a deterministic computation if its geometry has a graph
theoretic property called \emph{flow}.
Examples~\ref{ex:mc-simple-example} and \ref{ex:define-mbqc-cnot} have
flow, while the pattern $\mathfrak{N}$ above does not.

Flow is a sufficient property for determinism, but not a necessary
one: it guarantees \emph{strong} and \emph{uniform} determinism.
Strong determinism means that all the branch maps of the
pattern are equal, while uniformity means that the pattern is
deterministic for all possible choices of its measurement angles.
Before moving on, we make the following obvious observation:

\begin{theorem}\label{thm:circuit-like-implies-deterministic}
  If \Dp can be rewritten to an unconditional diagram then \patP is
  strongly deterministic.
\end{theorem}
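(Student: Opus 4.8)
The plan is to reduce the statement, via soundness, to the single observation that the relabeling functors $\hat v$ do nothing to a diagram that is already unconditional.

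First I would recall what ``strongly deterministic'' unwinds to in the present setup. Writing $\mathcal{V} = V\setminus I$, each valuation $v\in 2^{\mathcal V}$ selects one branch of $\patP$ --- the branch in which measurement $i$ reports outcome $v(i)$ --- and, by the branch-by-branch construction of the translation $D(\cdot)$ in Table~\ref{tab:translation} (green unit for $N_i$, the \CZ-gadget for $E_{ij}$, the measurement gadget for $M_i^\alpha$, and the Pauli gadgets for $X_i^s$ and $Z_i^s$), the corresponding branch map of $\patP$ is exactly $\denote{\hat v(\Dp)}$, up to the scalar factors we ignore throughout (cf.\ Remark~\ref{rem:native-mc-sementics} and Remark~\ref{rem:scalars}). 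In particular the positive branch is $\denote{\hat v_0(\Dp)}$ for $v_0$ the all-zero valuation, and $\patP$ is strongly deterministic precisely when the assignment $v\mapsto \denote{\hat v(\Dp)}$ is constant.

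The main argument is then short. By hypothesis $\Dp \rw D'$ for some unconditional $D'$, so in particular $\Dp = D'$ as morphisms of \DDV. Since $\hat v:\DDV\to\DD$ is a functor (Definition~\ref{def:valuation}), applying it gives $\hat v(\Dp) = \hat v(D')$ in \DD for every $v$. But $D'$ has no conditional vertices, so the relabeling carried out by $\hat v$ leaves every vertex of $D'$ untouched: $\hat v(D') = D'$, independently of $v$. Hence $\hat v(\Dp) = D'$ in \DD for all valuations, and composing with the interpretation functor, which is sound by Theorem~\ref{thm:soundness-i}, yields $\denote{\hat v(\Dp)} = \denote{D'}$ in \fdhilb for every $v$. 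Thus all branch maps of $\patP$ agree --- each equals $\denote{D'}$, hence the positive branch --- which is exactly strong determinism.

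The one place that requires genuine care, as opposed to the formal manipulation above, is the identification in the second paragraph of $\denote{\hat v(\Dp)}$ with the branch map picked out by $v$: this amounts to checking that each clause of Table~\ref{tab:translation} reproduces the linear map that the measurement calculus assigns to the corresponding command in the branch determined by $v$, and that these pieces compose correctly under $\hat v$ (using that $\hat v$ is symmetric monoidal and commutes with composition). I expect this bookkeeping --- not any conceptual obstacle --- to be the part that needs the most attention; once it is in hand, the theorem follows immediately from functoriality of $\hat v$ and soundness.
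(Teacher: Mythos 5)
The paper offers no proof of this statement---it is introduced as ``the following obvious observation''---and your argument is exactly the intended justification: equality in \DDV is preserved by each valuation functor $\hat{v}$, an unconditional diagram is fixed by every $\hat{v}$, and soundness (Theorem~\ref{thm:soundness-i}) transfers the resulting valuation-independence to \fdhilb, so all the maps $\denote{\hat{v}(\Dp)}$ coincide. The bookkeeping you flag---that $\denote{\hat{v}(\Dp)}$ is the branch map selected by $v$---is precisely what Remark~\ref{rem:native-mc-sementics} already takes for granted, so your proof is complete as it stands (up to the paper-wide convention of ignoring scalar factors, Remark~\ref{rem:scalars}).
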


Hence, to show that a pattern is deterministic, it suffices to find
some rewrite sequence which removes all the conditional vertices.
Typically this is done by `pushing' the conditional vertex introduced
by measurement through the diagram until it meets a matching
corrector.  The two conditional vertices then cancel each other out.
The rest of this section will explore when this is possible.

\begin{definition}\label{def:geometry-of-pattern}
  The \emph{geometry} of a pattern \patP, denoted \Gp, is the open graph
  $((V,E),I,O)$ defined by taking the qubits of \patP as vertices $V$, the
  input and output qubits as the sets $I$ and $O$, and defining the
  edge relation by  $v \sim u $ if and only if the  command $E_{vu}$
  occurs in \patP.
\end{definition}

\begin{example}\label{ex:geomtrey-of-cnot}
Consider $\patP_{\CX} =  X^{3}_4 Z^{2}_4Z^{2}_1 M^0_3 M^0_2 E_{13}E_{23}
E_{34} N_3$ as in Example~\ref{ex:define-mbqc-cnot}.  We then have 
\[
\G{\patP_{\CX}} = \inline{%
\beginpgfgraphicnamed{geom-cnot}
\InputIfFileExists{geom-cnot.tikz}{}{\input{./tikz/geom-cnot.tikz}}
\endpgfgraphicnamed}
\]
\end{example}

\begin{definition}\label{def:flow}
  Let $G = ((V,E),I,O)$ be an open graph; a \emph{flow} on $G$
  is a pair $(f,\prec)$, where $f$ is a function $V\setminus O \to
  V\setminus I$ and $\prec$ is a partial order on $V$, satisfying
  \renewcommand{\theenumi}{(F\arabic{enumi})}
  \begin{enumerate}
  \item $f(u) \sim u$; \label{item:flow1}
  \item $u < f(u)$;\label{item:flow2}
  \item  If $f(u) \sim v$ and $u \neq v$ then $u < v$.\label{item:flow3}    
  \end{enumerate}
  \renewcommand{\labelenumi}{\theenumi.}
\end{definition}

Intuitively, the function $f$ specifies a causal successor for every
measured qubit.  Should the measurement at qubit $i$ give the `wrong'
answer, then a conditional operation at qubit $f(i)$ can be used to
correct the resulting error.  The partial order ensures that no
causal loops can form: that is, it is not necessary to apply a correction
to a qubit that has already been measured.  We have the
following:

\begin{theorem}[\cite{Danos2006Determinism-in-}]\label{thm:original-flow}
  If $G$ is an open graph with flow then there exists a strongly
  and uniformly deterministic pattern \patP such that $G = \Gp$.
\end{theorem}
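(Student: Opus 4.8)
The plan is to construct an explicit pattern out of the flow and then verify its determinism by reducing the corresponding \zxcalculus diagram to an unconditional one. Given $G=((V,E),I,O)$ with flow $(f,\prec)$, take $\patP$ to consist of: an initialisation $N_i$ for each $i\in V\setminus I$; an entangling command $E_{uv}$ for each edge $\{u,v\}\in E$; for each non-output qubit $u$ a measurement $M^{\alpha_u}_u$ at an \emph{arbitrary} angle $\alpha_u$ (these free parameters will furnish uniformity); and, following the flow, an $X$-correction on the qubit $f(u)$ together with a $Z$-correction on every neighbour of $f(u)$ other than $u$, each conditioned on the signal $s_u$ of the measurement of $u$. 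By condition~\ref{item:flow2} every corrected qubit is measured strictly later than $u$ (or is an output), so these commands respect the determinacy conditions and $\patP$ is a genuine pattern; putting it in standard form absorbs the $\{s_u\}$-conditioned commands into the later (conditional) measurements in the sense of equation~(\ref{eq:no-cond-meas-in-mc}). By construction $E_{uv}$ occurs in $\patP$ exactly when $\{u,v\}\in E$, so $\Gp = G$ as required.

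It remains to prove that $\patP$ is strongly and uniformly deterministic, and by Theorem~\ref{thm:circuit-like-implies-deterministic} it suffices to rewrite $\Dp$ (Definition~\ref{def:pattern-diag-translation}) to an unconditional diagram. Under the translation, $M^{\alpha_u}_u$ contributes a conditional vertex labelled by $\{s_u\}$ on the wire of $u$, while the corrections contribute conditional vertices labelled by the same set: a conditional $X$ on the wire of $f(u)$ and a conditional $Z$ on each remaining neighbour of $f(u)$. The technical core of the proof is a lemma asserting that a conditional $\pi$-labelled vertex sitting on the wire of $u$ can be transported across the edge $\{u,f(u)\}$ using only the rules of Figure~\ref{fig:rewrite-rules-bis} (packaged, for convenience, as a handful of derived \textbf{corrector rules} built from $\pi$-commute, copying, and the $H$-rules): doing so replaces it by precisely a conditional $X$ on the wire of $f(u)$ and conditional $Z$'s on all the other neighbours of $f(u)$. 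These then meet the conditional vertices contributed by the correctors we inserted and annihilate in pairs, by the spider rule applied in every valuation.

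The main obstacle is not the single-measurement move but its \emph{global} orchestration. One processes the measured qubits in an order compatible with $\prec$ and argues by induction on $(V\setminus O,\prec)$ that, at each stage, the neighbourhood of $u$ and $f(u)$ in the partially rewritten diagram still presents the local configuration the lemma requires --- in particular that earlier cancellations have neither destroyed the edge $\{u,f(u)\}$ nor deposited unexpected conditional material on it. Condition~\ref{item:flow3} supplies exactly what is needed: if $f(u)\sim v$ then $u<v$, so $v$ has not yet been handled when we correct $u$, hence any conditional vertex created on $v$'s wire lies below the current frontier and is dealt with at a later step; together with condition~\ref{item:flow1} (the edge $\{u,f(u)\}$ exists at all) and condition~\ref{item:flow2} (corrections never flow backwards), the induction removes every conditional vertex.

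Finally, none of these rewrites --- neither the corrector rules nor the spider-rule cancellations --- constrains the non-$\pi$ phases $\alpha_u$, so the reduction to an unconditional diagram goes through verbatim for every choice of measurement angles; this gives uniformity, while strong determinism is immediate from Theorem~\ref{thm:circuit-like-implies-deterministic}. The bulk of the remaining exposition is devoted to the careful bookkeeping of the inductive step sketched above and to stating the corrector rules precisely.
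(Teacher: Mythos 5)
Your proposal is correct and follows essentially the same route the paper itself takes when it reconstructs the Danos--Kashefi result: build the pattern from the flow with correctors $X^{s_u}_{f(u)}$ and $Z^{s_u}_v$ for $v\sim f(u)$, $v\neq u$, then push each conditional $\pi$-vertex along the flow edge so that it cancels the inserted correctors --- exactly the corrector-placement rewrite behind Theorem~\ref{thm:flow-plus-good-correctors-implies-circuit}, which yields strong determinism via Theorem~\ref{thm:circuit-like-implies-deterministic}. Your inductive bookkeeping over $\prec$ and the remark that none of the rewrites constrains the measurement angles (whence uniformity) merely make explicit what the paper leaves implicit.
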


It must be emphasised that flow is a property of the geometry
\emph{not} the pattern itself.  In order for a pattern to be deterministic,
correct placement of the conditional operations is still required.
Danos and Kashefi give the pattern \patP explicitly in
\cite{Danos2006Determinism-in-}; we will reconstruct this later.  

\begin{definition}\label{def:diag-for-geometry}
  Let $\Gamma =((V,E),I,O)$ be an open graph; we define an
  unconditional diagram \DG as follows:
  \begin{itemize}
  \item The vertices of \DG are given by the disjoint union $V+E+I+O$.
    Should $\Gamma$ have vertex $v$ contained in both $I$
    and $O$ then $\DG$ contains \emph{three} corresponding vertices
    in \DG; we use subscripts $v_V, v_I, v_O$ to disambiguate.
  \item The vertices are typed depending which disjoint subset they
    originate in: those from $V$ (the original vertices of $\Gamma$)
    have type $Z$, without any label; those from $E$ have type $H$;
    and those from $I+O$ are boundary vertices, with $I$ providing the
    inputs and $O$ the outputs.
  \item If $e$ is an edge in $\Gamma$ connecting vertices $u$ and $v$,
    then we have $u_V \sim e$ and $e\sim v_V$ in $\DG$. For the
    boundary vertices we have $v_I \sim v_V$ and $v_O \sim v_V$.
  \end{itemize}
\end{definition}

\begin{example}  \label{ex:diag-for-geomtery}
  Consider again the $\CX$ pattern, or rather its geometry.
  \[
  \G{\patP_{\CX}} = \inline{%
\beginpgfgraphicnamed{geom-cnot}
\InputIfFileExists{geom-cnot.tikz}{}{\input{./tikz/geom-cnot.tikz}}
\endpgfgraphicnamed} 
  \qquad 
  \D{\G{\patP_{\CX}}} = \inline{%
\beginpgfgraphicnamed{pat-geom-cnot}
\InputIfFileExists{pat-geom-cnot.tikz}{}{\input{./tikz/pat-geom-cnot.tikz}}
\endpgfgraphicnamed}
  \]
\end{example}

\begin{theorem}\label{thm:flow-implies-circuit-like}
  Let \patP be a pattern;  if \Gp has flow then \DGp is
  equivalent to a circuit-like diagram.
\end{theorem}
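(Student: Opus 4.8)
The plan is to use the flow $(f,\prec)$ to build a circuit structure on $\DGp$ by a sequence of rewrites driven by $\prec$. The starting point is the observation that a flow function is injective: if $f(u)=f(v)=w$ with $u\neq v$, then $w\sim u$ and $w\sim v$ by \ref{item:flow1}, and then \ref{item:flow3} --- applied to $u$ with the neighbour $v$, and to $v$ with the neighbour $u$ --- would force both $u<v$ and $v<u$. Consequently the arcs $\{\,u\to f(u):u\in V\setminus O\,\}$ form a disjoint family of directed paths which, by \ref{item:flow2}, are acyclic and hence each terminate at an output vertex; these paths partition $V$. They will be the backbones of the path cover $\mathcal{P}$ required by \ref{circit-like:1}: each backbone enters a vertex $u$ along its input edge (a boundary wire when $u\in I$) and leaves along the $H$-edge realising $u\sim f(u)$, while the other $\Gp$-edges incident to $u$ have to be converted into two-qubit gates.

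First I would put $\DGp$ into a convenient form. Each vertex of $\Gp$ appears as an unlabelled green spider; using the identity and unit laws one arranges that each carries exactly one edge towards the input side and one towards the output side (boundary wires for $v\in I$ and $v\in O$), together with Hadamard-mediated edges to its neighbours. Along every flow backbone, the $H$-commute and $H$-cancel rules normalise the Hadamards so that the backbone edge $u\sim f(u)$ carries no net $H$; applying the $H$-commute rule where needed recolours vertices so that the local configuration around each processed pair $(u,f(u))$ is a red spider joined by plain edges to a collection of green spiders (the backbone continuation at $f(u)$ together with the relevant neighbours). This is the ``set-up'' phase; its purpose is simply to present, at each step, an instance of the local rewrite used below.

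The core of the argument is to treat the non-output vertices one at a time in an order linearly extending $\prec$. At the step for $u$ one applies a local rewrite --- built from the $\pi$-commute, copying, bialgebra, spider and Hopf rules --- that replaces the ``star'' at $u$ by one \CX-gate gadget for each of the later $\Gp$-neighbours involved, a gadget consisting of a fresh green vertex inserted on $u$'s own backbone and a fresh red vertex inserted on that neighbour's backbone, joined by a single edge (this is precisely the two-colour form of a controlled-$X$). The invariant to carry along is that, after each step, the already-treated region is a simple, $3$-coloured diagram on which the (progressively extended) backbones are pairwise disjoint directed paths ending in outputs. Restoring simplicity and the $3$-colouring after each rewrite is a routine clean-up with the spider, anti-loop and Hopf rules, and is not where the difficulty lies. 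Once every non-output vertex is processed one reads off $\mathcal{P}$, which yields \ref{circit-like:1} and \ref{circuit-like:3}.

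The main obstacle is condition \ref{circuit-like:2}: every oriented cycle using edges from two distinct paths of $\mathcal{P}$ must traverse at least one of them against its orientation. This is exactly the point at which the partial order $\prec$ is needed rather than just the adjacency conditions \ref{item:flow1}--\ref{item:flow2}. The plan is to equip the vertices of the rewritten diagram with a rank taken from $\prec$ --- original vertices keep their rank, and a gadget spawned while processing $u$ on the backbone through a neighbour $v$ receives a rank strictly between $u$ and $v$, which is consistent precisely because \ref{item:flow3} gives $u<v$ --- and to prove that along any backbone the ranks increase in the output direction while every inter-backbone edge points from a gadget into the backbone of a strictly larger vertex. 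An oriented cycle that respected the orientation of two backbone segments would then both strictly increase and return to the same rank, which is impossible; hence some backbone segment is traversed backwards, giving \ref{circuit-like:2}. Making this rank assignment precise and showing it is preserved by each rewrite step is the \emph{crux}; with it in hand, conditions \ref{circit-like:1}--\ref{circuit-like:3} all hold, and since every rewrite used is sound (Theorem~\ref{thm:soundness-i}) the resulting circuit-like diagram is equal to $\DGp$.
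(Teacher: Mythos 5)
Your overall route is the one the paper takes: the flow chains $j, f(j), f^2(j),\dots$ supply the path cover of the $Z$ vertices (your explicit injectivity argument for $f$ is a welcome addition; the paper leaves disjointness of the chains implicit), the $H$ vertices sitting on non-flow edges are the only obstruction to condition \ref{circit-like:1} and are rewritten into green--red \CX gadgets, and the spider/$H$-cancel rules restore simplicity and the $3$-colouring. So far this matches the paper's proof.

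The step you yourself flag as the crux, however, rests on a misreading of the flow axioms. Condition \ref{item:flow3} does \emph{not} say that $\Gp$-neighbours of $u$ come after $u$; it says that neighbours of $f(u)$ come after $u$. For a non-flow edge $\{u,v\}$ the endpoints need not be comparable at all. The paper's own running example shows this: in $\G{\patP_{\CX}}$ one has $f(2)=3$, $f(3)=4$, and the order generated by \ref{item:flow2}--\ref{item:flow3} is $2\prec 3\prec 4$, $2\prec 1$, with $1$ and $3$ incomparable, although $\{1,3\}$ is an edge of the geometry. Consequently a rank ``strictly between $u$ and $v$'' is not well defined, the claim that every inter-backbone edge points into the backbone of a strictly larger vertex is false, and your processing scheme (one gadget per ``later'' neighbour of $u$) would leave the edge $\{1,3\}$ untreated altogether. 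There is also a second soft spot: cross edges are undirected, so even with a rank in hand an oriented cycle may traverse them against your chosen orientation, and ``ranks strictly increase around the cycle'' does not follow.

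The partial order does secure \ref{circuit-like:2}, but \ref{item:flow3} has to be applied to the chain \emph{predecessor} of each attachment point rather than to the attachment point itself. Suppose an oriented cycle runs forward along one path from $u_1$ to $u_2=f^k(u_1)$ with $k\geq 1$, crosses a (non-flow) edge $\{u_2,v_2\}$ to a second path, runs forward along it from $v_2$ to $v_1=f^m(v_2)$ with $m\geq 1$, and crosses back via the edge $\{v_1,u_1\}$. Since $v_2\sim f\bigl(f^{k-1}(u_1)\bigr)$, condition \ref{item:flow3} gives $f^{k-1}(u_1)\prec v_2$, hence $u_1\prec v_2$ by \ref{item:flow2}; symmetrically $v_2\prec u_1$, a contradiction, and cycles visiting more than two paths are handled by chaining the same inequalities. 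With this repair (and with every non-flow edge converted to a gadget regardless of comparability of its endpoints, which the purely local rewrite permits), your argument closes the gap and coincides with the paper's proof.
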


\begin{proof}
  Suppose that \Gp has a flow $(f,\prec)$.  Let $J$ be the vertices of
  \Gp which are minimal with respect to $\prec$.  For each vertex
  $j\in J$ we can define a finite sequence $p_j = j, f(j),
  f^2(j),\ldots,f^n(j)$ where the last element of the sequence is an
  output qubit. By definition of $f$, the collection $\cup_{j\in J}
  p_j$ contains all the vertices of \Gp.  Each $p_j$ defines a path in
  \DGp, and the collection of these paths covers all the $Z$ vertices
  of \DGp; we can trivially extend these paths to include the boundary
  vertices adjacent to their end points.  The collection
  $\{p_j\}_{j\in J}$ provides the path covering required by the
  definition of circuit-like
  (cf. Definition~\ref{def:circuit-like-diagram}).   \DGp
  satisfies condition \ref{circuit-like:2}, because of the partial
  order structure of the flow, and condition \ref{circuit-like:3} by
  construction, however it does not satisfy condition \ref{circit-like:1},
  since some vertices are not covered by the path.  Specifically,
  those $H$ vertices $e$ where $v \sim e \sim u$ and $f(u) \neq v$ are
  not covered by the path.  At each such vertex we perform the
  following rewrite:
  \[
  \inline{%
\beginpgfgraphicnamed{circuitify-rewrite-lhs}
\InputIfFileExists{circuitify-rewrite-lhs.tikz}{}{\input{./tikz/circuitify-rewrite-lhs.tikz}}
\endpgfgraphicnamed} 
  = \inline{%
\beginpgfgraphicnamed{circuitify-rewrite-rhs}
\InputIfFileExists{circuitify-rewrite-rhs.tikz}{}{\input{./tikz/circuitify-rewrite-rhs.tikz}}
\endpgfgraphicnamed} 
  \]
  Now the $H$ vertices can be removed via the rewrite shown below:
  \[
  \czed = \inline{%
\beginpgfgraphicnamed{circuitify-rewrite-ii-rhs}
\InputIfFileExists{circuitify-rewrite-ii-rhs.tikz}{}{\input{./tikz/circuitify-rewrite-ii-rhs.tikz}}
\endpgfgraphicnamed}
  \]
  After which any pairs of adjacent $H$ vertices may be cancelled, and
  the spider rule can be used to guarantee that the diagram is three coloured.
\end{proof}

The converse to Theorem~\ref{thm:flow-implies-circuit-like} does not
hold; in \cite{Duncan:2010aa} it is shown that geometries which have
\emph{generalised flow}, discussed below, can be rewritten to
circuit-like diagrams.  However, we can give a weaker result which
holds for flow.

\begin{definition}\label{def:weakly-circuit-like}
  Let $D$ be a diagram, and let $U \subseteq V$ be a set of its
  vertices satisfying 
  \begin{itemize}
  \item $u\in U$ implies that $u$ has type $H$;
  \item $v_1 \sim u \sim v_2$ implies that $v_1$ and $v_2$ are
    either both of type $Z$ or both of $X$.
  \end{itemize}
  Then $D$ is called \emph{weakly circuit-like} if the following
  conditions hold.
  \renewcommand{\theenumi}{(W\arabic{enumi})}
  \begin{enumerate}
  \item \label{weak-circit-like:1} The vertices $V \setminus U$ can be
    covered by a set $\mathcal{P}$ of disjoint directed paths, each of
    which ends in an output;
  \item \label{weak-circuit-like:2}for every oriented cycle $\gamma$ in the
    diagram, if $\gamma$ contains at least 2 edges from different
    paths in $\mathcal{P}$, then it traverses at least one of them in
    the direction opposite to that induced by the path; and,
  \item \label{weak-circuit-like:3} it is a simple graph, and is 
    3-coloured.
  \end{enumerate}
  \renewcommand{\labelenumi}{\theenumi.}
\end{definition}

Weakly circuit-like diagrams correspond to circuits where
the 2-qubit gates may be \CZ gates as well as \CX gates.

\begin{theorem}\label{thm:weak-circuit-like-implies-flow}
  Let \patP be a pattern; if \DGp is weakly circuit-like then \Gp has
  flow.
\end{theorem}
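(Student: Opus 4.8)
The plan is to read a flow $(f,\prec)$ directly off the path cover witnessing that $\DGp$ is weakly circuit-like, essentially inverting the construction used to prove Theorem~\ref{thm:flow-implies-circuit-like}. Write $\Gp=((V,E),I,O)$ and recall from Definition~\ref{def:diag-for-geometry} that $\DGp$ has one $Z$-vertex $v_V$ per $v\in V$, one $H$-vertex $e$ per $e\in E$, and boundary vertices $v_I,v_O$, the only edges being $u_V\!\sim\! e\!\sim\! v_V$ for $e=\{u,v\}\in E$ together with $v_I\!\sim\! v_V$ and $v_O\!\sim\! v_V$. Let $\mathcal P$ be the system of disjoint directed paths and $U$ the set of uncovered ($H$-typed) vertices furnished by Definition~\ref{def:weakly-circuit-like}. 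For $v\in V\setminus O$ the vertex $v_V$ is $Z$-typed, hence lies on a unique $p\in\mathcal P$, and is not the terminal vertex of $p$ (a boundary output vertex), so it has a successor on $p$; that successor must be an $H$-vertex $e$ (it cannot be the degree-one input vertex $v_I$, which would terminate $p$ at an input), and the successor of $e$ on $p$ is then the other endpoint $w_V$ of the $\Gp$-edge $e=\{v,w\}$. Set $f(v):=w$. Then $f(v)\sim v$, giving condition~\ref{item:flow1}; one checks $f(v)\notin I$ (otherwise $w_I$ would have to start a path, necessarily $p$ itself, forcing $w_V$ twice onto the simple path $p$); and $f$ is injective (two vertices with the same $f$-value would sit at distance two along a path from two distinct predecessors, impossible on disjoint simple paths). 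So $f:V\setminus O\to V\setminus I$ is a well-defined injection.

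Now define a relation $R$ on $V$ by $u\mathrel R v$ iff $v=f(u)$, or $v\sim f(u)$ in $\Gp$ with $v\neq u$, and let $\prec$ be its transitive closure. If $\prec$ is irreflexive — that is, if $R$ is acyclic — then $\prec$ is a strict partial order and $(f,\prec)$ is a flow: condition~\ref{item:flow2} holds because $u\mathrel R f(u)$, and condition~\ref{item:flow3} because $f(u)\sim v$ with $u\neq v$ gives $u\mathrel R v$. So the theorem reduces to showing $R$ has no cycle.

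Suppose, for contradiction, that $u_0\mathrel R u_1\mathrel R\cdots\mathrel R u_n=u_0$ is an $R$-cycle of minimal length, so the $u_i$ are distinct and $n\geq 2$ (a one-step cycle is impossible, since $\Gp$ has no self-loops). I translate it into a closed directed walk $\gamma$ in $\DGp$ by replacing each step $u_i\mathrel R u_{i+1}$ with the walk $u_{i,V}\to e_i\to f(u_i)_V$, where $e_i$ is the $H$-vertex between $u_{i,V}$ and $f(u_i)_V$ on the path through $u_{i,V}$ — two edges traversed \emph{forward} along that path — followed, when $u_{i+1}\neq f(u_i)$, by the two edges $f(u_i)_V\!\sim\! e_i'\!\sim\! u_{i+1,V}$ through the $H$-vertex $e_i'$ of $\{f(u_i),u_{i+1}\}$. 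Injectivity of $f$ forces these last two edges to be either forward path edges (in the case $u_{i+1}=f^2(u_i)$) or not path edges at all; the only alternative, $f(u_i)=f(u_{i+1})$, is excluded. Hence $\gamma$ traverses no path edge against its direction, yet it does traverse path edges (each $e_i$-step does). Extracting a simple oriented cycle $C$ from $\gamma$, one shows that $C$ still traverses at least two path edges without reversing any, which contradicts condition~\ref{weak-circuit-like:2} once one knows that such a cycle cannot use path edges from only one path of $\mathcal P$, since each such path is acyclic. Thus $R$ is acyclic and $(f,\prec)$ is a flow on $\Gp$.

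The heart of the argument — and its main obstacle — is the last step: passing from the closed walk $\gamma$, which by construction avoids backward path edges, to a simple oriented cycle that still meets the hypothesis of condition~\ref{weak-circuit-like:2} (at least two path edges, from distinct paths) while still reversing none of them. The delicate cases are those in which consecutive $u_i$ lie on the same path of $\mathcal P$ and those in which some $e_i'$ happens to be a path edge; handling them requires careful use of the injectivity of $f$ and of the disjointness and acyclicity of the paths. Everything outside this step is a routine unwinding of Definitions~\ref{def:diag-for-geometry}, \ref{def:weakly-circuit-like} and \ref{def:flow}, so it is here that the weakly-circuit-like hypothesis, and condition~\ref{weak-circuit-like:2} in particular, is really used.
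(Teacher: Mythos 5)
Your construction is the same as the paper's own proof: the paper likewise defines $f(u)$ to be the successor of $u$ in the linear order that its covering path induces on the $Z$-vertices, and obtains $\prec$ by closing the union of these orders under condition \ref{item:flow3}, invoking condition \ref{weak-circuit-like:2} for consistency of the result. Up to your reduction of the theorem to acyclicity of the relation $R$, the argument is fine; in particular your verification (using disjointness of the paths and injectivity of $f$) that the closed walk obtained from an $R$-cycle never traverses a path edge against its orientation is correct, and your checks that $f$ is total on $V\setminus O$ and lands in $V\setminus I$ are sound.

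The gap is exactly the step you flag as the ``main obstacle'', and it is not only unfinished but the one concrete claim you make about it is wrong. You dismiss cycles whose path edges all come from a single path on the grounds that ``each such path is acyclic'', but an oriented cycle can run forwards along one path and close up through an uncovered $H$-vertex, and these are precisely the dangerous configurations. Concretely, take the triangle geometry $a\sim b\sim c\sim a$ with $I=\{a\}$, $O=\{c\}$: with $U$ consisting of the $H$-vertex of the edge $\{c,a\}$ (both its neighbours are $Z$-vertices) and the single covering path from $a_I$ through $a_V,b_V,c_V$ to $c_O$, the cycle that runs $a_V\to b_V\to c_V$ forwards along the path and returns to $a_V$ through the uncovered $H$-vertex uses path edges of one path only, all forwards; and it is exactly what kills flow there, since the induced $f(a)=b$, $f(b)=c$ forces $a\prec b$ by \ref{item:flow2} while $f(b)=c\sim a$ forces $b\prec a$ by \ref{item:flow3}. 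So the single-path cycles must be ruled out by \ref{weak-circuit-like:2} itself --- which means reading that condition as forbidding any oriented cycle traversing two or more path edges all in the forward direction, regardless of whether they lie on distinct paths --- and not by the acyclicity of an individual path. Separately, the extraction of a simple oriented cycle from your closed walk $\gamma$ that still carries at least two forward path edges is asserted rather than proved (a decomposition of $\gamma$ into simple cycles could a priori yield cycles using only the non-path edges through uncovered $H$-vertices); this can be repaired, e.g.\ by applying \ref{weak-circuit-like:2} to the closed walk arising from a minimal $R$-cycle, but as written your proof stops at exactly the point where the hypothesis has to do its work --- admittedly the same point at which the paper's own proof is laconic.
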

\begin{proof}
  By construction, the $Z$ vertices are in bijective correspondence with
  the qubits of \patP; hence we need only define a flow $(f,\prec)$
  over the $Z$ vertices.  Let $p$ be one of the paths of the path
  covering of \DGp; $p$ then defines a linear order over the $Z$
  vertices it covers.  Define $f$ by $f(u) = v$ whenever $v$ is the
  successor of $u$ in this order.  Since the paths are disjoint, the
  same procedure can be carried out for every path, and since the
  paths cover all the $Z$ vertices this defines the required function
  $f$.  The union of these linear orders gives a partial order over
  the $Z$ vertices; to obtain the required $\prec$ this order can be
  completed by imposing  condition \ref{item:flow3}.  The acyclicity
  condition \ref{weak-circuit-like:2} guarantees that this is
  possible.
\end{proof}

Evidently, the circuit-like diagram produced from \DGp is not
equivalent to \Dp, however they are closely related.

\begin{definition}\label{def:pat-diag-from-DGp}
  Let \patP be pattern; construct a new diagram \DGpp from \DGp as
  follows:
  \begin{itemize}
  \item If ${}^s[M^\alpha_i]^t$ occurs in \patP then modify \DGp by
    adjoining the subdiagram corresponding to the measurement, as
    shown below:
    \[
    \inline{%
\beginpgfgraphicnamed{adjoin-meas-lhs}
\InputIfFileExists{adjoin-meas-lhs.tikz}{}{\input{./tikz/adjoin-meas-lhs.tikz}}
\endpgfgraphicnamed} \to
    \inline{%
\beginpgfgraphicnamed{adjoin-meas-rhs}
\InputIfFileExists{adjoin-meas-rhs.tikz}{}{\input{./tikz/adjoin-meas-rhs.tikz}}
\endpgfgraphicnamed}
    \]
  \item If $X^s_i$ or $Z^s_i$ occurs in \patP then adjoin the subdiagram
    corresponding to the correction as shown below:
    \[
    \inline{%
\beginpgfgraphicnamed{adjoin-corr-lhs}
\begin{tikzpicture}
	\begin{pgfonlayer}{nodelayer}
		\node [style=none] (0) at (-1.5, -0.5) {$i_O$};
		\node [style=none] (1) at (-1, 1) {};
		\node [style=none] (2) at (-1, 1.75) {$\vdots$};
		\node [style=boundary vertex] (3) at (-1, -0.5) {};
	\end{pgfonlayer}
	\begin{pgfonlayer}{edgelayer}
		\draw [style=plain] (3) to (1.center);
	\end{pgfonlayer}
\end{tikzpicture}}
\endpgfgraphicnamed} \to
    \inline{%
\beginpgfgraphicnamed{adjoin-corr-rhs}
\InputIfFileExists{adjoin-corr-rhs.tikz}{}{\input{./tikz/adjoin-corr-rhs.tikz}}
\endpgfgraphicnamed}
    \]
    Note that since the \patP is in standard form, corrections can
    only appear at an output qubit.
  \end{itemize}
\end{definition}

\begin{example}\label{ex:pat-diag-from-DGp}
  Recall the pattern $\patP_{\CX} = X^{3}_4 Z^{2}_4Z^{2}_1 M^0_3 M^0_2
  E_{13}E_{23} E_{34} N_3 N_4$.  We have:
  \[
  \D{\G{\patP_{\CX}}} = \inline{%
\beginpgfgraphicnamed{pat-geom-cnot}
\InputIfFileExists{pat-geom-cnot.tikz}{}{\input{./tikz/pat-geom-cnot.tikz}}
\endpgfgraphicnamed} \qquad
  \D{\G{\patP_{\CX}}}^* = \inline{%
\beginpgfgraphicnamed{pat-geom-cnot-star}
\InputIfFileExists{pat-geom-cnot-star.tikz}{}{\input{./tikz/pat-geom-cnot-star.tikz}}
\endpgfgraphicnamed} 
  \]
\end{example}

\begin{lemma}\label{lem:reduce-dp-to-dgpp}
  For any pattern \patP we have $\Dp \to \DGpp$
\end{lemma}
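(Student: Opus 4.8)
The plan is to unfold all three constructions --- \Dp (Definition~\ref{def:pattern-diag-translation}), \DGp (Definition~\ref{def:diag-for-geometry}), and \DGpp (Definition~\ref{def:pat-diag-from-DGp}) --- and to observe that the only rewriting required is the spider rule, used to fuse together the green vertices contributed by the initialisation and entangling commands. By Theorem~\ref{thm:emc-form} and the standing convention, \patP is in standard form $CMEN$ with conditional measurements expanded via~(\ref{eq:no-cond-meas-in-mc}). Then Table~\ref{tab:translation}, together with the green--$H$--green presentation of the \CZ gate from Example~\ref{ex:CZ}, tells us that \Dp is the composite of: (i) one green unit on the wire of each non-input qubit (the $N$ block); (ii) for each command $E_{ij}$, a green vertex on wire $i$ and a green vertex on wire $j$ joined by an $H$ vertex (the $E$ block); (iii) a measurement gadget, preceded by $Z^t_iX^s_i$ when the measurement was conditional, at the foot of each measured wire (the $M$ block); and (iv) a correction gadget at each corrected output wire (the $C$ block). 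Crucially every green vertex produced by (i) and (ii) is unconditional with angle $0$, so the conditional vertices of \Dp live only in blocks (iii)--(iv).

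First I would apply the spider rule repeatedly to fuse, for each qubit $v$, all the green vertices on the wire of $v$ coming from blocks (i) and (ii) into a single unconditional angle-$0$ green vertex $G_v$; this is legitimate precisely because those vertices all carry the empty variable set, and it leaves the conditional vertices of blocks (iii)--(iv) untouched. After fusion, $G_v$ is adjacent to the input boundary vertex iff $v\in I$; to exactly one $H$ vertex for each edge of \Gp incident to $v$, each such $H$ vertex being shared with the vertex $G_u$ of the other endpoint (the two green halves of the \CZ for $E_{vu}$ having been absorbed into $G_v$ and $G_u$ respectively); and to a downward wire carrying the remainder of the original wire of $v$.

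Next I would match this diagram against \DGpp component by component: the vertices $G_v$ are the $V$-vertices of \DGp, the surviving $H$ vertices are its $E$-vertices, the input and output boundary vertices agree, and the adjacencies just described are exactly $u_V\sim e\sim v_V$, $v_I\sim v_V$, $v_O\sim v_V$ of Definition~\ref{def:diag-for-geometry}. Finally, the downward wire below $G_v$ is: just the output boundary when $v\in O$ bears no correction; the translation of $X^s_v$ or $Z^s_v$ when $v\in O$ bears a correction; and the (possibly conditional) measurement gadget when $v$ is measured. In the latter two cases this is precisely the subdiagram that Definition~\ref{def:pat-diag-from-DGp} adjoins to $v_V$ to form \DGpp. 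Hence the fused diagram \emph{is} \DGpp, and the chain of spider-rule steps witnesses $\Dp \to \DGpp$.

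The one point needing care --- and the only real obstacle --- is the bookkeeping for an input qubit incident to no edge, for which blocks (i) and (ii) contribute no green vertex to \Dp at all. Such a qubit $v$ appears in \Dp simply as its incoming wire continued directly into whatever follows it (an output boundary, a correction gadget, or a measurement gadget), whereas in \DGp it has a $V$-vertex $v_V$ with $v_I\sim v_V$ (and $v_V\sim v_O$ if $v\in O$). One bridges this gap by inserting $v_V$ as a degree-two angle-$0$ green vertex, i.e.\ via the identity rule read in the expanding direction (permissible, as the rewrite relation is generated by equations). With this adjustment the correspondence of the previous paragraph becomes a bijection, and concatenating all the steps yields the desired rewrite. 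Apart from this bookkeeping, the content of the lemma is nothing more than the spider rule.
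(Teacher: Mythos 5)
Your argument is essentially the paper's own proof: the paper disposes of this lemma in two lines by observing that the green vertices introduced by the $N_i$ and $E_{ij}$ commands are connected along each qubit wire and can be contracted by the spider rule, the result being exactly \DGpp. The extra bookkeeping you supply (matching the fused vertices against Definition~\ref{def:diag-for-geometry}, and the degenerate case of an input qubit touched by no $E_{ij}$, which needs an identity-rule expansion) is a more careful spelling-out of the same step rather than a different route.
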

\begin{proof}
  All the $Z$ vertices in \Dp which are introduced by $N_i$ and
  $E_{ij}$ commands form a connected subgraph, hence they can all be
  contracted together via the spider rule;  this gives \DGpp. 
\end{proof}

\begin{corollary}\label{cor:Dp-rewrites-to-almost-circuit-like}
  If \patP has flow then the subgraph of \Dp excluding the
  measurements rewrites to a circuit-like diagram.
\end{corollary}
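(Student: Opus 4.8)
The plan is to reduce the claim to Theorem~\ref{thm:flow-implies-circuit-like} by way of Lemma~\ref{lem:reduce-dp-to-dgpp}, and then to dispose of the extra material that $\Dp$ carries over and above $\DGp$. First I would fix what ``the subgraph of $\Dp$ excluding the measurements'' denotes: by Table~\ref{tab:translation} each command $M^\alpha_i$ contributes two adjacent green vertices --- one labelled $\pi,\{i\}$, one labelled $-\alpha$ --- hanging off the wire of qubit $i$; deleting all of these leaves a diagram assembled only from the $N$, $E$ and $C$ fragments of $\patP$, in which every measured qubit survives as an interior vertex bearing only the bonds induced by its $E$ commands.

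The key observation is that the rewrite sequence proving Lemma~\ref{lem:reduce-dp-to-dgpp} --- contracting, via the spider rule, the connected family of green vertices created by the $N_i$ and $E_{ij}$ commands --- touches no measurement vertex, since those attach to the $N/E$ subgraph only at its extremities. Hence the very same sequence rewrites the measurement-free subgraph of $\Dp$ to the measurement-free subgraph of $\DGpp$, and by Definition~\ref{def:pat-diag-from-DGp} the latter is exactly $\DGp$ with the correction subdiagrams --- an $X$ vertex labelled $\pi,\{s\}$ for each $X^s_i$, a $Z$ vertex labelled $\pi,\{s\}$ for each $Z^s_i$ --- adjoined at its output boundary vertices.

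Now I would bring in the hypothesis. Since $\Gp$ has flow, Theorem~\ref{thm:flow-implies-circuit-like} gives that $\DGp$ rewrites to a circuit-like diagram $C$, whose path covering is read off from the flow and whose Hadamard vertices have been absorbed by the conversion of \CZ bonds to \CX gates used in that proof. It remains to re-attach the correction subdiagrams. Each correction is a degree-two vertex lying on an output wire --- and, because $\patP$ is in standard form, corrections occur only at outputs --- so it merely lengthens the terminal segment of the path of $C$ ending at that output. This preserves conditions \ref{circit-like:1} and \ref{circuit-like:2}, the latter because inserting a degree-two vertex at the end of a path creates no new oriented cycle. To restore condition \ref{circuit-like:3} I would apply the spider rule to merge any correction that happens to sit next to a vertex of its own colour, and the $H$-cancel rule to any pair of Hadamards thereby brought together; the outcome is again simple and $3$-coloured, hence circuit-like.

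I expect the only real subtlety to be the bookkeeping at the seam between the $N/E$ part and the measurement and correction gadgets: one must check that erasing the measurement vertices genuinely leaves the $N/E$ spider-contractions of Lemma~\ref{lem:reduce-dp-to-dgpp} intact, so that ``$\Dp$ minus measurements'' and ``$\DGpp$ minus measurements'' really are joined by that rewrite, and --- more pointedly --- that the measured qubits, which now carry no boundary vertex at all, are still covered by directed paths ending in outputs. This last point is exactly what the flow supplies, and is why the hypothesis cannot be dropped: the function $f$ routes a path from every measured qubit to an output, which is precisely what condition \ref{circit-like:1} demands. Everything else --- the conversions of \CZ bonds to \CX gates, the Hadamard cancellations, the $3$-colouring clean-up --- is the routine machinery already deployed in the proof of Theorem~\ref{thm:flow-implies-circuit-like}.
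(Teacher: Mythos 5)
Your proposal is correct and follows essentially the same route the paper intends for this corollary, which is stated without a separate proof precisely because it is the combination of Lemma~\ref{lem:reduce-dp-to-dgpp} (the spider contraction of the $N$/$E$ vertices, which never touches the measurement gadgets) with Theorem~\ref{thm:flow-implies-circuit-like}, plus the observation that the correction vertices sit on output wires and merely extend the flow-induced paths. The only cosmetic quibble is your final clean-up step: a conditional correction cannot be spider-merged with an unconditional vertex of the same colour (the rule in Figure~\ref{fig:rewrite-rules-bis} requires equal variable sets), but the 3-colouring is just as easily restored by the $H$-commute (colour-change) rule, so nothing essential is affected.
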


If \patP has a circuit-like geometry then
Theorem~\ref{thm:original-flow} shows that it could be deterministic,
if there are corrections in the appropriate places.  The \zxcalculus can
be used to determine where the corrections must be placed.  Suppose
that we have the configuration shown below.  (The dotted boxes
represent either measurements or outputs.)
\[
    \inline{%
\beginpgfgraphicnamed{corrector-placement-i}
\InputIfFileExists{corrector-placement-i.tikz}{}{\input{./tikz/corrector-placement-i.tikz}}
\endpgfgraphicnamed}
\]
The measurement at qubit $i$ introduces an error term which must be
cancelled at a later qubit.  We can perform the following rewrite
sequence:
\begin{multline*}
    \inline{%
\beginpgfgraphicnamed{corrector-placement-i}
\InputIfFileExists{corrector-placement-i.tikz}{}{\input{./tikz/corrector-placement-i.tikz}}
\endpgfgraphicnamed} \to
    \inline{%
\beginpgfgraphicnamed{corrector-placement-ii}
\InputIfFileExists{corrector-placement-ii.tikz}{}{\input{./tikz/corrector-placement-ii.tikz}}
\endpgfgraphicnamed} \to
    \inline{%
\beginpgfgraphicnamed{corrector-placement-iii}
\InputIfFileExists{corrector-placement-iii.tikz}{}{\input{./tikz/corrector-placement-iii.tikz}}
\endpgfgraphicnamed} \\ \to
    \inline{%
\beginpgfgraphicnamed{corrector-placement-iv}
\InputIfFileExists{corrector-placement-iv.tikz}{}{\input{./tikz/corrector-placement-iv.tikz}}
\endpgfgraphicnamed} \to
    \inline{%
\beginpgfgraphicnamed{corrector-placement-v}
\InputIfFileExists{corrector-placement-v.tikz}{}{\input{./tikz/corrector-placement-v.tikz}}
\endpgfgraphicnamed}
\end{multline*}
Hence in order to correct for the measurement at qubit $i$, we must
perform a conditional $X$ at qubit $f(i)$, and a condition $Z$ at all
its neighbours except $i$ itself.    Since the geometry
is circuit-like all of these qubits are later in the execution of the
pattern than the measurement of $i$ itself.  Hence we can conclude: 

\begin{theorem}\label{thm:flow-plus-good-correctors-implies-circuit}
  Let \patP be a pattern such that \Gp has flow and, for every
  measured qubit $i$, the command sequence contains correctors
  $X^i_{f(i)}$ and $Z_j^i$ for all $i \neq j \sim f(i)$; then \Dp
  rewrites to an unconditional circuit-like diagram.
\end{theorem}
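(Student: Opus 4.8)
The plan is to reduce to the geometry diagram, strip the conditional part off each measurement by pushing it onto its matching correctors, and then appeal to the circuit-like structure already established for geometries with flow. First, by Lemma~\ref{lem:reduce-dp-to-dgpp} we have $\Dp \to \DGpp$, so it suffices to rewrite \DGpp to an unconditional circuit-like diagram. By Definition~\ref{def:pat-diag-from-DGp}, \DGpp is \DGp with a measurement subdiagram adjoined at each measured qubit $i$ --- contributing one conditional vertex labelled $\pi,\{i\}$ and one unconditional vertex labelled $-\alpha_i$ --- and a conditional $\pi$ vertex adjoined for each correction command of \patP. The variables occurring in \DGpp are exactly the signals of \patP, one per measured qubit, and by hypothesis the correctors $X^i_{f(i)}$ and $Z^i_j$ for every $j\sim f(i)$ with $j\neq i$ are all present. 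I will eliminate these variables one measured qubit at a time.

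Fix a measured qubit $i$. By condition~\ref{item:flow1} we have $f(i)\sim i$, so in \DGp the vertices of $i$ and $f(i)$ are joined through a single $H$ vertex, and together with the adjoined measurement this realises the left-hand side of the rewrite sequence displayed immediately before the statement of the theorem. Applying that sequence propagates the conditional vertex $\pi,\{i\}$ through this $H$ vertex and through the $Z$ vertex of $f(i)$, using the $H$-commute, $\pi$-commute, copying and spider rules, and deposits a conditional $X$ vertex of phase $\pi$ and signal $i$ on the wire of $f(i)$ and a conditional $Z$ vertex of phase $\pi$ and signal $i$ on the wire of every neighbour $j\neq i$ of $f(i)$. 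These meet the correctors $X^i_{f(i)}$ and $Z^i_j$ present by hypothesis; bringing each matched pair together by the spider rule and using $\pi+\pi = 2\pi \equiv 0$ in the phase group, the pair merges to a phase-$0$ vertex, which is the identity (Example~\ref{ex:spider-identity}) and is deleted. Every occurrence of the variable $i$ is thereby removed. If $f(i)$ or one of the neighbours $j$ is itself a measured qubit rather than an output, the matching corrector comes from the standard-form expansion~(\ref{eq:no-cond-meas-in-mc}) of the conditional measurement there rather than as a free-standing correction, but the argument is unchanged.

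The push performed for $i$ creates or alters only vertices carrying the signal $i$, and $f$ is injective (a consequence of \ref{item:flow1}--\ref{item:flow3}), so no two distinct signals are ever pushed onto the same target wire; hence the eliminations for different measured qubits are independent and may be performed in any order, for instance along a linear extension of $\prec$. After all of them the diagram is unconditional: what remains is \DGp together with, at each former measured qubit, only a $Z$ vertex of phase $-\alpha_i$ followed by a cap. The cap is absorbed by the counit law (an instance of the spider rule), the $-\alpha_i$ merges into the corresponding vertex of \DGp, and the resulting unconditional diagram is exactly the object treated by Theorem~\ref{thm:flow-implies-circuit-like} and Corollary~\ref{cor:Dp-rewrites-to-almost-circuit-like}; after the remaining spider, $H$-cancel and anti-loop clean-ups it is circuit-like (phased $Z$ spiders are permitted in a circuit-like diagram). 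Since all conditional vertices have been removed, Theorem~\ref{thm:circuit-like-implies-deterministic} moreover re-derives strong determinism of \patP.

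The delicate point is the bookkeeping of the last two paragraphs: one must check that applying the corrector-placement rewrite at $i$ deposits its conditional vertices at exactly $f(i)$ and the neighbours of $f(i)$ other than $i$, and that this leaves the local configuration needed to repeat the rewrite at the remaining measured qubits intact. This is precisely where the flow axioms are used --- \ref{item:flow1} makes the rewrite applicable, while \ref{item:flow2} and \ref{item:flow3} force $f(i)$ and every disturbed neighbour to lie strictly later than $i$ in $\prec$, so the correctors sit downstream and the acyclicity condition~\ref{circuit-like:2} of circuit-likeness is preserved throughout. Making this fully rigorous requires an induction along $\prec$, but each individual step is a routine application of the rewrite rules of Figure~\ref{fig:rewrite-rules-bis}.
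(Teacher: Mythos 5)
Your proposal is correct and takes essentially the same route as the paper: the paper's own justification of this theorem is precisely the corrector-placement rewrite displayed immediately before it, which pushes the conditional $\pi$ introduced by the measurement at $i$ onto a conditional $X$ at $f(i)$ and conditional $Z$'s at the other neighbours of $f(i)$, cancels these against the hypothesised correctors (absorbed into measurements when the target qubit is itself measured), and uses the flow ordering together with Theorem~\ref{thm:flow-implies-circuit-like} and Corollary~\ref{cor:Dp-rewrites-to-almost-circuit-like} to ensure everything sits downstream and the residual unconditional diagram is circuit-like. One small overstatement: it is not true that distinct signals never land on the same wire (a qubit can be a neighbour of $f(i)$ and $f(i')$ for $i\neq i'$, so $Z$-corrections with different labels may accumulate on one qubit), but this does not harm your argument, since differently-labelled phases of the same colour commute by the spider rule and each cancels with its own matching corrector.
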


Of course, if $j$  is a  measured qubit the
corrections can be absorbed into the measurements.

The $Z$ correction at qubit $j$ is effectively the same as an error
introduced by measuring $j$, hence this operation can be deferred
by another step, and the correction performed at $f(j)$ and its
neighbours.  However the same is not true for the corrector $X$.
Consider the following diagrams:
\[
D_1 = \inline{%
\beginpgfgraphicnamed{x-dont-commute-i}
\InputIfFileExists{x-dont-commute-i.tikz}{}{\input{./tikz/x-dont-commute-i.tikz}}
\endpgfgraphicnamed} 
\qquad\qquad 
D_2 = \inline{%
\beginpgfgraphicnamed{x-dont-commute-ii}
\InputIfFileExists{x-dont-commute-ii.tikz}{}{\input{./tikz/x-dont-commute-ii.tikz}}
\endpgfgraphicnamed}
\]
In this case we have $\denote{D_1} \neq \denote{D_2}$ unless $\alpha =
0$ or $\alpha = \pi$, hence these
diagrams are cannot be rewritten to one another. Therefore uniform
determinism requires that the $X$ correction be performed at  $f(i)$
and not later in the computation.  On the other hand,  if $\alpha \in
\{0,\pi\}$ then these diagrams are equal.  This points to an important
advantage of the \zxcalculus.

The diagram \Dp contains all the  information of the pattern itself,
not just its geometry. Therefore by rewriting as discussed above the
correctness of \patP can be verified directly, and in particular
should the \mbqc programmer have made an error in the placement of the
corrections this error will be revealed.  Further, since the
\zxcalculus is sensitive to the values of the angles, it can also be
used to show that a pattern is deterministic even when it is not
uniformly so.  Consider the pattern $\patP = M^0_3 M^\alpha_2 E_{23}
E_{12} N_2 N_3$.  This pattern does not have flow.  However it is
deterministic:
\[
\inline{%
\beginpgfgraphicnamed{non-uniform-i}
\InputIfFileExists{non-uniform-i.tikz}{}{\input{./tikz/non-uniform-i.tikz}}
\endpgfgraphicnamed} =
\inline{%
\beginpgfgraphicnamed{non-uniform-ii}
\InputIfFileExists{non-uniform-ii.tikz}{}{\input{./tikz/non-uniform-ii.tikz}}
\endpgfgraphicnamed} =
\inline{%
\beginpgfgraphicnamed{non-uniform-iii}
\InputIfFileExists{non-uniform-iii.tikz}{}{\input{./tikz/non-uniform-iii.tikz}}
\endpgfgraphicnamed} = \inline{%
\beginpgfgraphicnamed{long-id}
}
\endpgfgraphicnamed} 
\]
The disconnected component is just a scalar factor which we drop since
it has no bearing on the computation.

In this chapter we have focused on flow but flow is not a necessary
condition for strong and uniform determinism.  With flow each
measurement has a single successor where the correction must be
performed.  If each qubit has instead a set of correcting qubits this
yields the notion of \emph{generalised flow}
\cite{D.E.-Browne2007Generalized-Flo}.  A computation is called
\emph{stepwise} deterministic if after each measurement the
non-determinism can be removed by correction.  
Generalised flow is both
necessary and sufficient for strong, stepwise, uniform determinism.
The \zxcalculus can also be used to handle generalised flow; in fact
using the rewrite rules, especially the bialgebra rule, any geometry
which has generalised flow can be rewritten to an equivalent pattern
which has flow.  The details can be found in \cite{Duncan:2010aa}.

\section{Conclusions}
\label{sec:conclusions}

Complementarity has long been recognised as one of the fundamental
ingredients of quantum mechanics, although it is usually understood
negatively: as the failure of certain classical properties.  Here we
have demonstrated a positive characterisation of complementarity, in
terms of the existence of certain algebraic structures.  The
\zxcalculus provide a very rich language for reasoning about quantum
systems that fully exploits this algebraic structure.

Due to its graphical nature, the \zxcalculus is extremely legible,
exposing the close parallels between quantum circuits and \owqc
patterns with flow.  In the preceding section we have
seen how the almost metaphorical property of flow actually defines a
trajectory within a diagram  along which
information---in this the outcome of some quantum measurement---must
travel in order to produce a deterministic computation.  This analysis
can be taken further in the analysis of \emph{generalised flow}
\cite{Duncan:2010aa}, where the role of the bialgebra rule is crucial,
functioning as a kind `interference'  principle, where different paths
cancel each out.

The graphical syntax we have employed here is not simply a gimmick.
Since the \zxcalculus is based on algebraic first principles, it can
unify differing computational paradigms such quantum circuits and the
one-way model in a single setting.  Further, since it is based on
graphs, it is amenable to automation, opening the door to mechanised
reasoning about quantum programs \cite{quantomatichome}.

\paragraph{Acknowledgements}
\label{sec:acknowledgements-1}

This chapter is based on work originally carried in collaboration with
Bob Coecke \cite{Coecke:2008nx,Coecke:2009aa} and Simon Perdrix
\cite{Duncan:2009ph,Duncan:2010aa}.  The author is supported
financially by the FRS-FNRS.

\bibliography{flowincats}
\end{document}